\setlist[enumerate]{itemsep=1ex, topsep=-1ex}
\setlist[itemize]{itemsep=0pt, topsep=0pt}
\titlerule\vspace{-2ex}}
\titleformat{\subsection}[runin]
  {\normalfont\normalsize\bfseries}{\thesubsection.}{1ex}{}[.]	
\titlespacing*{\subsection}{0pt}{0.0\baselineskip}{0.5ex}
\titleformat{\subsubsection}[runin]
  {\normalfont\normalsize\itshape}{\thesubsubsection.}{1ex}{}	
\titlespacing*{\subsection}{0pt}{0.0\baselineskip}{0.5ex}
\newtheorem{theorem}{Theorem}[section]
\newtheoremstyle{style}
  {\baselineskip} % Space above
  {0em} % Space below
  {\itshape} % Body font
  {} % Indent amount
  {\bfseries} % Theorem head font
  {.} % Punctuation after theorem head
  {.5em} % Space after theorem head
  {} % Theorem head spec (can be left empty, meaning `normal')
\theoremstyle{style}
\newtheorem*{theorem*}{Theorem}
\newtheorem*{definition*}{Definition}
\newtheorem*{lemma*}{Lemma}
\newtheorem*{prop*}{Proposition}
\algrenewcommand\algorithmicrequire{\textbf{Input:}}
\algrenewcommand\algorithmicensure{\textbf{Output:}}
\DeclareMathOperator{\1}{\mathds{1}}
\DeclareMathOperator*{\argmin}{arg\,min}
\newcommand{\boldj}{\bm{j}}
\newcommand{\ci}{\perp\!\!\!\!\perp}
\newcommand{\tr}{\mathtt{tr}}
\title{{\large L}{\small ocal graph estimation with pathwise false discovery control}}
\author{Omar Melikechi\textsuperscript{1,$\ast$}}
\author{David B. Dunson\textsuperscript{1}}
\author{Noureddine Melikechi\textsuperscript{2}}
\author{Jeffrey W. Miller\textsuperscript{3}}
\address{\textsuperscript{1}Department of Statistical Science, Duke University, Durham, NC}
\address{\textsuperscript{2}Kennedy College of Sciences, University of Massachusetts Lowell, Lowell, MA}
\address{\textsuperscript{3}Department of Biostatistics, Harvard T.H. Chan School of Public Health, Boston, MA}
\address{\textsuperscript{$\ast$}Corresponding author: omar.melikechi@duke.edu}
\begin{document}

\frenchspacing

\ifthenelse{\boolean{linenumbers}}{\linenumbers}{}

\maketitle

\begin{bibunit}

\textbf{Publication notice.} This version of the paper has been peer-reviewed and published in \textit{Nature Communications}. See \url{https://doi.org/10.1038/s41467-026-72796-9}.

\begin{abstract}
Many datasets include a small set of variables, such as biomarkers or clinical outcomes, whose relationships to the broader system are of primary scientific interest. Estimating the full network of inter-variable relationships in such settings often obscures local structures around these targets, limiting interpretability. To address this fundamental problem, we introduce local graph estimation, a statistical framework for inferring substructures around target variables. We show that traditional graph estimation methods often fail to recover local structure, and present pathwise feature selection (PFS) as an effective alternative. PFS estimates local subgraphs by iteratively applying feature selection and propagating uncertainty along network paths, providing rigorous finite-sample false discovery control even in settings with mixed variable types and nonlinear dependencies. In four distinct applications spanning environmental and public health, multiomics, brain connectomics, and single-nucleus RNA sequencing, PFS recovers interpretable networks consistent with domain knowledge, highlighting its ability to uncover established mechanisms and generate novel hypotheses.
\end{abstract}

%%%%%%%%%%%%%%%%%%%%%%%%%%%%%%%%%%%%%%%%%%%%%%%%%%%%%%%%%%%%%%%%

\section{Introduction}\label{sec:intro}

%%%%%%%%%%%%%%%%%%%%%%%%%%%%%%%%%%%%%%%%%%%%%%%%%%%%%%%%%%%%%%%%

Datasets increasingly contain hundreds or thousands of variables, often of different types or \textit{modalities}. In many applications, however, the objective is not to study the full system of inter-variable relationships but to understand a small number of \textit{target variables}, such as biomarkers or clinical outcomes. In these settings, investigators aim to uncover local substructures within the system, such as paths or clusters of covariates, that are meaningfully related to the targets of interest.

Despite substantial advances in high-dimensional and multimodal data analysis in recent years, many approaches are misaligned with local structure learning. Popular dimension reduction algorithms such as iCluster and Multi-Omics Factor Analysis (MOFA) focus on low-dimensional representations that explain sources of variation across the full dataset, but are likely to miss local structures if they account for only a small fraction of overall variation in the data~\cite{icluster,mofa+,multiomics_cantini}. Predictive modeling is another important component of data analysis, but predictive accuracy alone does not provide a principled basis for structure learning~\cite{shmueli,lipton}.

A popular framework that focuses directly on structure learning is \textit{graph estimation}, which represents variables as nodes in a graph and inter-variable relationships as edges between them~\cite{drton,koller}. In this work, we focus on \textit{conditional independence graphs} (CIGs), where edges encode conditional dependencies between variables. Graphical representations are particularly appealing for interpretation, as they formalize direct associations and organize indirect relationships in a way that can be visualized and used to reason about both individual variables and the system as a whole.

Many CIG estimation methods fall into two broad classes. The first comprises regularized estimation approaches, most notably the graphical and nodewise lasso and their variants~\cite{glasso,mb_graph}. These methods scale to thousands of variables, but are prone to estimating dense graphs with many false positives (\cref{fig:limitations}), often assume Gaussian data, and typically perform poorly when inter-variable relationships are nonlinear. The second class consists of methods based on explicit conditional independence testing, such as constraint-based approaches~\cite{bnlearn}. Previous work and results in this manuscript show that popular implementations of many methods in this class scale poorly with dimension~\cite{expensive_wang,expensive_yu,ling2020}. As we will see, these approaches also tend to have relatively low statistical power, especially when the underlying graph contains nonlinear relationships or is moderately dense.

Crucially, while many methods are designed to estimate either the entire graph of inter-variable relationships (global estimation) or immediate neighborhoods of variables (Markov blanket estimation), far less work focuses on recovering extended local structures, such as paths or clusters, around target variables. Markov blanket estimates can in principle be combined to estimate such structures, and prior work has explored extending local procedures to produce global estimates~\cite{mb_graph,gao2017}. However, theoretical results for these approaches often rely on idealized assumptions such as infinite samples~\cite{gao2017} or the ability to perfectly identify conditional independence relations from data~\cite{wang2014}. In particular, they do not provide finite-sample error control at the level of extended local structures, which can lead to unreliable estimates in practice. More broadly, as we will show, global false discovery guarantees do not yield valid control of local errors.

To address these gaps, we propose \textit{local graph estimation}, a statistical framework that formalizes the problem of recovering extended local structures around target variables, and introduce a method, \textit{pathwise feature selection} (PFS), to solve it. Unlike global estimation methods, PFS avoids estimating irrelevant parts of the graph, enabling more accurate local inference around target variables while reducing computational burden in high dimensions. We also prove that PFS provides finite-sample false discovery control at the path level, establishing a highly interpretable framework for pathway discovery. In \cref{fig:limitations}b, for example, our theory states that there is at most a $26\%$ chance that the estimated path $(X_1, X_2, X_{78}, X_{90})$ is not in the true graph, which is the sum of the $q$-values---shown as edge weights---along that path.

We demonstrate local graph estimation and PFS on simulated data and in four diverse applications: an environmental and public health study of cancer; a multiomic breast cancer study; an analysis of brain networks and cognition; and a single-nucleus RNA sequencing study of Alzheimer's disease. In simulations, PFS achieves a favorable balance between true positive and false discovery rates relative to existing approaches across a range of linear, nonlinear, sparse, and dense regimes. In all four applications, PFS yields interpretable local structure that is consistent with established domain knowledge. Together, these studies illustrate that local graph estimation and PFS are promising tools for uncovering meaningful, target-specific structure in complex high-dimensional data.

\ifthenelse{\boolean{showfigures}}{
\begin{figure}[ht!]
\centering
\captionsetup{width=0.95\textwidth}

\begin{subfigure}[t]{\textwidth}
    \centering
    \includegraphics[height=0.4\textheight, width=0.9\textwidth]{./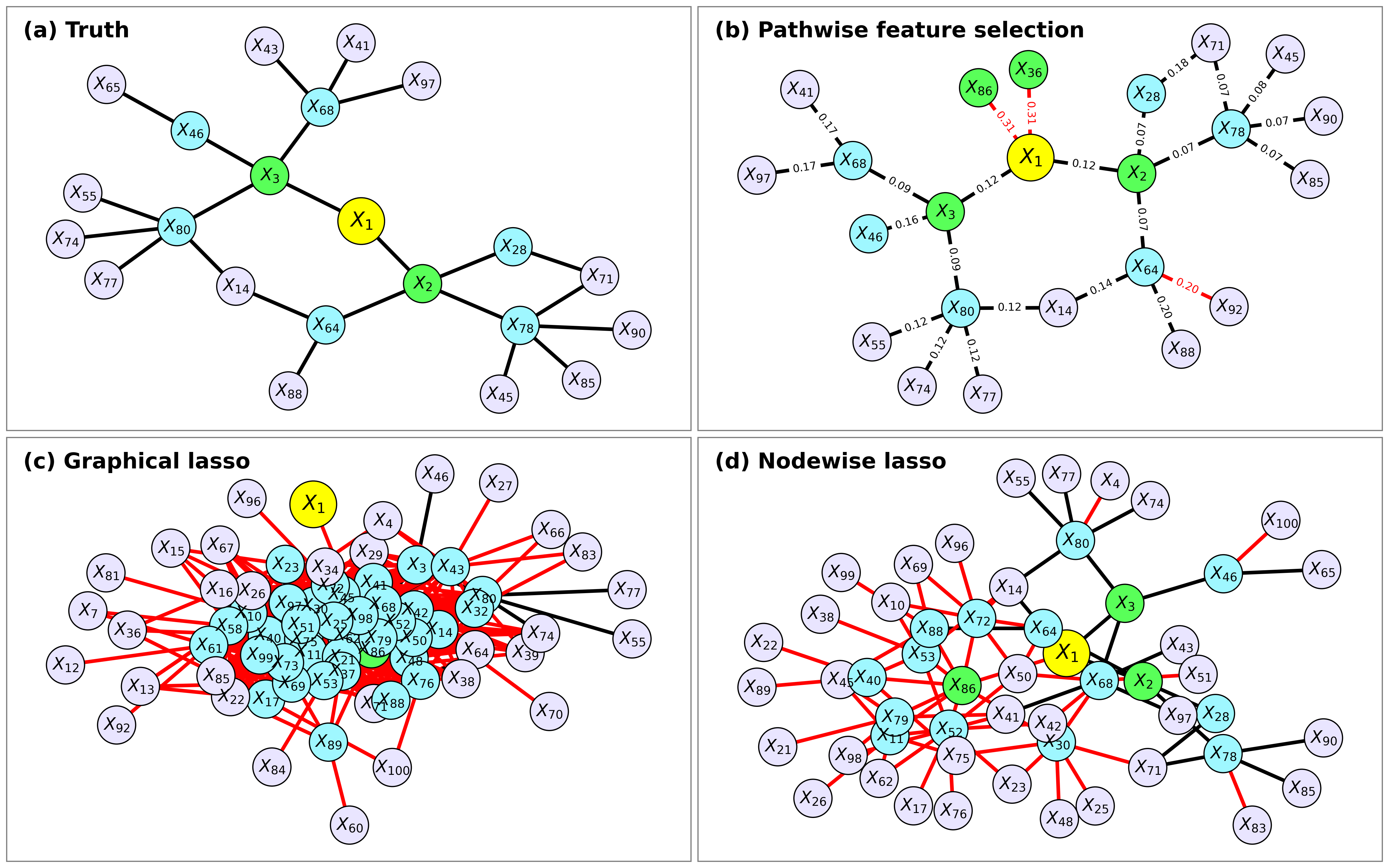}
\end{subfigure}

\begin{subfigure}[t]{\textwidth}
	\centering
    \includegraphics[
    		height=0.31\textheight, 
    		width=0.80\textwidth,
    		trim={0pt 0pt 0pt 10pt}, clip
    	]{./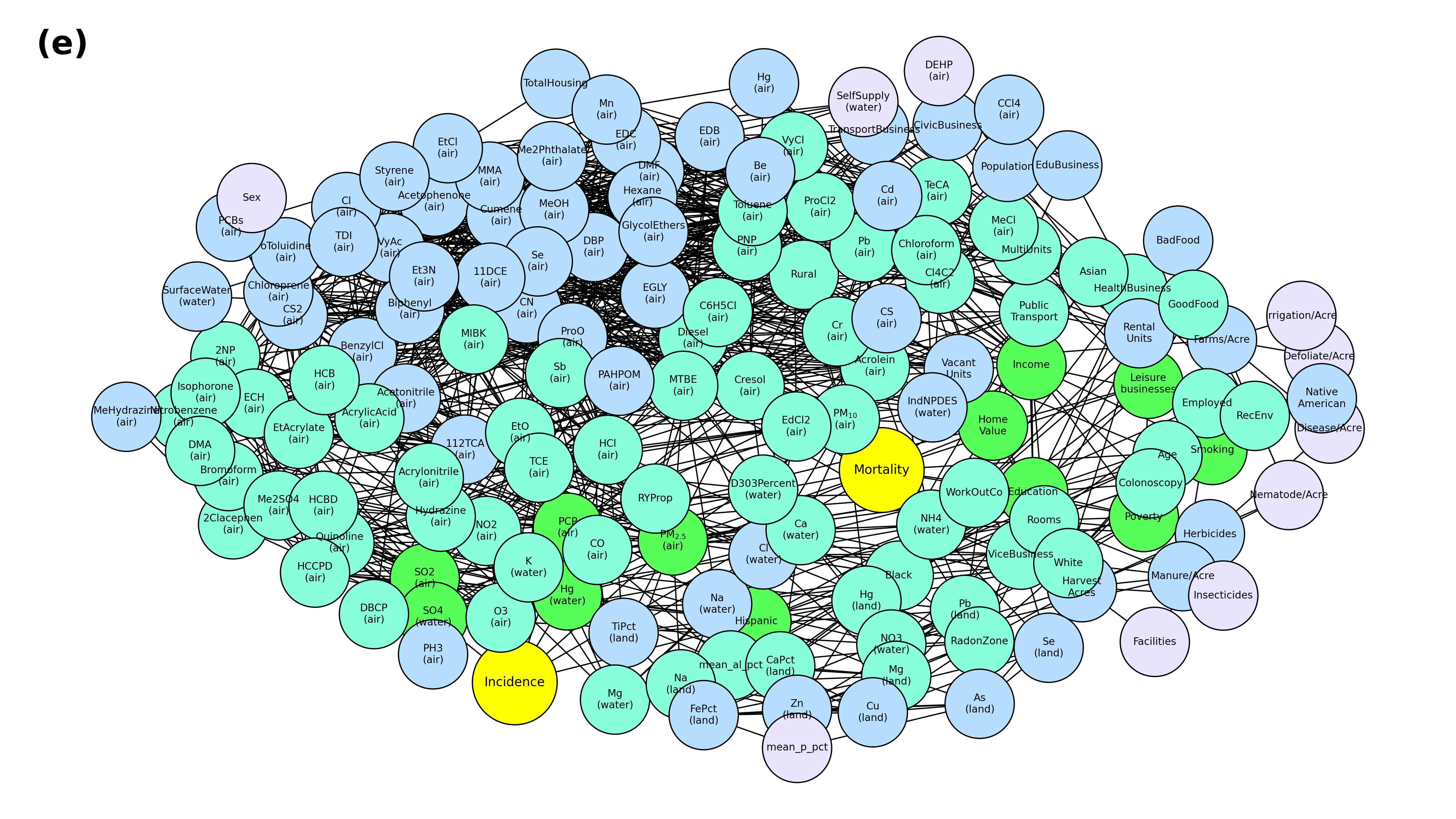}
\end{subfigure}

\vspace*{-1em}

\caption{\textit{Benefits of pathwise false discovery control and limitations of existing methods on simulated and real data.} 
(\textbf{a}) The true local graph of radius $3$ around the target variable $X_1$, shown in yellow. 
(\textbf{b}) Pathwise feature selection identifies $21$ true edges and $3$ false edges. It also provides edge-specific uncertainty quantification in the form of $q$-values---shown as edge weights---with lower $q$-values indicating greater confidence. 
(\textbf{c}) The graphical lasso \cite{glasso} identifies $12$ true edges and $699$ false edges despite the fact that the regularization parameter was tuned to yield the sparsest graph in which $X_1$ has at least one neighbor(\cref{sup_sec:limits_global}). (\textbf{d}) The nodewise lasso \cite{mb_graph} identifies 22 true edges and 57 false edges. All estimates are based on the same $200$ samples drawn from a $p=100$-dimensional Gaussian graphical model. True and false edges are black and red, respectively; green, blue, and purple nodes are distances 1, 2, and 3 from $X_1$ in graphs (a)--(d). (\textbf{e}) Applying the graphical lasso to the environmental health dataset described in \cref{sec:eqi} yields a dense graph that obscures meaningful interactions.}
\label{fig:limitations}
\end{figure}
}{}

%%%%%%%%%%%%%%%%%%%%%%%%%%%%%%%%%%%%%%%%%%%%%%%%%%%%%%%%%%%%%%%%

\section{Results}\label{sec:results}

%%%%%%%%%%%%%%%%%%%%%%%%%%%%%%%%%%%%%%%%%%%%%%%%%%%%%%%%%%%%%%%%

%%%%%%%%%%%%%%%%%%%%%%%%%%%%%%%%%%%%%%%%%%%%%%%%%%%%%%%%%%%%%%%%

\subsection{Local graph estimation}\label{sec:lge}

%%%%%%%%%%%%%%%%%%%%%%%%%%%%%%%%%%%%%%%%%%%%%%%%%%%%%%%%%%%%%%%%

A graph is a pair $G = (V,E)$, where the set of nodes $V = \{1,\dots,p\}$ indexes variables $X_1,\dots,X_p$, and $E\subseteq V\times V$ denotes the set of edges between pairs of nodes. The terms \textit{variable} and \textit{node} will be used interchangeably, since each variable maps to a unique node. In this work, we focus on \textit{undirected graphs}, where edges $(j,k)$ and $(k,j)$ are identical, but local graph estimation can be considered for directed graphs as well. We assume $G$ is a \textit{conditional independence graph} for $X$, meaning that $(j,k)$ is in $E$ if and only if $j\neq k$ and $X_j$ and $X_k$ are conditionally dependent given all other variables; that is, there is an edge between $j$ and $k$ if and only if $X_j$ and $X_k$ contain information about each other that is not captured by the rest of the system. A unique such graph exists whenever the joint distribution of the variables is strictly positive \cite{koller}. Nodes $j$ and $k$ are called \textit{neighbors} in $G$ if $(j,k)$ is in $E$; the \textit{neighborhood} of a node is the set of all its neighbors. For example, the neighbors of $X_3$ in \cref{fig:limitations}a are $X_1$, $X_{46}$, $X_{68}$, and $X_{80}$. 

Traditionally, graph estimation aims to infer all of $G$ from $n$ independent samples of a random vector $X = (X_1,\dots, X_p)$, but this is often unnecessary when only a subset of \textit{target variables} $V_0\subseteq V$ is of particular interest. In this case, let $B_r(V_0)$ be the set of nodes that are reachable from any node in $V_0$ by a path of length $r$ or less (that is, for every node $j$ in $B_r(V_0)$, there is a sequence of at most $r$ adjacent edges in $G$ connecting $j$ to a node in $V_0$), and let $E_r(V_0)$ be the set of edges $(j,k)$ such that at least one of $j$ or $k$ is in $B_{r-1}(V_0)$. The \textit{local graph} of radius $r$ around $V_0$ is the subgraph $G_r(V_0)=(B_r(V_0),E_r(V_0))$ of $G$ (see \cref{fig:limitations}a for an illustration). Given a set of target variables $V_0$ and a radius $r$, the goal of \textit{local graph estimation} is to estimate $G_r(V_0)$ from samples of $X$. Furthermore, we aim to do so in a way that controls false discoveries, applies to mixed variable types such as discrete and continuous data, and handles complex relationships such as nonlinear interactions. Additional details about the local graph estimation problem are in \cref{sup_sec:lge}.

A natural approach to local graph estimation is to obtain an estimate $\widehat{G}$ of $G$ using one of many existing full graph estimation methods, then estimate the local graph $G_r(V_0)$ by the local graph of radius $r$ around $V_0$ in $\widehat{G}$. This is often implicitly done in practice; for example, whenever one draws conclusions about the neighborhood of a variable based on the full estimated graph $\widehat{G}$. However---in addition to being computationally expensive when $p$ is large (\cref{sup_sec:computation})---results reported in \cref{sup_sec:simulations,sup_sec:other_methods} show that this approach exhibits poor local graph estimation performance across a wide range of methods, including those with global false discovery rate control. In \cref{sup_sec:limitations}, we describe fundamental mathematical limitations of this approach that help explain the poor performance of full graph estimation methods in the context of local graph estimation.

%%%%%%%%%%%%%%%%%%%%%%%%%%%%%%%%%%%%%%%%%%%%%%%%%%%%%%%%%%%%%%%%

\subsection{Pathwise feature selection}\label{sec:pfs}

%%%%%%%%%%%%%%%%%%%%%%%%%%%%%%%%%%%%%%%%%%%%%%%%%%%%%%%%%%%%%%%%

To address these limitations, we introduce \textit{pathwise feature selection} (PFS), a method designed for local graph estimation. PFS constructs local graphs around target variables by iteratively applying feature selection and tracking uncertainty along paths in the estimated graph using $q$-values, which are formally defined in \cref{sup_sec:qvalues}. Informally, each $q$-value $q_j(k)$ for $k \neq j$ is a number between $0$ and $1$ that quantifies uncertainty about whether the edge $(j,k)$ is in $G$, with larger values indicating greater uncertainty. In the first iterative step of PFS, the neighborhood of each node $j$ in $V_0$ is estimated by including all nodes $k$ such that $q_j(k)$ falls below a user-specified threshold. Let $S_1(V_0)$ denote the set of newly selected nodes not already in $V_0$. In the next step, each $j$ in $S_1(V_0)$ is treated as a response, and the same procedure is applied to estimate its neighbors, yielding a new layer $S_2(V_0)$ of previously unidentified nodes, and so on. A full description of PFS is given in \cref{alg:pfs} and discussed in the Methods section, with implementation details provided in \cref{sup_sec:implementation}.

In \cref{sup_sec:paths} we state and prove our main theoretical result, \cref{thrm:main}, which says that the sum of $q$-values along an estimated path upper bounds the probability that the path is not in the true graph $G$. This gives a principled stopping rule for PFS: the iterative procedure halts either when the maximum radius $r_{\text{max}}$ is reached or when the cumulative uncertainty along a path---quantified by the sum of its $q$-values---exceeds a threshold $q_{\text{path}}^*$. For example, in \cref{fig:limitations}b, every node in the estimated local graph lies on a path starting at $X_1$ whose $q$-values sum to at most $q_{\text{path}}^*=0.4$. In particular, there is greater uncertainty in the false edges $(1,36)$ and $(1,86)$ (both with $q$-values of $0.31$) than in the true edges $(1,2)$ and $(1,3)$. PFS therefore extends paths beyond $X_1$ and $X_2$, but not beyond $X_{36}$ and $X_{86}$ under this pathwise threshold. Further discussion of \cref{thrm:main} is provided in the Methods section.

PFS offers several practical advantages beyond principled pathwise error control. Users can specify custom $q$-value thresholds for different nodes or node types in order to prioritize features of known importance or emphasize cross-modal connections. Adding to this flexibility, our implementation of PFS uses integrated path stability selection~\cite{ipss,ipss_nonparametric} (IPSS) to estimate $q$-values nonparametrically, enabling detection of both linear and nonlinear relationships without requiring distributional assumptions such as Gaussianity (Methods). IPSS also employs a repeated subsampling procedure that has been shown to improve the robustness and reproducibility of feature selection~\cite{nogueira}. Finally, in \cref{tab:runtimes} we find that PFS with IPSS is significantly faster than other graph estimation methods in high dimensions, including local methods such as those based on Markov blanket estimation. A comparative study of PFS with IPSS and alternative $q$-value constructions is provided in \cref{sup_sec:qvalue_methods}.

\begin{algorithm}
\caption{(Pathwise feature selection)}\label{alg:pfs}
\begin{algorithmic}[1]
\Require Data matrix $\bm{X}\in\mathbb{R}^{n\times p}$, target features $V_0\subseteq V$, maximum radius $r_{\max}$, neighborhood FDR thresholds $\{q_r^* : 1 \leq r \leq r_{\max}\}$, path threshold $q_{\mathrm{path}}^*$, an algorithm for computing $q$-values
\State Initialize current features $S \gets V_0$, visited features $B \gets V_0$, and $q$-value matrix $Q \gets \mathbf{1}\in\mathbb{R}^{p\times p}$
\For{$r = 1$ to $r_{\max}$}
	\For{$j \in S$}
    		\State Compute $q$-values $q_j(k)$ for all $k \in V\setminus\{j\}$
    		\If{$q_j(k) \leq q_r^*$}
    			\State $Q_{jk} \gets Q_{kj} \gets \min\{q_j(k), Q_{kj}\}$
    		\EndIf
    	\EndFor
    	\State Compute $d_Q(V_0,j,r)$ for all $j \in V\setminus B$ (see \cref{eq:lightest_path})
   	\State $S \gets$ $\{j \in V\setminus B : d_Q(V_0,j,r)\leq q_{\mathrm{path}}^*\}$
    \State $B \gets B \cup S$
\EndFor
\Ensure Weighted adjacency matrix $Q$ (see \cref{sup_sec:implementation} for details)
\end{algorithmic}
\end{algorithm}

%%%%%%%%%%%%%%%%%%%%%%%%%%%%%%%%%%%%%%%%%%%%%%%%%%%%%%%%%%%%%%%%

\subsection{Simulation studies}\label{sec:simulations}

%%%%%%%%%%%%%%%%%%%%%%%%%%%%%%%%%%%%%%%%%%%%%%%%%%%%%%%%%%%%%%%%

We conducted simulation studies to evaluate local graph estimation performance when the true graph is known, varying sparsity, sample size, and whether relationships between the target variable and its neighbors were linear or nonlinear. We compared PFS to the graphical and nodewise lasso, five regularization-based methods with global false discovery rate (FDR) control, and seven constraint- or mutual information-based methods, six of which have both local and global variants. Descriptions and implementation details for each of these methods are in \cref{sup_sec:other_method_details}. Local graph recovery at varying radii around the target variable was assessed using true positive rate (TPR) and FDR (\cref{eq:tpr,eq:fdp}). Full simulation settings and results are reported in \cref{sup_sec:simulations}.

Simulation results in Tables \ref{tab:linear_sparse_n100}--\ref{tab:nonlinear_dense_n500} show that PFS achieves a favorable balance between TPR and FDR across all experiments. One reason for this is that PFS is the only method with rigorous pathwise false discovery control, which enables precise inference at target variables while preventing errors from propagating away from them. In linear settings, graphical and nodewise lasso recover many true edges but suffer from high FDRs, while several global FDR and constraint-based methods are overly conservative, yielding low FDRs at the cost of substantially reduced TPRs. PFS performs especially well in nonlinear settings, achieving the highest TPR among all methods at radii 1, 2, and 3 for both sparse and dense graphs. Furthermore, its consistent balance between discovery and error control across the dense settings indicates that PFS neither incurs excessive false negatives nor depends on sparsity for good performance. Finally, \cref{fig:varying_n} shows the performance of PFS at different radii as $n$ increases in the sparse, linear setting. With $p$ fixed at $200$, PFS maintains FDR control while TPR steadily increases, approaching one once $n>p$.

%%%%%%%%%%%%%%%%%%%%%%%%%%%%%%%%%%%%%%%%%%%%%%%%%%%%%%%%%%%%%%%%

\subsection{Environmental and social drivers of cancer}\label{sec:eqi}

%%%%%%%%%%%%%%%%%%%%%%%%%%%%%%%%%%%%%%%%%%%%%%%%%%%%%%%%%%%%%%%%

Mounting evidence suggests that human health is shaped by interacting variables across multiple environmental, socioeconomic, and demographic domains \cite{exposome_maitre_early_life,exposome_wishart,exposome_young}. With this in mind, we compiled county-level data from the Environmental Protection Agency (EPA), U.S. Census Bureau, National Cancer Institute (NCI), and the Centers for Disease Control and Prevention (CDC) to investigate cancer outcomes across the contiguous United States. After data cleaning (\cref{sup_sec:eqi_cleaning}), the final dataset includes $p = 165$ variables measured across $n = 2857$ counties. Our two target variables are age-adjusted incidence and mortality rates for all cancers, obtained from the NCI and CDC as aggregate values spanning 2017-2022.

\ifthenelse{\boolean{showfigures}}{
\begin{figure*}[ht!]
\begin{adjustbox}{center}
\includegraphics[
	height=0.4\textheight, 
	width=\textwidth,
	trim={60pt 27.5pt 60pt 27.5pt}, clip  % left, bottom, right, top
	]
	{./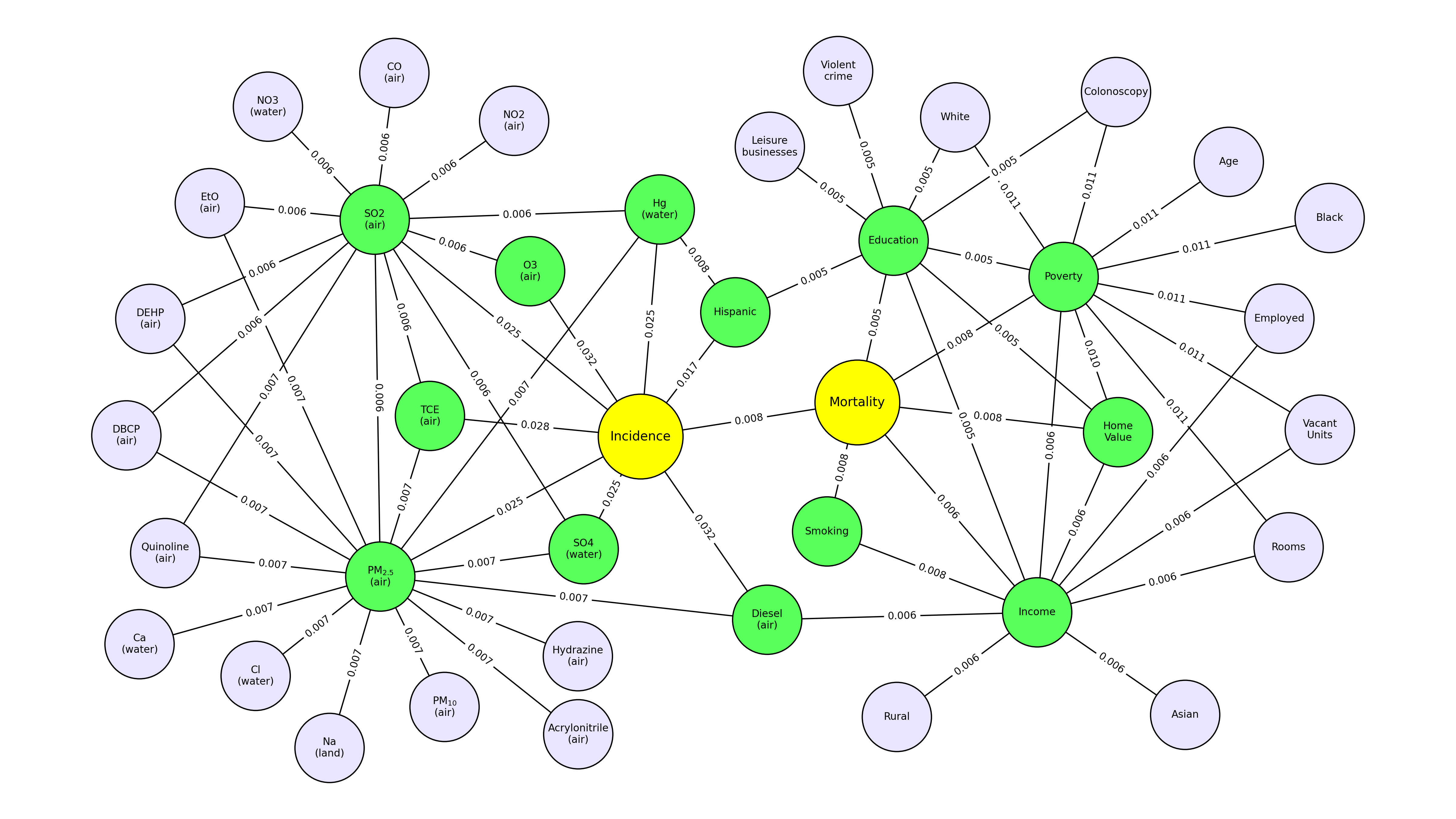}
\end{adjustbox}
\caption{\textit{Radius $2$ graph around county-level cancer incidence and mortality, estimated by PFS}. The target variables, cancer incidence and mortality, are shown in yellow. Nodes directly connected to a target are shown in green, and nodes connected to a green node but not to either target are shown in purple. Edges are annotated with their $q$-values, with smaller values indicating stronger evidence of conditional dependence. Exposures are labeled by environmental medium (air, land, or water).}
\label{fig:exposome_pfs}
\end{figure*}
}{}

%%%%%%%%%%%%%%%%%%%%%%%%%%%%%%%%%%%%%%%%%%%%%%%%%%%%%%%%%%%%%%%%%
%
%\subsection*{Results}\label{sec:eqi_results}
%
%%%%%%%%%%%%%%%%%%%%%%%%%%%%%%%%%%%%%%%%%%%%%%%%%%%%%%%%%%%%%%%%%

\cref{fig:exposome_pfs} shows the local graph of radius 2 around cancer incidence and mortality estimated by PFS. Unlike other methods (\cref{sup_sec:other_methods_eqi}), PFS reveals two distinct and largely non-overlapping sets of interacting variables. The first set, located in the left half of the graph, links incidence to environmental exposures. Several of these---notably fine particulate matter (PM$_{2.5}$) and sulfur dioxide (SO$_2$)---directly connect to compounds classified by the International Agency for Research on Cancer (IARC) as known or probable human carcinogens (\cref{tab:exposures}). For example, PM$_{2.5}$ links to SO$_2$, which connects to other exposures such as ozone (O$_3$), carbon monoxide (CO), and nitrogen dioxide (NO$_2$), reflecting known atmospheric processes by which secondary pollutants are formed~\cite{exposome_who}.

The right half of the local graph, on the other hand, shows that cancer mortality connects most strongly to socioeconomic variables such as income and education. These conditional dependencies suggest a division of influence: even after adjusting for all other variables, environmental exposures remain strongly associated with cancer incidence, while survivorship is associated with social factors. This division is reflected in the heatmaps in \cref{fig:exposome_results}. High-incidence counties cluster in regions with elevated environmental exposures, while high-mortality counties overlap with high poverty and low education areas, most notably in the Mississippi Delta and Appalachian regions of Kentucky and West Virginia. Despite stark differences in racial composition---the Mississippi Delta has among the highest proportions of black residents in the country, while the Appalachian counties are almost entirely white---both areas experience similarly high cancer mortality. This pattern aligns with the estimated local graph, where black and white are conditionally independent of mortality given poverty. \cref{tab:correlations} further supports this hypothesis, showing that the correlation between black and mortality drops from 0.228 ($p$-value $< 10^{-10}$) to 0.053 ($p$-value $= 0.005$) after adjusting for poverty, while the correlation between poverty and mortality remains strong after adjusting for black (0.38, $p$-value $< 10^{-10}$). A similar asymmetry holds for white, reinforcing the role of poverty in shaping observed racial associations with mortality.

Counties in the Southwest with large Hispanic populations present a striking exception. These areas---spanning parts of Texas, New Mexico, Arizona, and Colorado---have poverty and education levels similar to those in the Mississippi Delta and Appalachia, yet exhibit significantly lower cancer incidence and mortality rates. One possible explanation supported by the local graph is that the Hispanic variable is associated with some of the lowest mercury levels in the country, which in turn links to low levels of other exposures. This raises the possibility that reduced exposure may lower cancer risk and contribute to lower mortality in these counties despite their socioeconomic disadvantage. Consistent with this interpretation, \cref{tab:correlations} shows that the correlation between Hispanic and incidence changes from $-0.34$ to $-0.174$ after adjusting for county-specific mercury levels. In contrast, adjusting for the four cancer screening variables only slightly reduces correlation from $-0.34$ to $-0.324$, indicating that low cancer incidence in counties with large Hispanic populations is unlikely to be explained by differences in screening rates.

\ifthenelse{\boolean{showfigures}}{
\begin{figure*}[ht!]
\centering
\captionsetup{width=0.95\textwidth}

% Top row: full-width map panel
\begin{subfigure}[t]{\textwidth}
\centering
\includegraphics[
    height=0.225\textheight,
    width=\textwidth,
    trim={0pt 0pt 0pt 0pt}, clip % left, bottom, right, top
    ]{./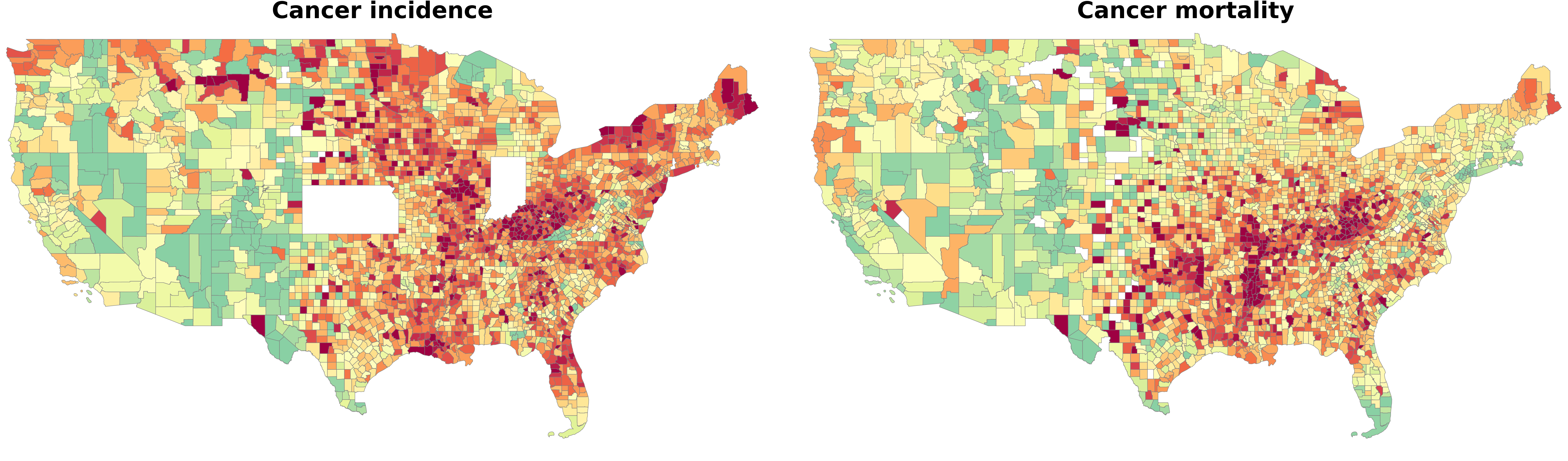}
\end{subfigure}

\vspace*{0em}

% Bottom row: left subfigure (e.g., stratified histograms)
\begin{subfigure}[t]{0.495\textwidth}
\centering
\includegraphics[
    width=\textwidth,
    trim={0pt 0pt 0pt 0pt}, clip
    ]{./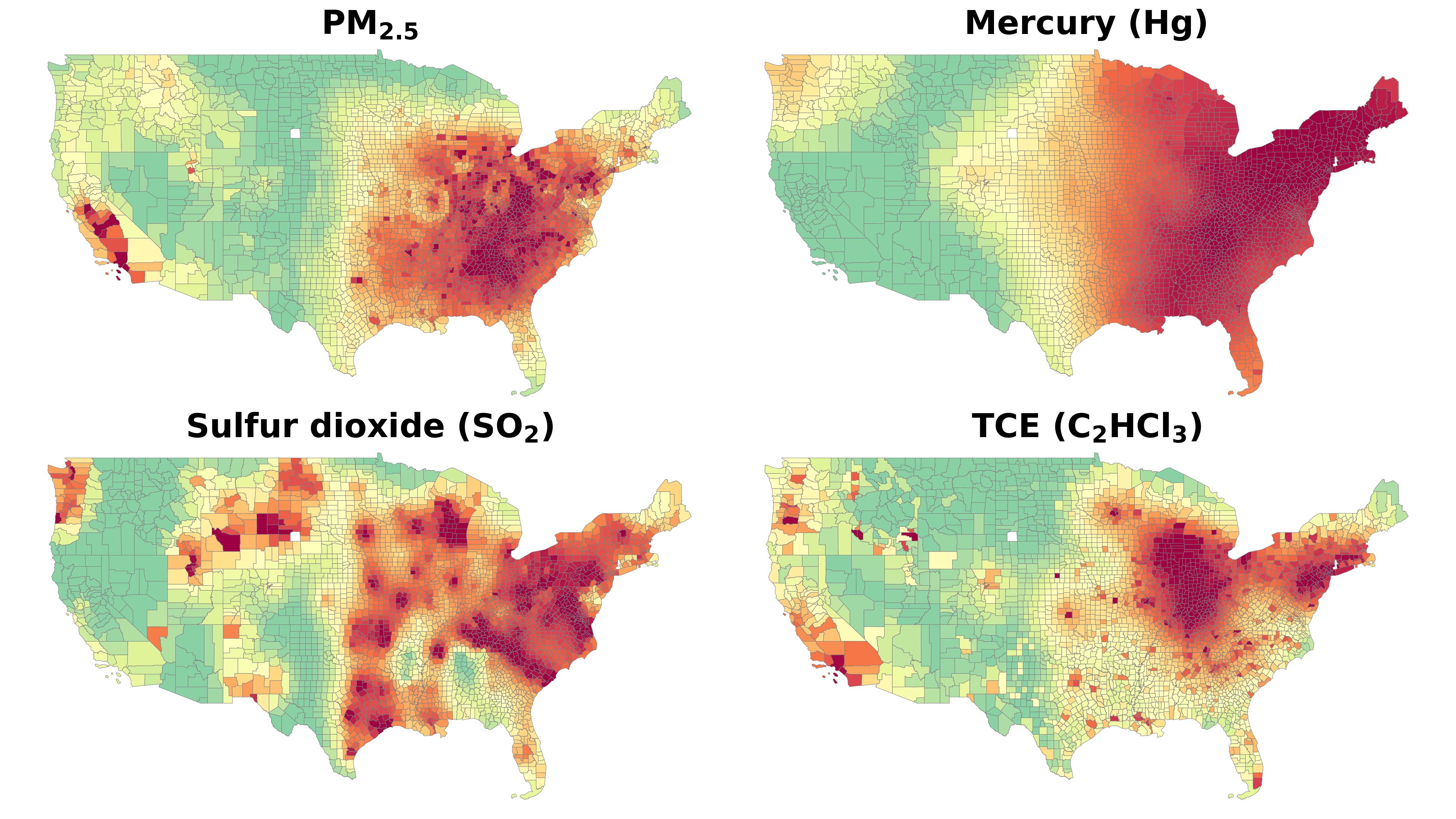}
\end{subfigure}
\hfill
\begin{subfigure}[t]{0.495\textwidth}
\centering
\includegraphics[
    width=\textwidth,
    trim={0pt 0pt 0pt 0pt}, clip
    ]{./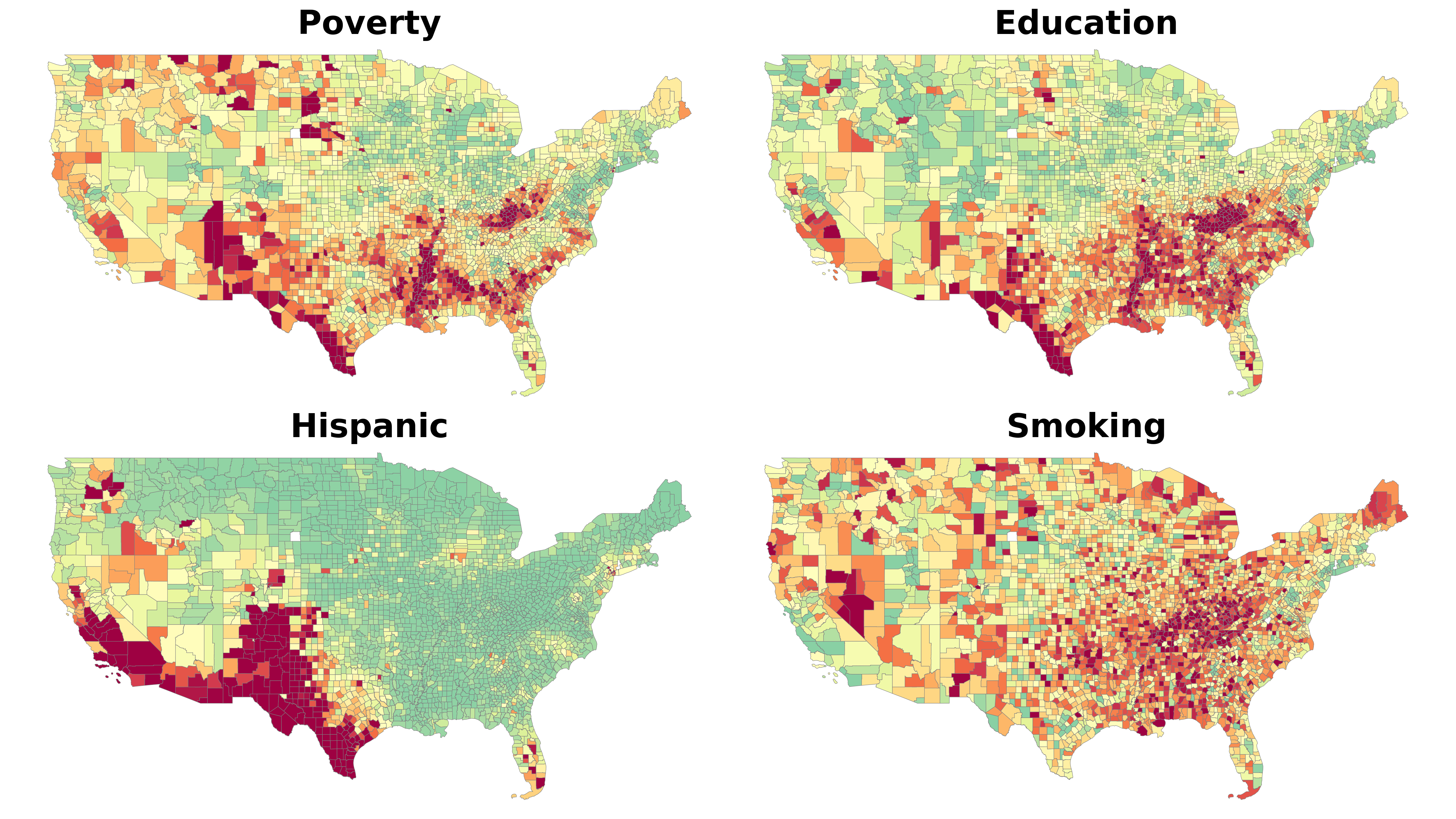}
\end{subfigure}

\caption{\textit{Geographic distributions of cancer burden and selected environmental and socioeconomic variables across the contiguous United States}.
Heatmaps show county-level values for cancer incidence and mortality (top row) and selected environmental exposures (bottom left) and social and demographic factors (bottom right). Cancer incidence tends to align with elevated levels of exposures---including particulate matter PM$_{2.5}$, mercury, sulfur dioxide, and trichloroethylene (TCE)---while cancer mortality is more closely linked to socioeconomic conditions. Counties with large Hispanic populations, concentrated in the Southwest, show high poverty but low cancer burden, possibly due to reduced environmental exposures. In the Education map, red corresponds to lower education levels and green to higher levels; in all other maps, red and green indicate higher and lower levels of the variable, respectively. Indiana and Kansas did not report cancer incidence and are therefore omitted from the study.}
\label{fig:exposome_results}
\end{figure*}
}{}

%%%%%%%%%%%%%%%%%%%%%%%%%%%%%%%%%%%%%%%%%%%%%%%%%%%%%%%%%%%%%%%%

\subsection{Cross-modal pathways in breast cancer}\label{sec:breast_cancer}

%%%%%%%%%%%%%%%%%%%%%%%%%%%%%%%%%%%%%%%%%%%%%%%%%%%%%%%%%%%%%%%%

Multiomic studies integrate different types of data to better understand structure within biological systems and identify disease-relevant pathways~\cite{multiomics_luo}. In cancer research, this strategy has enabled the discovery of molecular signatures with diagnostic and prognostic value~\cite{multiomics_wekesa, multiomics_himel2024}. Here, we apply PFS to multiomic breast cancer data from The Cancer Genome Atlas\cite{tcga} (TCGA) to investigate three clinical variables: histological type (invasive ductal carcinoma, IDC, or invasive lobular carcinoma, ILC), pathologic stage (stages I--IV), and survival status at last follow-up. In addition to these three targets, the cleaned dataset (\cref{sup_sec:bc}) includes $p = 10{,}741$ variables measured across $n = 547$ breast cancer patients, namely: $9785$ genes profiled by bulk RNA sequencing (RNA-seq), $819$ microRNAs (miRNAs), and $137$ proteins quantified by reverse phase protein arrays (RPPA). Because no ground-truth graph is available, we validate our findings through comparison with existing literature and additional statistical analyses, including over-representation analysis (ORA) and protein-protein edge validation (\cref{sup_sec:bc_additional}).

%%%%%%%%%%%%%%%%%%%%%%%%%%%%%%%%%%%%%%%%%%%%%%%%%%%%%%%%%%%%%%%%%
%
%\subsection*{Results}\label{sec:bc_results}
%
%%%%%%%%%%%%%%%%%%%%%%%%%%%%%%%%%%%%%%%%%%%%%%%%%%%%%%%%%%%%%%%%%

The local graph estimated by PFS (\cref{fig:breast_cancer}) reveals a biologically coherent network linking molecular and clinical features across modalities. The three clinical variables form a path---histological type (henceforth, \textit{subtype}) connects to pathologic stage, which connects to survival status---aligning with known clinical relationships. For example, ILC is often diagnosed at more advanced stages than IDC due to reduced detectability on mammography~\cite{breast_cancer_christgen}, consistent with the edge between subtype and stage. Stage is a well-established predictor of survival, and the absence of an edge between subtype and survival aligns with clinical uncertainty about their prognostic relationship~\cite{breast_cancer_zhao}.

CDH1 and FN1 have the strongest associations with subtype, forming a multimodal triangle with particularly small $q$-values. CDH1 encodes E-cadherin, whose inactivation is a defining feature of ILC, while FN1 has been linked to immune infiltration and poor prognosis in breast cancer~\cite{breast_cancer_barroso,breast_cancer_christgen,breast_cancer_zhang}. ORA results reported in \cref{tab:bc_ora} indicate that the CDH1-associated gene cluster is significantly enriched for metastasis-related signatures. miR-210, which plays an established role in hypoxic response and metabolic adaptation, forms a compact module with the protein PTEN and known gene targets NDRG1 and ISCU~\cite{breast_cancer_noman,breast_cancer_mccormick}. ORA of this cluster shows enrichment for hypoxia-related gene sets, while ORA of the miR-133a-1 module implicates genes involved in muscle contraction and cytoskeletal organization, aligning with prior evidence that miRNAs in this group co-regulate and are frequently downregulated in breast cancer~\cite{breast_cancer_kojima,breast_cancer_sawant}.

Many edges emanating from pathologic stage are between proteins. Independent validation against the STRING database~\cite{string} (\cref{sup_sec:bc_additional}) shows that PFS is the only method whose estimated graph exhibits significant enrichment for known protein-protein interactions (\cref{tab:bc_string}), supporting the biological plausibility of the protein subnetwork branching from stage. Stage is also part of a particularly notable pathway, connecting to ABL1, then to G6PD, and finally to the canonical oncogene ERBB2 (HER2). ABL1 is a non-receptor tyrosine kinase involved in oncogenic signaling, while G6PD plays a key role in oxidative stress regulation in HER2-positive breast cancer~\cite{breast_cancer_stern}.

ORAs of two gene clusters linked to survival, namely those emanating from ANKLE2 and HOOK3, show strong enrichment for breast cancer-specific copy number and mutation signatures (\cref{tab:bc_ora}). In both cases, the enriched gene sets are derived from recurrently amplified or mutated genomic regions in breast cancer, providing strong evidence that the survival-associated structure recovered by PFS reflects established disease biology rather than isolated gene-level associations.

Despite the validation results presented above, not all edges or clusters of edges in the estimated local graph admit a straightforward functional interpretation, and we do not attempt to assign biological roles to every substructure within it. Instead, we emphasize that PFS offers a structured and statistically grounded view of potential dependencies---both within and between modalities---that may warrant further investigation. In this way, local graph estimation can serve not only to recover known biology but to generate new hypotheses for future experimental validation.

\ifthenelse{\boolean{showfigures}}{
\begin{figure*}[ht!]
\begin{adjustbox}{center}
\includegraphics[
	height=0.4\textheight, 
	width=\textwidth,
	trim={80pt 40pt 80pt 40pt}, clip]
	{./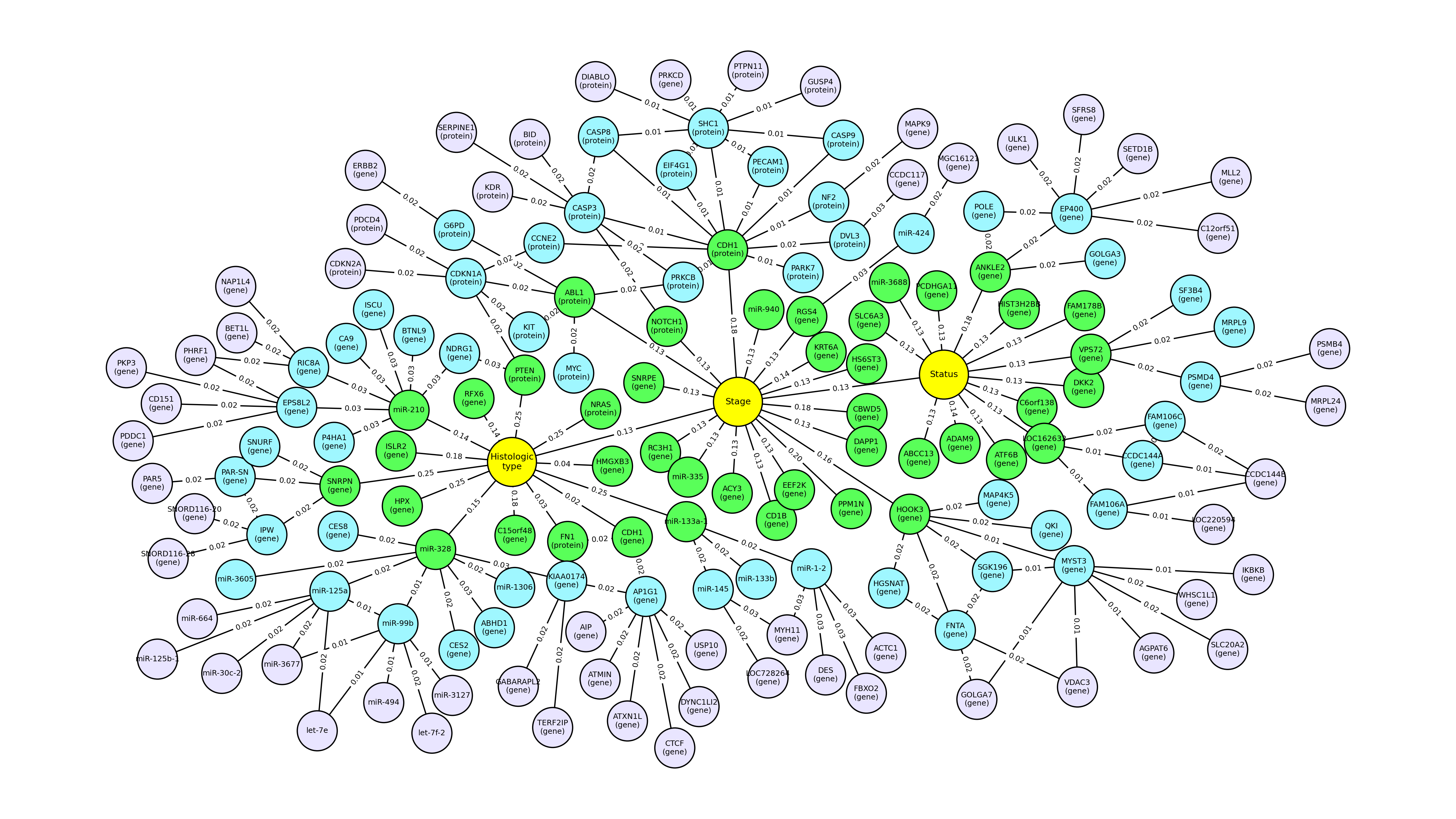}
\end{adjustbox}
\caption{\textit{Radius 3 graph around clinical variables in TCGA breast cancer data, estimated by PFS}. The target variables---histological type, pathologic stage, and survival status---are shown in yellow. Nodes directly connected to a target are shown in green, nodes at distance two are shown in blue, and nodes at distance three are shown in purple. Edges are annotated with $q$-values, with smaller values indicating stronger evidence of conditional dependence. Genes and proteins are labeled with``(gene)" and ``(protein)" respectively, indicating their molecular modality.}
\label{fig:breast_cancer}
\end{figure*}
}{}

%%%%%%%%%%%%%%%%%%%%%%%%%%%%%%%%%%%%%%%%%%%%%%%%%%%%%%%%%%%%%%%%

\subsection{Brain networks and cognition}\label{sec:hcp}

%%%%%%%%%%%%%%%%%%%%%%%%%%%%%%%%%%%%%%%%%%%%%%%%%%%%%%%%%%%%%%%%

The Human Connectome Project (HCP) is a large-scale neuroimaging effort to map macroscopic brain circuits and study their relationship to human behavior~\cite{hcp}. We applied PFS to data from the HCP Young Adult cohort to investigate brain network organization and cognitive ability. After preprocessing (\cref{sup_sec:hcp_cleaning}), the dataset includes $n = 1188$ individuals and $p = 213$ variables comprising various structural brain measurements---including cortical thickness, surface area, and regional volumetric measures across Desikan-Killiany regions~\cite{hcp_dk}---as well as behavioral and personality phenotypes. The target variable is the age-adjusted Fluid Cognition Composite score from the NIH Toolbox (henceforth, \textit{fluid cognition}), a standardized measure summarizing executive function, working memory, and processing speed.

%%%%%%%%%%%%%%%%%%%%%%%%%%%%%%%%%%%%%%%%%%%%%%%%%%%%%%%%%%%%%%%%%
%
%\subsection*{Results}\label{sec:hcp_results}
%
%%%%%%%%%%%%%%%%%%%%%%%%%%%%%%%%%%%%%%%%%%%%%%%%%%%%%%%%%%%%%%%%%

\cref{fig:hcp_pfs} shows the local graph of radius 3 around fluid cognition estimated by PFS. Estimates from other methods are shown in \cref{fig:hcp_other_methods}. ORA results in \cref{tab:hcp} indicate that at radius 2, both PFS and a local version of the hybrid parents and children algorithm~\cite{bnlearn_hpc}, HPC(L), are enriched for regions in the frontoparietal control network (FPCN). At radius 3, both methods remain enriched for FPCN, while only PFS---whose graph contains 23 nodes at this radius---shows additional enrichment for the ventral attention network (VAN), suggesting greater specificity than HPC(L) (36 nodes).

The prominence of FPCN in the local graphs estimated by PFS and HPC(L) is consistent with prior work showing that this network plays a central role in executive control and flexibly couples with other large-scale systems depending on task demands~\cite{hcp_spreng,hcp_dixon}. In contrast to other methods (\cref{tab:hcp}), neither PFS nor HPC(L) exhibits enrichment for visual (VIS) or somatomotor (SM) networks at comparable radii. VIS and SM are \textit{unimodal} regions that primarily support sensory processing and motor function rather than integration across cortical systems~\cite{hcp_smallwood}. The absence of VIS/SM enrichment, together with FPCN (and, for PFS, VAN) involvement, aligns with fluid cognition, which emphasizes working memory, processing speed, and cognitive control.

\begin{table*}[ht!]
\centering
\begin{tabular}{lccccccccc}
\toprule
Method & Radius & Nodes & VIS & SM & DAN & VAN & LIM & FPCN & DMN \\
\bottomrule
\multirow{2}{*}{\textbf{PFS}} & 2 & 7 & -- & -- & -- & -- & -- & 0.0027 & -- \\
 & 3 & 23 & -- & -- & -- & 0.0655 & -- & 0.0109 & -- \\
\hline
\multirow{2}{*}{HPC(L)} & 2 & 8 & -- & -- & -- & -- & -- & 0.0042 & -- \\
 & 3 & 36 & -- & -- & -- & -- & -- & 0.0569 & -- \\
\hline
\multirow{2}{*}{IAMB(L)} & 2 & 14 & $<10^{-4}$ & -- & -- & -- & -- & -- & -- \\
 & 3 & 100 & 0.0218 & -- & -- & -- & -- & -- & -- \\
\hline
\multirow{2}{*}{MMPC(L)} & 2 & 6 & 0.0027 & -- & -- & -- & -- & -- & -- \\
 & 3 & 24 & 0.0009 & -- & -- & -- & -- & -- & -- \\
\hline
\multirow{2}{*}{StablePC} & 2 & 5 & $<10^{-4}$ & -- & -- & -- & -- & -- & -- \\
 & 3 & 12 & $<10^{-4}$ & -- & -- & -- & -- & -- & -- \\
\hline
\multirow{2}{*}{GFCSL} & 2 & 18 & -- & -- & 0.0342 & -- & -- & -- & -- \\
 & 3 & 148 & -- & -- & -- & -- & -- & -- & -- \\
\hline
\multirow{2}{*}{Glasso} & 2 & 10 & $<10^{-4}$ & -- & -- & -- & -- & -- & -- \\
 & 3 & 106 & 0.0971 & -- & -- & -- & -- & -- & -- \\
\bottomrule
\end{tabular}
\caption{\textit{Enrichment of local brain networks associated with fluid cognition}. For each method, we report $p$-values from over-representation analyses for Yeo-7 functional networks~\cite{hcp_yeo} among cortical regions in the estimated local graph. Local clusters are anchored at a single region of interest that connects directly to the target; radii correspond to cumulative neighborhoods containing all nodes within graph distances 2 and 3 of the target that extend beyond this anchor. The ``Nodes" column reports the total number of nodes in each cluster. The Yeo-7 networks are: VIS (visual), SM (somatomotor), DAN (dorsal attention), VAN (ventral attention), LIM (limbic), FPCN (frontoparietal control), and DMN (default mode). Only $p$-values less than $0.1$ are shown; entries marked ``--" indicate no enrichment at this threshold.}
\label{tab:hcp}
\end{table*}

%%%%%%%%%%%%%%%%%%%%%%%%%%%%%%%%%%%%%%%%%%%%%%%%%%%%%%%%%%%%%%%%

\subsection{Cell-type-specific gene networks in Alzheimer's disease}\label{sec:ad}

%%%%%%%%%%%%%%%%%%%%%%%%%%%%%%%%%%%%%%%%%%%%%%%%%%%%%%%%%%%%%%%%

Single-nucleus RNA sequencing (snRNA-seq) enables profiling of gene expression in individual cells. Grubman et al.~\cite{ad_grubman} generated snRNA-seq data from the entorhinal cortex of six individuals with Alzheimer's disease (AD) and six controls to characterize transcriptional changes associated with the disease. We applied PFS to these data to identify AD-associated gene networks in three cell types: astrocytes ($n=2171$ cells, $p=10{,}788$ genes), microglia ($n=449$, $p=10{,}514$), and oligodendrocyte progenitor cells (OPCs; $n=1078$, $p=10{,}773$). The target variable is AD status, defined as whether a cell is from an AD or control individual. Additional dataset details are in \cref{sup_sec:ad}.

%%%%%%%%%%%%%%%%%%%%%%%%%%%%%%%%%%%%%%%%%%%%%%%%%%%%%%%%%%%%%%%%%
%
%\subsection*{Results}
%
%%%%%%%%%%%%%%%%%%%%%%%%%%%%%%%%%%%%%%%%%%%%%%%%%%%%%%%%%%%%%%%%%

\cref{fig:ad_pfs_astro,fig:ad_pfs_mg,fig:ad_pfs_opc} show the local graphs of radius 3 around AD status estimated by PFS. ORAs of these graphs (\cref{tab:ad}) indicate distinct biological mechanisms for each cell type. Specifically, the microglial graph is enriched for immune activation and inflammatory signaling, consistent with the central role of neuroinflammation in AD. The astrocyte graph shows enrichment for mitochondrial respiration and ion transport, reflecting altered energy metabolism and oxidative stress in diseased tissue. The OPC graph highlights central nervous system development, cell adhesion, and synapse-related processes. Several genes with well-established roles in AD, such as APOE and GFAP, appear in the estimated graphs. Importantly, the three analyses yield different local structures despite sharing the same target, suggesting that PFS captures cell-type-specific disease organization rather than a generic AD signature.

\begin{table}[ht!]
\centering
\small
\begin{tabular}{@{}llc@{}}
\toprule
\textbf{Cell Type} & \textbf{Enriched biological process} & \textbf{Adjusted $p$-value} \\
\midrule
\multicolumn{3}{@{}l}{\textbf{Astrocytes}} \\
& GO:BP: Monoatomic ion transport & 0.001 \\
& GO:BP: Oxidative phosphorylation & 0.001 \\
& GO:BP: ATP synthesis coupled electron transport & 0.002 \\
& Reactome: Aerobic respiration and respiratory electron transport & 0.012 \\
& Reactome: Respiratory electron transport & 0.012 \\
& Reactome: Extracellular matrix organization & 0.031 \\
\addlinespace
\multicolumn{3}{@{}l}{\textbf{Microglia}} \\
& GO:BP: Regulation of cell activation & $<10^{-3}$ \\
%& GO:BP: Cell activation & $<10^{-3}$ \\
& GO:BP: Positive regulation of leukocyte proliferation & $<10^{-3}$ \\
& GO:BP: Regulation of lymphocyte activation & $<10^{-3}$ \\
& GO:BP: Adaptive immune response & $<10^{-3}$ \\
& Reactome: Cytokine signaling in immune system & 0.024 \\
& Reactome: Hemostasis & 0.029 \\
\addlinespace
\multicolumn{3}{@{}l}{\textbf{Oligodendrocyte progenitor cells (OPCs)}} \\
%& GO:BP: Head development & $<10^{-3}$ \\
& GO:BP: Central nervous system development & $<10^{-3}$ \\
& GO:BP: Cell-cell adhesion & $<10^{-3}$ \\
& GO:BP: Presynapse assembly & $<10^{-3}$ \\
& GO:BP: Synapse organization & $<10^{-3}$ \\
& Reactome: Protein--protein interactions at synapses & 0.012 \\
& Reactome: Neuronal system & 0.013 \\
\bottomrule
\end{tabular}
\caption{\textit{Enrichment of PFS local graphs around AD status by cell type.} Over-representation analyses were performed using Reactome and Gene Ontology Biological Process (GO:BP) gene sets (\cref{sup_sec:ad_additional}). Background sets consist of all genes analyzed by PFS. All reported terms have an overlap of at least 5 genes.}
\label{tab:ad}
\end{table}

%%%%%%%%%%%%%%%%%%%%%%%%%%%%%%%%%%%%%%%%%%%%%%%%%%%%%%%%%%%%%%%%

\section{Discussion}\label{sec:discussion}

%%%%%%%%%%%%%%%%%%%%%%%%%%%%%%%%%%%%%%%%%%%%%%%%%%%%%%%%%%%%%%%%

We introduced local graph estimation, a statistical objective focused on identifying substructures around target variables in complex data, and proved that our proposed solution, PFS, provides both edgewise and pathwise uncertainty quantification. Together, local graph estimation and PFS enable statistically principled discovery of paths and clusters of variables, opening new directions for inference on localized network structure. Local graph estimation also supports downstream analyses that rely on graph structure, including graph-based clustering approaches~\cite{mappr}.

PFS is broadly applicable. When implemented with nonparametric selection methods such as IPSS, it accommodates mixed data types, avoids strong modeling assumptions, and readily scales to tens of thousands of variables. Across four distinct applications, we showed that PFS recovers established domain structure while revealing potentially novel relationships. More broadly, local graph estimation offers a flexible and interpretable alternative to global graph estimation for identifying candidate mechanistic relationships centered on variables of primary scientific interest.

%%%%%%%%%%%%%%%%%%%%%%%%%%%%%%%%%%%%%%%%%%%%%%%%%%%%%%%%%%%%%%%%

\section*{Methods}\label{sec:methods}

%%%%%%%%%%%%%%%%%%%%%%%%%%%%%%%%%%%%%%%%%%%%%%%%%%%%%%%%%%%%%%%%

%%%%%%%%%%%%%%%%%%%%%%%%%%%%%%%%%%%%%%%%%%%%%%%%%%%%%%%%%%%%%%%%

\subsection*{Description of \cref{alg:pfs}}

%%%%%%%%%%%%%%%%%%%%%%%%%%%%%%%%%%%%%%%%%%%%%%%%%%%%%%%%%%%%%%%%

\Cref{alg:pfs} provides a step-by-step description of PFS. The algorithm takes as input an $n \times p$ data matrix $\bm{X}$ comprising $n$ observations of $p$ variables, a set of target features $V_0$, and user-specified $q$-value thresholds. The output is a weighted $p \times p$ adjacency matrix, $Q$, whose entries are the pairwise $q$-values computed during the iterative selection process. Neighborhood thresholds $q_r^*$ control the false discovery rate (FDR) within each estimated neighborhood at iteration $r$, while the path threshold $q_{\mathrm{path}}^*$ constrains the maximum sum of $q$-values along any path, thereby providing an upper bound on the probability that a recovered path does not belong to the true graph under the assumptions of \cref{thrm:main}.

The quantity $d_Q(V_0, j, r)$ in Line 9 is the minimum sum of $q$-values along any path of length $r$ from $V_0$ to node $j$ in the estimated graph, namely
\begin{align}\label{eq:lightest_path}
d_Q(V_0,j,r) = \min \left\{ \sum_{s=0}^{r-1} Q_{j_s, j_{s+1}} : (j_0,\dots,j_r) \in \widehat{\mathcal{J}}_r(V_0),\, j_r = j \right\},
\end{align}
where $\widehat{\mathcal{J}}_r(V_0)$ is the set of length-$r$ paths in the current estimated local graph that start in $V_0$ (see \cref{sup_sec:lge,sup_sec:pfs} for a complete discussion of definitions and notation). Line 10 updates the new layer $S$ of previously unidentified nodes, and Line 11 updates the set $B$ of nodes whose neighborhoods have already been estimated. Together, these updates guarantee that each neighborhood is estimated at most once during the recursive process.

%%%%%%%%%%%%%%%%%%%%%%%%%%%%%%%%%%%%%%%%%%%%%%%%%%%%%%%%%%%%%%%%

\subsection*{Discussion of \cref{thrm:main}}

%%%%%%%%%%%%%%%%%%%%%%%%%%%%%%%%%%%%%%%%%%%%%%%%%%%%%%%%%%%%%%%%

\cref{thrm:main} upper bounds the probability that a given path does not belong to the true graph by the sum of the edge-level $q$-values along that path. This result applies to individual paths, but does not provide joint or simultaneous control over collections of paths; in particular, paths that share edges are dependent, complicating extensions of \cref{thrm:main} to such settings. The assumptions of \cref{thrm:main} are a direct extension of classical assumptions used to justify $q$-values in multiple testing (\cref{sup_sec:qvalues}). Specifically, \cref{thrm:main} supposes that, along a given path, test statistics used to determine whether each edge is selected are independent. Informally, this means that whether one edge appears in an estimated path depends only on the evidence for that edge, and not on the evidence for other edges along the path. If these assumptions are violated---for example, if strong dependence between edge-level test statistics causes the selection of one edge to influence the apparent evidence for another---then error control may fail. When these assumptions do hold, edge-level $q$-values can be interpreted as posterior probabilities that a selected edge is not in the true graph~\cite{storey}. Extending this to paths gives \cref{thrm:main} a natural Bayesian interpretation: the sum of $q$-values along a path upper bounds the posterior probability that an estimated path is not in the true graph (\cref{sup_sec:paths}), motivating our use of $q$-values to limit uncertainty propagation as paths extend away from targets.

%%%%%%%%%%%%%%%%%%%%%%%%%%%%%%%%%%%%%%%%%%%%%%%%%%%%%%%%%%%%%%%%

\subsection*{Integrated path stability selection}\label{sec:ipss}

%%%%%%%%%%%%%%%%%%%%%%%%%%%%%%%%%%%%%%%%%%%%%%%%%%%%%%%%%%%%%%%%

At each iteration of PFS, we estimate edges using integrated path stability selection (IPSS)~\cite{ipss}, a feature selection method that provides finite-sample false discovery control. IPSS is a refinement of classical stability selection~\cite{mb}, in which an arbitrary base feature selection algorithm is repeatedly applied to random half-samples of the data. The proportion of times each feature is selected across all subsamples is computed and then aggregated across regularization levels (for example, penalty parameters or importance score thresholds) to produce feature-specific \textit{stability paths}. Rather than applying a hard threshold to stability paths as in classical stability selection, IPSS integrates information along stability paths to yield more precise $q$-values and tighter false discovery control than previous approaches. IPSS also retains the favorable robustness properties of stability selection, which arise from the repeated subsampling scheme. In this work, we use the nonparametric variant of IPSS, which is compatible with arbitrary variable importance measures and captures nonlinear associations without requiring distributional or parametric modeling assumptions~\cite{ipss_nonparametric}. Specifically, we implement IPSS using mean decrease impurity from random forests and gradient boosting, which is computationally efficient and the default choice in many popular machine learning packages.

%%%%%%%%%%%%%%%%%%%%%%%%%%%%%%%%%%%%%%%%%%%%%%%%%%%%%%%%%%%%%%%%

\subsection*{PFS parameters}

%%%%%%%%%%%%%%%%%%%%%%%%%%%%%%%%%%%%%%%%%%%%%%%%%%%%%%%%%%%%%%%%

The principal PFS parameters are the maximum radius $r_{\max}$ of the estimated local graph, the pathwise $q$-value threshold, and, optionally, edge-level $q$-value thresholds. There is no single ``correct'' choice of these parameters in general: $q$-value thresholds encode how much uncertainty a user is willing to tolerate in the estimated graph, which is inherently problem-specific. As with significance levels in hypothesis testing more broadly, there is no universally optimal choice of edgewise or pathwise $q$-value thresholds, nor an oracle rule for selecting them. The choice of $r_{\max}$ reflects the scope of the scientific question rather than uncertainty tolerance. In some applications, domain knowledge may suggest that relationships beyond a certain distance from the target variables are not of interest, in which case $r_{\max}$ can be fixed accordingly. Alternatively, $r_{\max}$ can be made arbitrarily large, leaving the breadth of the local graph to be determined entirely by the pathwise $q$-value threshold. Larger radii and more liberal thresholds will generally yield broader, denser local graphs with more discoveries but greater uncertainty, while smaller radii and stricter thresholds will produce sparser, potentially more interpretable, structures with less uncertainty. PFS is designed to make these tradeoffs explicit and user-controlled, rather than implicit or fixed by the method.

%%%%%%%%%%%%%%%%%%%%%%%%%%%%%%%%%%%%%%%%%%%%%%%%%%%%%%%%%%%%%%%%

\subsection*{Simulation design for \cref{fig:limitations}}

%%%%%%%%%%%%%%%%%%%%%%%%%%%%%%%%%%%%%%%%%%%%%%%%%%%%%%%%%%%%%%%%

The data used in \cref{fig:limitations}, panels (a) through (d), are generated from a $p=100$ dimensional multivariate Gaussian distribution with mean zero and precision matrix $\Theta$, denoted $\mathcal{N}(0,\Theta^{-1})$. The precision matrix is composed of three blocks of sizes $1\times 1$, $2\times 2$, and $97\times 97$, corresponding to the target variable, its neighbors, and all remaining variables. Within-block and between-block connections are generated at random such that the average degree of the graph is approximately $3$, and nonzero entries of $\Theta$ are chosen uniformly at random from $\{\pm 1\}$. The resulting matrix is symmetrized, a positive constant is added to the diagonal to ensure positive definiteness, and the matrix is rescaled so that its eigenvalues lie between $0.01$ and $10$, allowing for strong correlations between variables. Given $\Theta$, we draw $n=200$ samples independently from $\mathcal{N}(0,\Theta^{-1})$, and standardize the resulting data to have zero mean and unit variance.

%%%%%%%%%%%%%%%%%%%%%%%%%%%%%%%%%%%%%%%%%%%%%%%%%%%%%%%%%%%%%%%%

\subsection*{Data availability}

%%%%%%%%%%%%%%%%%%%%%%%%%%%%%%%%%%%%%%%%%%%%%%%%%%%%%%%%%%%%%%%%

All data used in this work are publicly available. For the environmental and sociodemographic cancer study, county-level data on cancer incidence, mortality, screening, and smoking prevalence are available from the State Cancer Profiles project at \url{https://statecancerprofiles.cancer.gov/}. Environmental and socioeconomic variables are available from the EPA Environmental Quality Index (EQI) at \url{https://cfpub.epa.gov/ncea/risk/recordisplay.cfm?deid=316550}, and demographic data are available from the U.S. Census Bureau at \url{https://data.census.gov/}. Data from the multiomic breast cancer study can be downloaded from LinkedOmics at \url{https://www.linkedomics.org/data_download/TCGA-BRCA/}. Data from the brain network and cognition analysis were obtained from the Human Connectome Project (HCP Young Adult cohort) via ConnectomeDB (\url{https://www.humanconnectome.org}), accessed through the BALSA data portal at \url{https://balsa.wustl.edu}. Data from the Alzheimer's disease study can be downloaded from \url{https://adsn.ddnetbio.com} and are also available from the Gene Expression Omnibus (GEO) under the accession number GSE138852.

%%%%%%%%%%%%%%%%%%%%%%%%%%%%%%%%%%%%%%%%%%%%%%%%%%%%%%%%%%%%%%%%

\subsection*{Code availability}\label{sec:code}

%%%%%%%%%%%%%%%%%%%%%%%%%%%%%%%%%%%%%%%%%%%%%%%%%%%%%%%%%%%%%%%%

Code and processed data files required to reproduce all results in this paper are available at \url{https://github.com/omelikechi/localgraph-paper}. A Python package implementing local graph estimation and PFS is available at \url{https://github.com/omelikechi/localgraph} and can be installed via PyPI at \url{https://pypi.org/project/localgraph/}.

%%%%%%%%%%%%%%%%%%%%%%%%%%%%%%%%%%%%%%%%%%%%%%%%%%%%%%%%%%%%%%%%

\putbib

\end{bibunit}

%%%%%%%%%%%%%%%%%%%%%%%%%%%%%%%%%%%%%%%%%%%%%%%%%%%%%%%%%%%%%%%%

\subsection*{Acknowledgments}\label{sec:acks}

%%%%%%%%%%%%%%%%%%%%%%%%%%%%%%%%%%%%%%%%%%%%%%%%%%%%%%%%%%%%%%%%

D.B.D. and O.M. were supported in part by funding from Merck \& Co. and the National Institutes of Health (NIH) grant R01ES035625. D.B.D. was supported in part by the Office of Naval Research grant N000142412626. J.W.M. and O.M. were supported in part by the Collaborative Center for X-linked Dystonia Parkinsonism (CCXDP). J.W.M. was supported in part by NIH grant R01CA240299. All authors thank the anonymous reviewers for their feedback.

%%%%%%%%%%%%%%%%%%%%%%%%%%%%%%%%%%%%%%%%%%%%%

\subsection*{Author contributions}\label{sec:contributions}

%%%%%%%%%%%%%%%%%%%%%%%%%%%%%%%%%%%%%%%%%%%%%%%%%%%%%%%%%%%%%%%%

O.M. devised the method and developed the theory, wrote software, conducted simulations and applied analyses, and wrote the manuscript. D.B.D. and J.W.M. provided funding and supervision. D.B.D., J.W.M., and N.M. helped with conceptualization and editing.

%%%%%%%%%%%%%%%%%%%%%%%%%%%%%%%%%%%%%%%%%%%%%%%%%%%%%%%%%%%%%%%%

\subsection*{Competing interests}\label{sec:competing}

%%%%%%%%%%%%%%%%%%%%%%%%%%%%%%%%%%%%%%%%%%%%%%%%%%%%%%%%%%%%%%%%

The authors declare no competing interests.

%%%%%%%%%%%%%%%%%%%%%%%%%%%%%%%%%%%%%%%%%%%%%%%%%%%%%%%%%%%%%%%%

%----------------------------------------------------------------
%%%%%%%%%%%%%%%%%%%%%%%%%%%%%%%%%%%%%%%%%%%%%%%%%%%%%%%%%%%%%%%%
\clearpage
%%%%%%%%%%%%%%%%%%%%%%%%%%%%%%%%%%%%%%%%%%%%%%%%%%%%%%%%%%%%%%%%
%----------------------------------------------------------------

\title{{\LARGE S}{\Large upplementary information}}

%\author{{\large O}mar {\large M}elikechi\textsuperscript{1}}
%\author{{\large D}avid {\large B}. {\large D}unson\textsuperscript{1}}
%\author{{\large N}oureddine {\large M}elikechi\textsuperscript{2}}
%\author{{\large J}effrey {\large W}. {\large M}iller\textsuperscript{3}}
%
%\address{\textsuperscript{1}Department of Statistical Science, Duke University, Durham, NC}
%\address{\textsuperscript{2}Kennedy College of Sciences, University of Massachusetts Lowell, Lowell, MA}
%\address{\textsuperscript{3}Department of Biostatistics, Harvard T.H. Chan School of Public Health, Boston, MA}

\maketitle

\begin{bibunit}

\setcounter{page}{1}
\setcounter{section}{0}
\setcounter{table}{0}
\setcounter{figure}{0}
\renewcommand{\theHsection}{SIsection.\arabic{section}}
\renewcommand{\theHtable}{SItable.\arabic{table}}
\renewcommand{\theHfigure}{SIfigure.\arabic{figure}}
\renewcommand{\thepage}{S\arabic{page}}  
\renewcommand{\thesection}{S\arabic{section}}   
\renewcommand{\thetable}{S\arabic{table}}   
\renewcommand{\thefigure}{S\arabic{figure}}
\renewcommand{\thealgorithm}{S\arabic{algorithm}}

%%%%%%%%%%%%%%%%%%%%%%%%%%%%%%%%%%%%%%%%%%%%%

\section{Local graph estimation}\label{sup_sec:lge}

%%%%%%%%%%%%%%%%%%%%%%%%%%%%%%%%%%%%%%%%%%%%%

In this section, we formally define the local graph estimation problem. Let $X=(X_1,\dots,X_p)$ be a random vector with conditional independence graph $G = (V,E)$; as noted in \cref{sec:lge}, a unique such $G$ exists whenever the distribution of $X$ is positive \cite{koller}. A \textit{path} of length $r$ is a sequence of distinct nodes $\boldj = (j_0,\dots,j_r)\in V^{r+1}$. The path $\boldj$ lies in $G$ if $(j_s,j_{s+1})\in E$ for all $0\leq s < r$, and it starts in $V_0\subseteq V$ if $j_0\in V_0$. The \textit{distance} $d(j,k)$ in $G$ between $j,k\in V$ is the length of the shortest path in $G$ from $j$ to $k$. If no path exists, then $d(j,k)=\infty$. For $V_0\subseteq V$, define $B_0(V_0) = V_0$ and, for $r\in\mathbb{N}$,
\begin{align}\label{eq:ball}
	B_r(V_0) &= \big\{k\in V : \min_{j\in V_0} d(j,k)\leq r\big\}.
\end{align}
$B_r(V_0)$ is the ball of radius $r$ around $V_0$. Its boundary, $S_r(V_0) = B_r(V_0)\setminus B_{r-1}(V_0)$, is the set of nodes that are distance $r$ from $V_0$ (the notation $S$ stands for \textit{sphere}, complementing the ball notation, $B$). For $r\in\mathbb{N}$, define the edge set
\begin{align}\label{eq:edge}
	E_r(V_0) &= \big\{(j,k)\in E : j\in B_{r-1}(V_0)\ \text{or}\ k\in B_{r-1}(V_0)\big\}.
\end{align}
Edges between nodes in the outer layer $S_r(V_0)$ are omitted in our definition of $E_r(V_0)$ to avoid unnecessary estimation when performing pathwise feature selection (PFS). This exclusion does not compromise the conditional independence structure under the assumption that the distribution of $X$ is positive; in particular, if $X$ has a positive distribution, then $B_{r-1}(V_0)$ is conditionally independent of $V\setminus B_r(V_0)$ given $S_r(V_0)$; see, for example, Corollary 4.1 in Koller and Friedman~\cite{koller}.

The \textit{local graph} of radius $r$ around $V_0$ is the subgraph $G_r(V_0) = (B_r(V_0), E_r(V_0))$ of $G$. When $V_0=\{j\}$ consists of a single node $j$, we write $G_r(j)$ as a shorthand for $G_r(\{j\})$, and $S_1(j)$ for the neighborhood of $j$. In the above notation, the local graph estimation problem is: \textit{Given $V_0\subseteq V$ and a radius $r\in\mathbb{N}$, estimate $G_r(V_0)$ from samples of $X$.} The two extremes, namely estimating $G_1(V_0)$ when $V_0$ contains a single variable (often referred to as supervised feature selection or Markov blanket estimation), and estimating the full graph, $G_\infty(V_0)=G$ (traditional graph estimation) are widely studied; the intermediate cases are not.

%%%%%%%%%%%%%%%%%%%%%%%%%%%%%%%%%%%%%%%%%%%%%

\section{Limitations of full graph estimation}\label{sup_sec:limitations}

%%%%%%%%%%%%%%%%%%%%%%%%%%%%%%%%%%%%%%%%%%%%%

In \cref{sec:lge}, we remarked that one way to perform local graph estimation is to obtain an estimate $\widehat{G}$ of the full graph $G$, then estimate $G_r(V_0)$ by $\widehat{G}_r(V_0)=(\widehat{B}_r(V_0), \widehat{E}_r(V_0))$, where $\widehat{B}_r(V_0)$ and $\widehat{E}_r(V_0)$ are defined as in \cref{eq:ball,eq:edge} but with $\widehat{G}$ in place of $G$. In this section, we describe some key limitations of this general approach, highlighting the need for methods like PFS that are specifically tailored to the local estimation problem.

%%%%%%%%%%%%%%%%%%%%%%%%%%%%%%%%%%%%%%%%%%%%%

\subsection{Limitations of global methods}\label{sup_sec:limits_global}

%%%%%%%%%%%%%%%%%%%%%%%%%%%%%%%%%%%%%%%%%%%%%

Most full graph estimation methods fall into one of two categories: \textit{Global} methods estimate the full graph simultaneously, while \textit{nodewise} methods estimate the neighborhood of each node, then combine these estimates to infer the full graph. Perhaps the most popular global method is the graphical lasso \cite{glasso}, which assumes that $X\sim\mathcal{N}(0,\Theta^{-1})$ is multivariate Gaussian with precision matrix $\Theta$. In this setting, $X_j$ and $X_k$ are conditionally dependent given all other variables, denoted $X_j \ci X_k \mid X_{V\setminus\{j,k\}}$, if and only if $\Theta_{jk}=0$, so estimating $G$ is equivalent to estimating the set of zero entries of $\Theta$. The graphical lasso estimates $\Theta$ via
\begin{align*}
\widehat{\Theta}(\lambda) &= \argmin_{\Theta\succ 0} \tr(\widehat{\Sigma}\Theta) - \log\det{\Theta} + \lambda\lVert \Theta\rVert_1,
\end{align*}
where $\widehat{\Sigma}$ denotes the empirical covariance matrix, $\lVert\Theta\rVert_1=\sum_{i,j=1}^p \lvert\Theta_{ij}\rvert$ is the $\ell_1$-norm of $\Theta$, and the minimization is over all symmetric positive-definite matrices $\Theta\succ 0$. The regularization parameter $\lambda\geq 0$ encourages sparse solutions, with larger values of $\lambda$ typically yielding more zeros in $\widehat{\Theta}(\lambda)$. 

A key limitation of the graphical lasso with respect to local graph estimation is that it is often impossible to choose a $\lambda$ that accurately estimates subgraphs of the true graph. Specifically, under the Gaussian assumption, a necessary condition\cite{isolated} for an edge to exist between $j$ and any other node is $\lambda < \max\{\lvert\widehat{\Sigma}_{jk}\rvert : k\in V\setminus\{j\}\}$. \cref{fig:limitations}c shows the estimate $\widehat{G}_3(1)$ of the local graph of radius $3$ around $X_1$ obtained using the graphical lasso with $\lambda = 0.9999\max\{\lvert\widehat{\Sigma}_{1k}\rvert : k\neq 1\}$ applied to data simulated from a $100$-dimensional Gaussian distribution, as described in the Methods section. Despite being essentially the sparsest estimate that includes the edge $(1,2)$, this choice of $\lambda$ yields $698$ false edges within radius $3$ of $X_1$. As shown in \cref{sup_sec:simulations}, such spurious edges are typical of the graphical lasso, particularly in scenarios where the connections between a target variable and its neighbors are weaker than the connections between those neighbors and other variables.

Global inference is possible for Gaussian graphical models due to the exceptional correspondence between the graph $G$ and the sparsity pattern of $\Theta$ in this setting. To relax the Gaussian assumption, many other methods for undirected graph estimation transform the data and then apply the graphical lasso \cite{nonparanormal, nonparanormal2}. However, since these methods ultimately rely on the graphical lasso in the final step, they are subject to the same limitation described above, regardless of the initial transformation. 

%%%%%%%%%%%%%%%%%%%%%%%%%%%%%%%%%%%%%%%%%%%%%

\subsection{Limitations of global error control}\label{sup_sec:limits_error}

%%%%%%%%%%%%%%%%%%%%%%%%%%%%%%%%%%%%%%%%%%%%%

Much recent work in graphical modeling focuses on developing methods for full graph estimation with \textit{false discovery rate} (FDR) control \cite{fdr_liu, silggm, fdr_li}. The FDR of an estimator $\widehat{E}$ of the edge set $E$ is the expected value of the \textit{false discovery proportion} (FDP),
\begin{align*}
\mathrm{FDP}(\widehat{E}) &= \frac{\lvert \widehat{E} \setminus E\rvert}{\max\{\lvert\widehat{E}\rvert, 1\}},
\end{align*}
where $\lvert A\rvert$ is the size of a set $A$ and the maximum in the denominator is taken to avoid division by zero. For $V_0\subseteq V$ and a radius $r\in\mathbb{N}$, the analogue for the local graph of radius $r$ around $V_0$ is 
\begin{align}\label{eq:fdp}
\mathrm{FDP}(\widehat{E}_r(V_0)) &= \frac{\lvert \widehat{E}_r(V_0) \setminus E_r(V_0)\rvert}{\max\{\lvert\widehat{E}_r(V_0)\rvert, 1\}}.
\end{align} 
\cref{fig:limits_error} shows that the global FDP of the full estimated graph can be very different from the FDPs within local graphs around specific nodes. For example, in the left panel of \cref{fig:limits_error} we have $\mathrm{FDP}(\widehat{E})=0.1$ and $\mathrm{FDP}(\widehat{E}_1(3))=1$, while in the right panel we have $\mathrm{FDP}(\widehat{E})=0.9$ and $\mathrm{FDP}(\widehat{E}_1(1))=0$. Thus, methods with global FDR control do not guarantee error control within local graphs. This is critical from the point of view of interpretability: We cannot conclude that the local graph around a variable of interest is accurate just because the global error rate is low, nor can we conclude that a local graph is inaccurate just because the global error rate is high.

\ifthenelse{\boolean{showfigures}}{
\begin{figure*}[ht]
\begin{adjustbox}{center}
\includegraphics[
	height=0.2\textheight, 
	width=\textwidth,
	]
	{./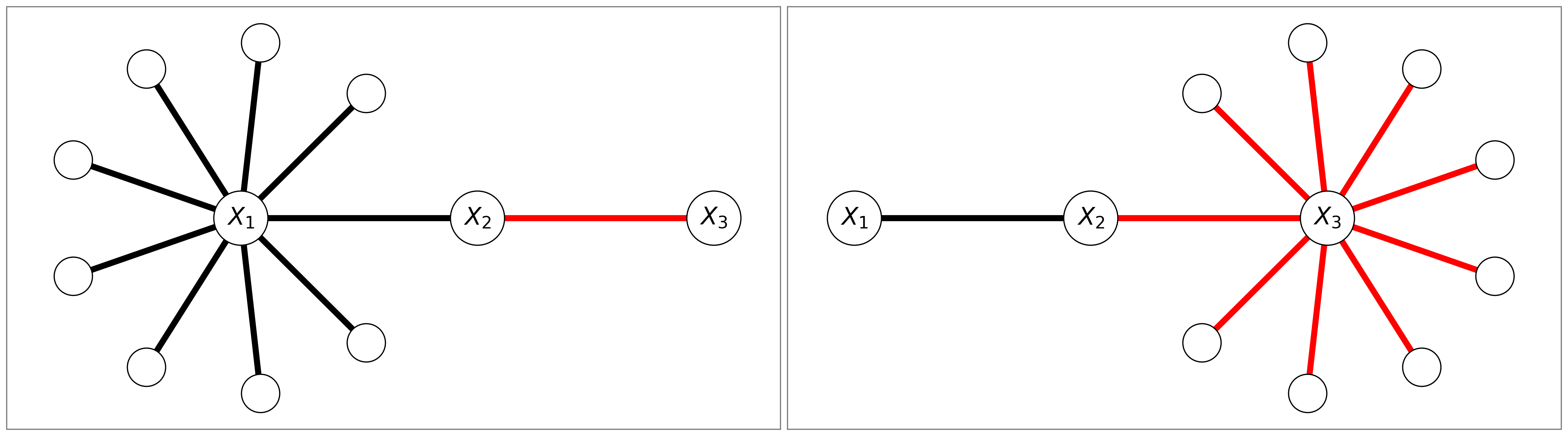}
\end{adjustbox}
\caption{\textit{Global error $\neq$ local error}. Hypothetical estimated graphs, with black and red edges representing true and false positives, respectively. In both cases, we have $\mathrm{FDP}(\widehat{E}_1(1))=0$, $\mathrm{FDP}(\widehat{E}_1(2))=0.5$, and $\mathrm{FDP}(\widehat{E}_1(3))=1$. (\textit{Left}) The false discovery proportion of the full estimated graph is $\mathrm{FDP}(\widehat{E})=0.1$. (\textit{Right}) The false discovery proportion of the full estimated graph is $\mathrm{FDP}(\widehat{E})=0.9$.}
\label{fig:limits_error}
\end{figure*}
}{}

The practical implications of the difference between global and local error control are evident in \cref{sup_sec:other_methods}, where multiple global FDR methods are applied to the county-level cancer and breast cancer datasets. In \cref{fig:bc_gfcsl}, for example, Gaussian graphical modeling with FDR control using scaled lasso~\cite{fdr_liu} (GFCSL), implemented in the R package \texttt{SILGGM}~\cite{silggm}, yields an extremely dense local subgraph of radius 2 around the three clinical targets (\cref{fig:bc_gfcsl}c). This occurs even though the radius-1 subgraph is reasonably sparse and the target global FDR was set to 0.1, which is the default value in \texttt{SILGGM} and the smallest threshold at which an edge is placed between pathologic stage and status (\cref{fig:bc_gfcsl}a). Taken together, these results illustrate that global FDR guarantees often fail to both recover the immediate neighborhoods of target variables \textit{and} constrain the number of edges at larger radii, significantly limiting their ability to yield interpretable local structure.

%%%%%%%%%%%%%%%%%%%%%%%%%%%%%%%%%%%%%%%%%%%%%

\section{Pathwise feature selection: Theory and implementation}\label{sup_sec:pfs}

%%%%%%%%%%%%%%%%%%%%%%%%%%%%%%%%%%%%%%%%%%%%%

Recall the description of pathwise feature selection (PFS) in \cref{sec:lge}: for each variable of interest $j_0\in V_0$, perform feature selection to obtain an estimate $\widehat{S}_1(j_0)$ of the neighborhood $S_1(j_0)$ of $j_0$. Then, for each $j_0\in V_0$ and each $j_1\in \widehat{B}_1(j_0)\setminus V_0$, perform feature selection again to obtain an estimate $\widehat{S}_1(j_1)$ of $S_1(j_1)$, and so on, until some stopping criterion is met.

As described, the above procedure is a generic nodewise selection algorithm that begins in $V_0$ and terminates according to some criterion. Similar methods for full graph estimation already exist \cite{drton,silggm}. The critical difference is that PFS quantifies the accumulation of uncertainty along estimated paths that start in $V_0$, enabling more precise inference near the target variables. To do this, we extend the concept of $q$-values to paths. First, we recall the usual definition of a $q$-value.

%%%%%%%%%%%%%%%%%%%%%%%%%%%%%%%%%%%%%%%%%%%%%

\subsection{Storey's \texorpdfstring{$\bm{q}$}{q}-values}\label{sup_sec:qvalues}

%%%%%%%%%%%%%%%%%%%%%%%%%%%%%%%%%%%%%%%%%%%%%

Graph estimation can be framed as a multiple hypothesis testing problem in which hypotheses $H_j(k) = \1(k \in S_1(j))$ are tested for all pairs $j,k\in V$. That is, $H_j(k)=0$ if $(j,k)\notin E$ and $H_j(k)=1$ otherwise. For each $j\in V$ and $k\in V\setminus\{j\}$, suppose the test $H_j(k)$ is based on a statistic $T_j(k)$. We assume for simplicity that the null hypothesis $H_j(k)=0$ is rejected at level $\alpha$ if and only if $T_j(k)\leq\alpha$, though the following results hold for more general significance regions \cite{storey}.

Fix $j\in V$ and let $\widehat{S}_1(j,\alpha) = \{k : T_j(k) \leq \alpha\}$ be the set of nodes whose null hypotheses $H_j(k)$ are rejected at level $\alpha$. As the notation suggests, $\widehat{S}_1(j,\alpha)$ is an estimator of $S_1(j)$. For $k\in V$ and an observed value $t_j(k)$ of the random test statistic $T_j(k)$, the associated \textit{$q$-value} is
\begin{align}\label{eq:qvalue}
	q_j(k) &= \inf_{\{\alpha : t_j(k) \leq \alpha\}} \mathrm{pFDR}(\alpha)
		= \inf_{\{\alpha : t_j(k) \leq \alpha\}} \mathbb{E}\bigg(\frac{\mathrm{FP}_j(\alpha)}{\lvert \widehat{S}_1(j,\alpha) \rvert}\ \bigg\vert\ \lvert \widehat{S}_1(j,\alpha) \rvert > 0\bigg),
\end{align}
where $\mathrm{FP}_j(\alpha) = \lvert \widehat{S}_1(j,\alpha) \setminus S_1(j) \rvert$ is the number of false positives in $\widehat{S}_1(j,\alpha)$, the second equality is the definition of the \textit{positive false discovery rate}, pFDR, and $\mathbb{E}(A\mid B)$ is the conditional expectation of $A$ given $B$. Intuitively, $q_j(k)$ is the smallest expected proportion of incorrectly rejected hypotheses among all rejected hypotheses, taken over all thresholds $\alpha$ for which the null hypothesis $H_j(k)$ is rejected. The following theorem shows that $q$-values admit a Bayesian interpretation.

\begin{theorem}\label{thrm:storey}
Fix $j\in V$ and suppose that hypothesis tests $H_j(k)$ are performed with the test statistics $T_j(k)$ for all $k\in V\setminus\{j\}$. Assume that the $(T_j(k), H_j(k))$ are independent and identically distributed (iid) with $T_j(k) \mid H_j(k) \sim (1 - H_j(k))F_{0j} + H_j(k) F_{1j}$ for some null distribution $F_{0j}$ and alternative distribution $F_{1j}$, and that $H_j(k)\sim\mathrm{Bernoulli}(\pi_1)$ are iid Bernoulli random variables for all $k\in V\setminus\{j\}$. Then for an observation $t_j(k)$ of $T_j(k)$,
\begin{align}\label{eq:posterior}
	q_j(k) &= \inf_{\{\alpha : t_j(k) \leq \alpha\}} \mathbb{P}\big(H_j(k) = 0 \mid T_j(k) \leq \alpha\big).
\end{align} 
\end{theorem}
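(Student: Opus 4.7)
The plan is to reduce the theorem to the classical identity $\mathrm{pFDR}(\alpha) = \mathbb{P}(H_j(k)=0 \mid T_j(k)\leq\alpha)$ under the mixture model, and then simply take the infimum on both sides. Since $q_j(k)$ is defined as the infimum of $\mathrm{pFDR}(\alpha)$ over $\{\alpha : t_j(k)\leq\alpha\}$, once this pointwise equality is established for each $\alpha$, the statement (\ref{eq:posterior}) follows immediately.

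To prove the pointwise identity, I would fix $\alpha$, set $m = p-1$ (the number of tests), write $Z_k = \1(T_j(k)\leq\alpha)$ and $H_k = H_j(k)$, and let $R_m = \sum_k Z_k$, so that the false-positive count is $V_m = \sum_k (1-H_k) Z_k$. By definition,
\begin{equation*}
\mathrm{pFDR}(\alpha) \;=\; \mathbb{E}\!\left(\frac{V_m}{R_m}\,\Big|\, R_m>0\right) \;=\; \sum_{r\geq 1} \mathbb{E}\!\left(\frac{V_m}{r}\,\Big|\, R_m=r\right)\mathbb{P}(R_m=r\mid R_m>0).
\end{equation*}
Then I would compute $\mathbb{E}(V_m\mid R_m=r)$ using the iid structure. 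Writing $\mathbb{E}(V_m\mid R_m=r) = \sum_k \mathbb{P}(H_k=0,\,Z_k=1\mid R_m=r)$, the key observation is that since the pairs $(Z_k,H_k)$ are iid, $H_k$ is independent of $\{(Z_l,H_l): l\neq k\}$ given $Z_k$. Because $R_m = Z_k + \sum_{l\neq k} Z_l$, this conditional independence gives $\mathbb{P}(H_k=0\mid Z_k=1,R_m=r) = \mathbb{P}(H_k=0\mid Z_k=1)$, a quantity that does not depend on $k$ or $r$ by the iid assumption. Combined with the symmetry relation $\mathbb{P}(Z_k=1\mid R_m=r) = r/m$, summing over $k$ yields $\mathbb{E}(V_m\mid R_m=r) = r\cdot \mathbb{P}(H_j(k)=0\mid T_j(k)\leq\alpha)$, so the ratio inside the expectation is the constant $\mathbb{P}(H_j(k)=0\mid T_j(k)\leq\alpha)$, independent of $r$. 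Substituting back gives $\mathrm{pFDR}(\alpha) = \mathbb{P}(H_j(k)=0\mid T_j(k)\leq\alpha)$, as desired.

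The hardest part is justifying the step $\mathbb{P}(H_k=0\mid Z_k=1,R_m=r) = \mathbb{P}(H_k=0\mid Z_k=1)$ carefully; the temptation is to invoke exchangeability informally, but the clean argument requires noting that $\sum_{l\neq k} Z_l$ is a function of data independent of $(Z_k,H_k)$, and that conditioning on $R_m=r$ jointly with $Z_k=1$ is equivalent to conditioning on $Z_k=1$ and $\sum_{l\neq k} Z_l = r-1$. Everything else in the proof is bookkeeping: combining the factorization of pFDR over $\{R_m=r\}$, applying Bayes' rule to recognize $\mathbb{P}(H_j(k)=0\mid T_j(k)\leq\alpha)$ as $\pi_0 F_{0j}(\alpha)/(\pi_0 F_{0j}(\alpha)+\pi_1 F_{1j}(\alpha))$ if an explicit form is desired, and finally taking infima over $\{\alpha : t_j(k)\leq \alpha\}$ to pass from (\ref{eq:qvalue}) to (\ref{eq:posterior}).
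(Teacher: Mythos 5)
Your proof is correct and matches the source the paper relies on: the key step $\mathbb{P}(H_k=0\mid Z_k=1,R_m=r)=\mathbb{P}(H_k=0\mid Z_k=1)$ is justified exactly as you describe (conditioning on $R_m=r$ together with $Z_k=1$ amounts to conditioning on the event $\sum_{l\neq k}Z_l=r-1$, which is independent of $(Z_k,H_k)$), giving $\mathbb{E}(V_m\mid R_m=r)=r\,\mathbb{P}\big(H_j(k)=0\mid T_j(k)\leq\alpha\big)$, hence the pointwise identity $\mathrm{pFDR}(\alpha)=\mathbb{P}(H_j(k)=0\mid T_j(k)\leq\alpha)$ and, after taking infima over $\{\alpha: t_j(k)\leq\alpha\}$, \cref{eq:posterior}. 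The paper does not prove this theorem itself---it simply cites it as a combination of Theorem 1 and Corollary 2 of Storey\cite{storey}---and your argument is essentially a faithful reconstruction of Storey's original proof, with the only unstated (and minor) caveat being that one needs $\mathbb{P}(T_j(k)\leq\alpha)>0$ for the conditional quantities to be well defined.
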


\cref{thrm:storey} combines Theorem 1 and Corollary 2 from John Storey's seminal work on $q$-values \cite{storey}. This can be given a Bayesian interpretation in which the hypotheses $H_j(k)$ are given iid Bernoulli$(\pi_1)$ priors, and the statistics $T_j(k) \mid H_j(k)$ are modeled as iid draws from the mixture distribution $(1 - H_j(k))F_{0j} + H_j(k)F_{1j}$. Under this setup, $\mathbb{P}(H_j(k) = 0 \mid T_j(k) \leq \alpha)$ represents the posterior probability that a Type I error was made, given that the null was rejected when using a threshold of $\alpha$. Thus, the $q$-value $q_j(k)$ is the smallest posterior probability of a Type I error over all thresholds $\alpha$ greater than the observed test statistic $t_j(k)$. Equivalently, 
\begin{align}\label{eq:posterior_ball}
q_j(k) &= \inf_{\{\alpha : t_j(k) \leq \alpha\}}\mathbb{P}\big(k\notin S_1(j) \mid k \in \widehat{S}_1(j,\alpha)\big)
\end{align}
is the smallest posterior probability that $(j,k)$ is a false positive, given that $k$ is in the estimated neighborhood $\widehat{S}_1(j,\alpha)$ of node $j$.

%%%%%%%%%%%%%%%%%%%%%%%%%%%%%%%%%%%%%%%%%%%%%

\subsection{Generalization to paths}\label{sup_sec:paths}

%%%%%%%%%%%%%%%%%%%%%%%%%%%%%%%%%%%%%%%%%%%%%

Storey's $q$-values quantify uncertainty in individual edge estimates. In this section, we extend the concepts from \cref{sup_sec:qvalues} to quantify uncertainty in estimated paths. Let $V_0\subseteq V$ be a subset of variables of interest in the true graph $G=(V,E)$, and let $H_j(k)$ and $T_j(k)$ be hypotheses and test statistics, respectively, as defined above.

For $\alpha\in\mathbb{R}^p$, recursively define $\widehat{B}_0(V_0,\alpha) = V_0$ and, for each integer $s\geq 0$,
\begin{align*}
	\widehat{B}_{s+1}(V_0,\alpha) &= \bigcup_{j \in \widehat{B}_s(V_0,\alpha)} \widehat{B}_1(j,\alpha_j),
\end{align*}
where $\widehat{B}_1(j,\alpha_j) = \{j\}\cup\{k : T_j(k) \leq \alpha_j\}$ is the estimated ball of radius 1 around $j$. $\widehat{B}_r(V_0,\alpha)$ consists of all nodes that are estimated to be within distance $r$ of $V_0$ under the significance thresholds $\alpha=(\alpha_1,\dots,\alpha_p)$. Note that some $\alpha_j$ might be unused; however, taking $\alpha\in\mathbb{R}^p$ avoids the need to explicitly track which nodes appear in the recursive process. The corresponding edge set is
\begin{align*}
	\widehat{E}_r(V_0,\alpha) &= \bigcup_{s=0}^{r-1}\bigcup_{j\in\widehat{S}_s(V_0,\alpha)} \big\{(j,k) : k \in \widehat S_1(j,\alpha_j)\big\},
\end{align*}
where $\widehat{S}_0(V_0,\alpha)=V_0$ and, for $s\geq 0$, $\widehat{S}_{s+1}(V_0,\alpha) = \widehat{B}_{s+1}(V_0,\alpha)\setminus\widehat{B}_s(V_0,\alpha)$ is the set of nodes that are exactly distance $s+1$ from $V_0$ in the estimated graph. The graph $\widehat{G}_r(V_0,\alpha) = (\widehat{B}_r(V_0,\alpha),\widehat{E}_r(V_0,\alpha))$ is then an estimator of the true local graph $G_r(V_0)$.

Let $\mathcal{J}_r(V_0)$ and $\widehat{\mathcal{J}}_r(V_0,\alpha)$ be the set of paths starting in $V_0$ of length $r$ that lie in $G$ and $\widehat{G}_r(V_0,\alpha)$, respectively. A natural extension of \cref{eq:qvalue} to a path $\boldj$ of length $r$ starting in $V_0$ is
\begin{align}\label{eq:qvalue_path}
	q(\boldj) &= \inf_{\{\alpha : t(\boldj)\leq \alpha\}} \mathbb{E}\left(\frac{\lvert \widehat{\mathcal{J}}_r(V_0,\alpha) \setminus \mathcal{J}(V_0)\rvert}{\lvert \widehat{\mathcal{J}}_r(V_0,\alpha)\rvert}\ \bigg\vert\ \lvert \widehat{\mathcal{J}}_r(V_0,\alpha)\rvert > 0\right),
\end{align}
which is precisely \cref{eq:qvalue} when $\boldj=(j,k)$ consists of a single edge. A direct analogue of \cref{thrm:storey} seems unlikely to hold when $r>1$ unless no two paths in $\widehat{J}_r(V_0,\alpha)$ contain the same edge because extensions of the independence assumption in \cref{thrm:storey} to paths will be violated whenever two distinct paths in the estimated graph share an edge. For example, the test statistic and hypothesis pairs for distinct paths $(j,k,\ell)$ and $(j,k,\ell')$, namely $((T_j(k),H_j(k)),(T_k(\ell),H_k(\ell))$ and $((T_j(k),H_j(k)),(T_k(\ell'),H_k(\ell'))$, both contain $(T_j(k),H_j(k))$ and are therefore dependent.

While \cref{eq:qvalue_path} provides a natural extension of \cref{eq:qvalue} to paths, it depends on the full sets of estimated and true paths, $\widehat{\mathcal{J}}_r(V_0,\alpha)$ and $\mathcal{J}_r(V_0)$, making it difficult to analyze and interpret. Motivated by the Bayesian interpretation of $q$-values provided by \cref{thrm:storey}, we instead propose a more path-specific and interpretable measure of uncertainty that assesses the probability that a given path is not in the true graph. Specifically, for any path $\boldj = (j_0,\dots,j_r)$ starting in $V_0$ and observations $t_{j_s}(j_{s+1})$ of $T_{j_s}(j_{s+1})$, we define
\begin{align*}
	\widetilde{q}(\boldj) &= \inf_{\{\alpha : t(\boldj)\leq \alpha\}} \mathbb{P}\big(\boldj \notin \mathcal{J}(V_0) \mid \boldj \in \widehat{\mathcal{J}}_r(V_0,\alpha)\big),
\end{align*}
where $\{\alpha : t(\boldj)\leq \alpha\}$ is shorthand for $\{\alpha\in\mathbb{R}^p : t_{j_s}(j_{s+1}) \leq \alpha_{j_s}\ \text{for all}\ 0\leq s < r\}$ and $\mathcal{J}(V_0)$ is the set of all paths in $G$ starting in $V_0$. This is the smallest probability that $\boldj$ is not in the true graph, given that it is in $\widehat{G}_r(V_0,\alpha)$. \cref{thrm:main} states that the uncertainty in the estimated path $\boldj$ is at most the sum of the uncertainties $q_{j_s}(j_{s+1})$ in the neighborhood estimates along it. Its proof, provided below, essentially amounts to a union bound over the individual edges in the path.

\begin{theorem}\label{thrm:main}
Let $\boldj=(j_0,\dots,j_r)$ be a path starting in $V_0$, that is, $j_0\in V_0$ and $j_s\neq j_t$ for all $s\neq t$. Suppose that the assumptions of \cref{thrm:storey} hold for every $j_s$, $0\leq s < r$, and that $(T_{j_s}(j_{s+1}),H_{j_s}(j_{s+1}))$ are independent for all $0\leq s < r$. Then $\widetilde{q}(\boldj) \leq \sum_{s=0}^{r-1} q_{j_s}(j_{s+1})$.
\end{theorem}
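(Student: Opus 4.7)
The plan is to prove the bound by rewriting both $\{\boldj \notin \mathcal{J}(V_0)\}$ and $\{\boldj \in \widehat{\mathcal{J}}_r(V_0,\alpha)\}$ in terms of the per-edge hypotheses $H_{j_s}(j_{s+1})$ and test statistics $T_{j_s}(j_{s+1})$, then applying a union bound and invoking the Bayesian characterization of $q$-values from \cref{thrm:storey}.

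First, since $\boldj$ lies in $G$ iff every consecutive pair is an edge of $G$, and $\boldj$ lies in $\widehat{G}_r(V_0,\alpha)$ iff each $j_{s+1}$ lies in the estimated neighborhood $\widehat{S}_1(j_s,\alpha_{j_s})$ when building outward from $V_0$, I would write
\[
\{\boldj \notin \mathcal{J}(V_0)\} = \bigcup_{s=0}^{r-1} \{H_{j_s}(j_{s+1}) = 0\}, \qquad \{\boldj \in \widehat{\mathcal{J}}_r(V_0,\alpha)\} = \bigcap_{s=0}^{r-1}\{T_{j_s}(j_{s+1}) \leq \alpha_{j_s}\}.
\]
Because the nodes of the path are distinct, the thresholds $\alpha_{j_0},\dots,\alpha_{j_{r-1}}$ are genuinely distinct coordinates of $\alpha$, so the constraint $\{\alpha : t(\boldj)\leq\alpha\}$ factors across those coordinates---this will matter for the final infimum step.

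Next, I would apply a union bound over $s$ to the conditional probability appearing in the definition of $\widetilde{q}(\boldj)$, obtaining
\[
\mathbb{P}\bigl(\boldj \notin \mathcal{J}(V_0) \mid \boldj \in \widehat{\mathcal{J}}_r(V_0,\alpha)\bigr) \leq \sum_{s=0}^{r-1} \mathbb{P}\bigl(H_{j_s}(j_{s+1}) = 0 \mid \boldj \in \widehat{\mathcal{J}}_r(V_0,\alpha)\bigr),
\]
and then use the assumed independence of the pairs $(T_{j_s}(j_{s+1}),H_{j_s}(j_{s+1}))$ across $s$ to collapse each term on the right: conditioning on the full intersection $\bigcap_t\{T_{j_t}(j_{t+1})\leq \alpha_{j_t}\}$ is equivalent to conditioning only on $\{T_{j_s}(j_{s+1})\leq \alpha_{j_s}\}$ for that particular $s$, reducing the term to $\mathbb{P}(H_{j_s}(j_{s+1})=0 \mid T_{j_s}(j_{s+1})\leq \alpha_{j_s})$. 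Taking the infimum over $\{\alpha : t(\boldj)\leq\alpha\}$ on both sides, the factorization of the constraint lets me push the infimum inside the sum coordinate by coordinate, and by \cref{thrm:storey} each coordinatewise infimum equals $q_{j_s}(j_{s+1})$ via \cref{eq:posterior_ball}, yielding $\widetilde{q}(\boldj)\leq\sum_{s=0}^{r-1} q_{j_s}(j_{s+1})$.

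The main obstacle I anticipate is the independence-based collapse in the second step: I must justify that joint independence of the edge-level pairs $(T_{j_s}(j_{s+1}),H_{j_s}(j_{s+1}))$ implies that conditioning on the intersection $\bigcap_t \{T_{j_t}(j_{t+1}) \leq \alpha_{j_t}\}$ leaves the marginal of any single $H_{j_s}(j_{s+1})$ unchanged from conditioning on $\{T_{j_s}(j_{s+1}) \leq \alpha_{j_s}\}$ alone. This is routine once the independence of the pairs is in hand, but it is the step where the structural hypothesis of the theorem is essential; everything else is a union bound and a component-wise minimization.
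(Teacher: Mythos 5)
Your proposal is correct and follows essentially the same route as the paper's proof: the same decomposition of $\{\boldj\notin\mathcal{J}(V_0)\}$ and $\{\boldj\in\widehat{\mathcal{J}}_r(V_0,\alpha)\}$ into edge-level events, the same union bound, the same independence-based collapse of the conditioning event to a single edge, and the same coordinatewise exchange of infimum and sum before invoking \cref{thrm:storey}. The ``main obstacle'' you flag is exactly the step the paper justifies by noting that for independent pairs $(X_1,Y_1),\dots,(X_n,Y_n)$, $Y_1$ is conditionally independent of $X_2,\dots,X_n$ given $X_1$.
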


\begin{proof}
Observe that $\boldj\notin\mathcal{J}(V_0)$ if and only if $H_{j_s}(j_{s+1}) = 0$ for some $0\leq s < r$. Also, $\boldj \in \widehat{\mathcal{J}}_r(V_0,\alpha)$ if and only if $T_{j_s}(j_{s+1})\leq\alpha_{j_s}$ for all $0\leq s < r$. Therefore,
\begin{align*}
	\widetilde{q}(\boldj) &= \inf_{\{\alpha : t(\boldj)\leq \alpha\}} \mathbb{P}\big(\boldj \notin \mathcal{J}(V_0) \mid \boldj \in \widehat{\mathcal{J}}_r(V_0,\alpha)\big) \\
		&= \inf_{\{\alpha : t(\boldj)\leq \alpha\}} \mathbb{P}\big(H_{j_s}(j_{s+1})=0\ \text{for some}\ 0\leq s<r \mid \boldj \in \widehat{\mathcal{J}}_r(V_0,\alpha)\big) \\
		&\leq \inf_{\{\alpha : t(\boldj)\leq \alpha\}} \sum_{s=0}^{r-1} \mathbb{P}\big(H_{j_s}(j_{s+1})=0 \mid \boldj \in \widehat{\mathcal{J}}_r(V_0,\alpha)\big) \\
		&= \inf_{\{\alpha : t(\boldj)\leq \alpha\}} \sum_{s=0}^{r-1} \mathbb{P}\big(H_{j_s}(j_{s+1})=0 \mid T_{j_s}(j_{s+1}) \leq \alpha_{j_s}\big) \\
		&= \sum_{s=0}^{r-1} \inf_{\{\alpha_{j_s} : t_{j_s}(j_{s+1})\leq\alpha_{j_s}\}}\mathbb{P}\big(H_{j_s}(j_{s+1})=0 \mid T_{j_s}(j_{s+1}) \leq \alpha_{j_s}\big) \\
		&= \sum_{s=0}^{r-1} q_{j_s}(j_{s+1}).
\end{align*}
The inequality is a union bound. The third equality holds because the $(T_{j_s}(j_{s+1}),H_{j_s}(j_{s+1}))$ are independent by assumption; specifically, for independent random variables $(X_1,Y_1),\dots,(X_n,Y_n)$, it is straightforward to verify that $Y_1$ is conditionally independent of $X_2,\dots,X_n$ given $X_1$. The infimum and sum are exchangeable in the fourth equality because each summand only depends on $\alpha_{j_s}$ (not the other $\alpha_{j_t}$), and the last equality holds by \cref{thrm:storey}.
\end{proof}

%%%%%%%%%%%%%%%%%%%%%%%%%%%%%%%%%%%%%%%%%%%%%

\subsection{Additional algorithmic details}\label{sup_sec:implementation}

%%%%%%%%%%%%%%%%%%%%%%%%%%%%%%%%%%%%%%%%%%%%%

In Line 6 of \cref{alg:pfs}, the entries $Q_{jk}$ and $Q_{kj}$ are set to the minimum of $q_j(k)$ and $q_k(j)$. This \textit{minimum rule} preserves symmetry and is analogous to the ``OR'' rule in nodewise graphical models \cite{mb_graph}, where an edge is included in the estimated graph if either node selects the other as a neighbor. As an alternative, a more conservative \textit{forward rule} may be applied, assigning $Q_{jk} = Q_{kj}$ according to the $q$-value of the node closer to the initial target set $V_0$ (still applying the minimum if both nodes lie in the same neighborhood layer). For example, if $j$ and $k$ are distances $r$ and $r+1$ from $V_0$, respectively, then $Q_{jk}$ and $Q_{kj}$ are set to $q_j(k)$.

%%%%%%%%%%%%%%%%%%%%%%%%%%%%%%%%%%%%%%%%%%%%%

\subsection{Computation}\label{sup_sec:computation}

%%%%%%%%%%%%%%%%%%%%%%%%%%%%%%%%%%%%%%%%%%%%%

The total runtime of PFS is roughly $p_{\mathrm{sel}}T_{\text{fs}}$, where $p_{\mathrm{sel}}$ is the number of nodes whose neighborhoods are estimated, and $T_{\text{fs}}$ is the time required for one run of the underlying feature selection method. The quantity $p_{\mathrm{sel}}$ depends on the data and the user-specified false discovery thresholds. Liberal thresholds lead to denser neighborhoods and more follow-up selections, while stricter thresholds yield sparser graphs and reduced computation. A practical strategy is to begin with $r_{\max} = 1$ and a liberal threshold to assess neighborhood sparsity around the target variables. Once an appropriate $q_1^*$ is chosen, increase to $r_{\max} = 2$ and rerun PFS using the updated $q_1^*$ along with a liberal value of $q_2^*$ to assess sparsity at radius 2. This process can be repeated iteratively to tune each $q_r^*$ and control the size of the estimated local graph.

In the environmental health study, for example, PFS was run for 15 nodes (two target nodes plus the 13 nodes in their combined neighborhoods), with each IPSS run taking about 60 seconds, yielding a total runtime of approximately 15 minutes. In the breast cancer study, PFS was run for 88 nodes, with each IPSS run taking approximately 35 seconds, yielding a total runtime of approximately 50 minutes. These runtimes are consistent with previous work, which showed that individual IPSS runs complete in under 20 seconds on datasets with 500 samples and 5000 features~\cite{ipss_nonparametric}. In contrast, two full graph estimation methods with global FDR control, desparsified nodewise scaled lasso~\cite{dsnwsl} (DSNWSL) and Gaussian graphical models with FDR control using scaled lasso~\cite{fdr_liu} (GFCSL), took approximately $4.5$ and $5.25$ hours on the same breast cancer dataset, respectively. The other three full graph methods in the R package \texttt{SILGGM}~\cite{silggm} were computationally infeasible in this setting, as were many of the methods from the R package \texttt{bnlearn}~\cite{bnlearn}.

Table~\ref{tab:runtimes} shows that PFS with lasso (PFS(L1)) and with gradient boosting (PFS(GB)) scale favorably with dimension relative to many other graph estimation methods. Both PFS variants maintain runtimes on the order of seconds to a few minutes even at $p=8000$, whereas many full-graph and Markov blanket-based algorithms exceed the imposed two-hour limit well before this regime.

All PFS runtimes reported in this work are based on single-threaded computation, but substantial speedups are easily achievable through parallelization. At each iteration, nodewise selections are performed independently for all nodes in the current layer, enabling straightforward parallel execution. Once neighborhoods are estimated, the pruning step---implemented via the function \texttt{prune\_graph} in the \texttt{localgraph} package---is computationally negligible and can be used to further sparsify the graph without additional feature selection. IPSS is also trivially parallelizable, as each resampling step is independent, allowing for further reductions in runtime~\cite{ipss_nonparametric}.

\setlength{\tabcolsep}{10pt}
\begin{table*}[ht]
\centering
\begin{tabular}{lccccccc}
\toprule
Method & 125 & 250 & 500 & 1000 & 2000 & 4000 & 8000 \\
\midrule
\textbf{PFS}(L1) & 0:00:06 & 0:00:15 & 0:00:13 & 0:00:02 & 0:00:16 & 0:00:24 & 0:00:13 \\
ARACNE & 0:00:00 & 0:00:00 & 0:00:00 & 0:00:00 & 0:00:01 & 0:00:06 & 0:00:28 \\
\textbf{PFS}(GB) & 0:01:10 & 0:00:52 & 0:01:19 & 0:00:38 & 0:03:27 & 0:03:13 & 0:02:24 \\
NLasso & 0:00:02 & 0:00:03 & 0:00:07 & 0:00:29 & 0:01:45 & 0:07:35 & 0:26:19 \\
HPC(L) & 0:00:00 & 0:00:01 & 0:00:54 & 0:00:23 & 0:03:50 & 0:13:42 & 0:35:37 \\
GFCSL & 0:00:00 & 0:00:00 & 0:00:03 & 0:00:14 & 0:01:10 & 0:11:21 & 1:00:46 \\
DSNWSL & 0:00:00 & 0:00:00 & 0:00:03 & 0:00:11 & 0:01:01 & 0:15:46 & $>$2hrs \\
GLasso & 0:00:02 & 0:00:03 & 0:00:08 & 0:00:34 & 0:04:03 & 0:31:55 & $>$2hrs \\
MMPC(L) & 0:00:00 & 0:00:00 & 0:00:02 & 0:00:21 & 0:03:00 & 0:48:32 & $>$2hrs \\
SIHPC(L) & 0:00:00 & 0:00:00 & 0:00:02 & 0:00:27 & 0:04:34 & 1:20:06 & $>$2hrs \\
MMPC & 0:00:03 & 0:00:12 & 0:00:59 & 0:06:41 & 0:41:10 & $>$2hrs & $>$2hrs \\
SIHPC & 0:00:01 & 0:00:05 & 0:00:28 & 0:05:51 & 1:26:06 & $>$2hrs & $>$2hrs \\
HPC & 0:00:09 & 0:01:37 & 0:05:24 & 0:28:55 & $>$2hrs & $>$2hrs & $>$2hrs \\
IAMB(L) & 0:00:00 & 0:00:17 & 0:16:00 & $>$2hrs & $>$2hrs & $>$2hrs & $>$2hrs \\
StablePC & 0:00:06 & 0:01:07 & 0:16:06 & $>$2hrs & $>$2hrs & $>$2hrs & $>$2hrs \\
FastIAMB(L) & 0:00:03 & 0:00:37 & 0:28:44 & $>$2hrs & $>$2hrs & $>$2hrs & $>$2hrs \\
FastIAMB & 0:00:08 & 0:01:12 & 1:25:45 & $>$2hrs & $>$2hrs & $>$2hrs & $>$2hrs \\
\bottomrule
\end{tabular}
\caption{\textit{Runtime versus dimension.} Wall-clock runtimes (H:MM:SS) for a single run of different graph estimation methods on simulated data as the number of variables increases from $p=125$ to $8000$ with fixed sample size $n = 500$. Data are generated from a Gaussian graphical model with a random sparse precision matrix. There is one target variable. All methods are run with default settings; PFS(L1) and PFS(GB) denote PFS using IPSS with lasso and gradient-boosting, respectively. PFS parameters are always set to $r_{\max} = 3$ and $q_{\max} = 0.25$. Entries marked $>$2\,hrs exceed the imposed time limit of 2 hours. All other methods are described in \cref{sup_sec:method_settings}. Methods are ordered by runtime at $p = 8000$.}
\label{tab:runtimes}
\end{table*}

\clearpage

%%%%%%%%%%%%%%%%%%%%%%%%%%%%%%%%%%%%%%%%%%%%%%%%%%%%%%%%%%%%%%%%

\section{Other methods: Implementation details and settings}\label{sup_sec:other_method_details}

%%%%%%%%%%%%%%%%%%%%%%%%%%%%%%%%%%%%%%%%%%%%%%%%%%%%%%%%%%%%%%%%

In this section, we describe the other graph estimation methods to which PFS is compared in this work. As discussed in \cref{sec:intro}, the following methods belong to one of two classes: regularization-based approaches or constraint-based and mutual-information based approaches. Throughout this work, all method parameters are set to their software-specific defaults unless otherwise specified.

\textit{Regularization-based approaches}. These methods assume data are Gaussian. The graphical lasso~\cite{glasso} and the nodewise lasso~\cite{mb_graph} are implemented using the R package \texttt{huge}~\cite{huge}. The bivariate nodewise scaled lasso~\cite{bnwsl} (BNWSL), desparsified graphical lasso~\cite{dsgl} (DSGL), desparsified nodewise scaled lasso~\cite{dsnwsl} (DSNWSL), and Gaussian graphical models with FDR control based on lasso or scaled lasso~\cite{fdr_liu} (GFCL, GFCSL) are implemented using the R package \texttt{SILGGM}~\cite{silggm}. The graphical and nodewise lasso do not provide any form of error control; the five \texttt{SILGGM} methods provide global FDR control. In several applications, custom nominal FDR values were used for the \texttt{SILGGM} methods to yield better results.

\textit{Constraint-based and mutual information-based methods}. Constraint-based and mutual information-based structure learning algorithms are implemented using the R package \texttt{bnlearn}~\cite{bnlearn}. These include ARACNE~\cite{bnlearn_aracne}, hybrid parents and children~\cite{bnlearn_hpc} (HPC), incremental association Markov blanket~\cite{bnlearn_iamb} (IAMB) and its fast variant FastIAMB, max-min parents and children~\cite{bnlearn_mmpc} (MMPC), Hiton parents and children~\cite{bnlearn_sihpc} (SIHPC), and the stable PC algorithm~\cite{bnlearn_stablepc} (StablePC).

In addition to their global implementations, several of these procedures operate nodewise and can be adapted to estimate neighborhoods via the \texttt{learn.nbr} or \texttt{learn.mb} functions in \texttt{bnlearn}; these neighborhoods can then be combined to estimate local graphs. We denote these local variants with a trailing ``(L)", for example HPC(L). Methods based on \texttt{learn.mb} estimate Markov blankets, whereas those using \texttt{learn.nbr} estimate parents-and-children (PC) sets in a directed graph. Markov blanket methods, namely IAMB and FastIAMB, are typically substantially slower than PC-based methods, and both classes scale poorly to high dimensions relative to PFS (\cref{tab:runtimes}). PC-based algorithms do not directly estimate conditional independence graphs; however, we include them because their skeletons can be interpreted as undirected graphs and they are widely used in practice. Unlike PFS, none of these approaches provide edgewise or pathwise false discovery control.

%%%%%%%%%%%%%%%%%%%%%%%%%%%%%%%%%%%%%%%%%%%%%%%%%%%%%%%%%%%%%%%%

\section{Additional details: Simulation studies}\label{sup_sec:simulations}

%%%%%%%%%%%%%%%%%%%%%%%%%%%%%%%%%%%%%%%%%%%%%%%%%%%%%%%%%%%%%%%%

We conduct five simulation studies to evaluate the performance of PFS relative to existing methods across regimes that vary in sparsity, linearity, and sample size. Specifically, we consider: sparse, linear graphs with $n=100$ (\cref{tab:linear_sparse_n100}); sparse, nonlinear graphs with $n=100$ (\cref{tab:nonlinear_sparse_n100}); dense, linear graphs with $n=100$ (\cref{tab:linear_dense_n100}); dense, linear graphs with $n=500$ (\cref{tab:linear_dense_n500}); and dense, nonlinear graphs with $n=500$ (\cref{tab:nonlinear_dense_n500}). These settings are chosen to assess (i) the behavior of local graph estimators under ideal Gaussian assumptions, (ii) robustness to nonlinear relationships, and (iii) performance when the true local structure around the target variable is dense.

In each study, there are $p=200$ variables and a single target variable. Each table reports the true positive rate (TPR) and false discovery rate (FDR) (\cref{eq:tpr,eq:fdp}) for local graph recovery at multiple radii, averaged over $100$ independent trials. In \cref{fig:varying_n}, we further examine the performance of PFS as the sample size $n$ varies. Full details of the data-generating mechanisms and parameter choices for each setting are provided below.

%%%%%%%%%%%%%%%%%%%%%%%%%%%%%%%%%%%%%%%%%%%%%

\subsection{Simulation designs}\label{sup_sec:simulation_designs}

%%%%%%%%%%%%%%%%%%%%%%%%%%%%%%%%%%%%%%%%%%%%%

All simulation studies are based on block-structured conditional independence graphs with a single target variable and $p=200$ total variables. Data are generated from Gaussian graphical models with precision matrix $\Theta$ (\cref{sup_sec:limitations}), optionally followed by a nonlinear transformation of the target variable.

\textit{Block-structured precision matrices}.
In all settings, the precision matrix $\Theta$ is constructed with three blocks: the first contains only the target variable, the second contains the target's immediate neighbors, and the third block contains all remaining variables. Nonzero entries within and between blocks are generated at random according to prespecified degrees, with values chosen uniformly from $\{-1,1\}$. The resulting matrix is symmetrized, a positive constant is added to the diagonal to ensure positive definiteness, and the matrix is rescaled so that its eigenvalues lie between $0.01$ and $10$, allowing for strong correlations between variables as is common in practice.

\textit{Sparse, linear, $n=100$}.
In the sparse linear setting (\cref{tab:linear_sparse_n100}), the three blocks of $\Theta$ are sizes $1\times 1$ (target variable), $4\times 4$ (immediate neighbors of the target), and $195\times 195$ (the remaining variables). There are no edges between the four neighbors of the target. Each neighbor is connected to the third block with average degree six, while variables in the third block have average degree two. In each of the $100$ trials, a new $\Theta$ is generated and $n=100$ samples are drawn independently from $\mathcal{N}(0,\Theta^{-1})$.

\textit{Sparse, nonlinear, $n=100$}.
In the sparse nonlinear setting (\cref{tab:nonlinear_sparse_n100}), the block structure is identical to the sparse linear case, except that the connectivity between the target's neighbors and the third block is reduced to an average degree of two. The only other difference between this setting and the sparse, linear, $n=100$ case is that after sampling $n=100$ observations from $\mathcal{N}(0,\Theta^{-1})$, the target variable $X_1$ is redefined as a nonlinear function of its true neighbors:
\begin{align}\label{eq:nonlinear_target}
X_1 &= \sum_{j \in S_1(X_1)} \exp(-X_j^2/2) + \varepsilon,
\end{align}
where $\varepsilon$ is mean-zero Gaussian noise scaled to achieve a signal-to-noise ratio---defined as the variance of the signal divided by the variance of the noise---of $4$. This form of nonlinearity is motivated by scenarios in medicine and epidemiology where both high and low levels of certain covariates can be harmful, making the response sensitive to deviations from a normal range.

\textit{Dense, linear, $n=100$}.
In the dense linear setting (\cref{tab:linear_dense_n100}), the three blocks of $\Theta$ are sizes $1\times 1$ (target variable), $20\times 20$ (immediate neighbors of the target), and $179\times 179$ (the remaining variables). The neighbor block has average internal degree five, and variables in the third block have average degree ten. Each of the 20 immediate neighbors of the target is connected to the third block with average degree five. As before, in each of the $100$ trials, a new $\Theta$ is generated and $n=100$ samples are drawn independently from $\mathcal{N}(0,\Theta^{-1})$.

\textit{Dense, linear, $n=500$}.
In the dense linear setting with increased sample size (\cref{tab:linear_dense_n500}), the block structure and connectivity are identical to the dense, linear, $n=100$ case. In each of the $100$ trials, a new $\Theta$ is generated and $n=500$ samples are drawn independently from $\mathcal{N}(0,\Theta^{-1})$.

\textit{Dense, nonlinear, $n=500$}.
In the dense nonlinear setting (\cref{tab:nonlinear_dense_n500}), the block structure and connectivity are identical to the dense, linear, $n=500$ case. The only other difference is that after sampling $n=500$ observations from $\mathcal{N}(0,\Theta^{-1})$, the target variable $X_1$ is redefined as a nonlinear function of its true neighbors according to \cref{eq:nonlinear_target}, where $\varepsilon$ is again mean-zero Gaussian noise, scaled to achieve a signal-to-noise ratio of $4$.

%%%%%%%%%%%%%%%%%%%%%%%%%%%%%%%%%%%%%%%%%%%%%

\subsection{Method settings}\label{sup_sec:method_settings}

%%%%%%%%%%%%%%%%%%%%%%%%%%%%%%%%%%%%%%%%%%%%%

All methods are run using default settings from their respective software packages unless otherwise specified. PFS is implemented using integrated path stability selection (IPSS) via the Python package \texttt{ipss}~\cite{ipss,ipss_nonparametric}. In linear settings, IPSS uses lasso~\cite{lasso}, while in nonlinear settings it uses variable importance scores from gradient boosting~\cite{xgboost}. The maximum pathwise $q$-value threshold is set to $q_{\mathrm{path}}^{*}=0.2$. Local neighborhood $q$-value thresholds $(q_1^*, q_2^*, \dots)$ are set to $(0.2,0.1,0.1,0.1)$ in linear settings and $(0.2,0.05,0.05,0.05)$ in nonlinear settings. For the \texttt{SILGGM} methods, the target FDR is set to $0.1$. For \texttt{bnlearn} methods, correlation-based tests are used in linear settings and mutual information-based tests are used in nonlinear settings. 

%%%%%%%%%%%%%%%%%%%%%%%%%%%%%%%%%%%%%%%%%%%%%

\subsection{Simulation results}\label{sup_sec:simulation_results}

%%%%%%%%%%%%%%%%%%%%%%%%%%%%%%%%%%%%%%%%%%%%%

The results of the five simulation studies are reported in \cref{tab:linear_sparse_n100,tab:nonlinear_sparse_n100,tab:linear_dense_n100,tab:linear_dense_n500,tab:nonlinear_dense_n500}. The false discovery rates for each method at each radius are the average local false discovery proportions (\cref{eq:fdp}) over the $100$ trials. For an estimated local graph $\widehat{G}_r(V_0) = (\widehat{B}_r(V_0), \widehat{E}_r(V_0))$, the local \textit{true positive rate} (TPR) is the proportion of true edges in the estimated local graph of radius $r$ among all edges in the true local graph $E_r(V_0)$:
\begin{align}\label{eq:tpr}
\mathrm{TPR}(\widehat{E}_r(V_0)) = \frac{|\widehat{E}_r(V_0) \cap E_r(V_0)|}{|E_r(V_0)|}.
\end{align}

\begin{table}[ht]
\centering
\begin{tabular}{lccccccccc}
\toprule
 & \multicolumn{4}{c}{TPR} & & \multicolumn{4}{c}{FDR} \\
\cmidrule(lr){2-5}\cmidrule(lr){7-10}
Method & $r=1$ & $r=2$ & $r=3$ & $r=4$ &  & $r=1$ & $r=2$ & $r=3$ & $r=4$ \\
\midrule
\textbf{PFS} & 0.85 & 0.59 & 0.31 & 0.19 &  & 0.16 & 0.07 & 0.07 & 0.07 \\
GLasso & 0.92 & 0.95 & 0.90 & 0.80 &  & 0.90 & 0.95 & 0.92 & 0.89 \\
NLasso & 0.90 & 0.87 & 0.79 & 0.71 &  & 0.43 & 0.43 & 0.47 & 0.39 \\
BNWSL & 0.34 & 0.15 & 0.09 & 0.06 &  & 0.01 & 0.02 & 0.02 & 0.02 \\
DSGL & 0.03 & 0.02 & 0.01 & 0.01 &  & 0.00 & 0.00 & 0.00 & 0.00 \\
DSNWSL & 0.60 & 0.40 & 0.28 & 0.22 &  & 0.01 & 0.00 & 0.01 & 0.01 \\
GFCL & 0.92 & 0.84 & 0.71 & 0.58 &  & 0.38 & 0.40 & 0.36 & 0.29 \\
GFCSL & 0.60 & 0.40 & 0.29 & 0.25 &  & 0.09 & 0.10 & 0.13 & 0.14 \\
ARACNE & 0.84 & 0.77 & 0.68 & 0.68 &  & 0.15 & 0.17 & 0.38 & 0.64 \\
HPC & 0.79 & 0.67 & 0.54 & 0.47 &  & 0.02 & 0.03 & 0.07 & 0.13 \\
MMPC & 0.80 & 0.65 & 0.57 & 0.55 &  & 0.08 & 0.08 & 0.21 & 0.36 \\
SIHPC & 0.79 & 0.63 & 0.55 & 0.52 &  & 0.06 & 0.06 & 0.18 & 0.34 \\
HPC(L) & 0.82 & 0.71 & 0.59 & 0.55 &  & 0.07 & 0.08 & 0.22 & 0.32 \\
MMPC(L) & 0.88 & 0.74 & 0.69 & 0.74 &  & 0.20 & 0.26 & 0.54 & 0.68 \\
SIHPC(L) & 0.87 & 0.71 & 0.66 & 0.70 &  & 0.17 & 0.20 & 0.46 & 0.62 \\
FastIAMB(L) & 0.09 & 0.04 & -- & -- &  & 0.20 & 0.22 & -- & -- \\
IAMB(L) & 0.86 & 0.78 & -- & -- &  & 0.93 & 0.99 & -- & -- \\
\bottomrule
\end{tabular}
\caption{\textit{(Linear, sparse, $n = 100$)}. True positive rate (TPR) and false discovery rate (FDR) for local graph recovery across methods (rows) and neighborhood radii (columns), averaged over 100 trials. Methods with a trailing (L) are local versions of their global counterparts; for example, HPC(L) is the local version of HPC. FastIAMB(L) and IAMB(L) were only run to radius 2 due to computational constraints.}

%Simulation details: n: 100, p: 200, snr: None, block_sizes: [1, 4, 195], block_degree: [0, 0, 2], connector_degree: [4, 6], block_magnitude: [1. 1. 1.], connector_magnitude: [1. 1.], lmin: 0.01, lmax: 10, radii: [1, 2, 3, 4], qpath_max: 0.2, fdr_local: [0.2, 0.1, 0.1, 0.1], fdr: 0.1, ipss_selector: l1, do_nonlinear: False.'
\label{tab:linear_sparse_n100}
\end{table}

\begin{table}[ht]
\centering
\begin{tabular}{lccccccccc}
\toprule
 & \multicolumn{4}{c}{TPR} & & \multicolumn{4}{c}{FDR} \\
\cmidrule(lr){2-5}\cmidrule(lr){7-10}
Method & $r=1$ & $r=2$ & $r=3$ & $r=4$ &  & $r=1$ & $r=2$ & $r=3$ & $r=4$ \\
\midrule
\textbf{PFS} & 0.83 & 0.47 & 0.30 & 0.20 &  & 0.19 & 0.22 & 0.23 & 0.22 \\
GLasso & 0.00 & 0.00 & 0.00 & 0.01 &  & 0.05 & 0.05 & 0.05 & 0.05 \\
NLasso & 0.00 & 0.00 & 0.00 & 0.00 &  & 0.05 & 0.05 & 0.04 & 0.05 \\
BNWSL & 0.00 & 0.00 & 0.00 & 0.00 &  & 0.04 & 0.04 & 0.04 & 0.04 \\
DSGL & 0.00 & 0.00 & 0.00 & 0.00 &  & 0.03 & 0.03 & 0.03 & 0.03 \\
DSNWSL & 0.00 & 0.00 & 0.00 & 0.00 &  & 0.07 & 0.07 & 0.06 & 0.06 \\
GFCL & 0.00 & 0.00 & 0.01 & 0.01 &  & 0.23 & 0.21 & 0.20 & 0.19 \\
GFCSL & 0.00 & 0.00 & 0.01 & 0.01 &  & 0.17 & 0.17 & 0.15 & 0.13 \\
ARACNE & 0.03 & 0.04 & 0.23 & 0.59 &  & 0.98 & 0.99 & 0.98 & 0.94 \\
HPC & 0.00 & 0.00 & 0.01 & 0.02 &  & 0.34 & 0.33 & 0.31 & 0.27 \\
MMPC & 0.01 & 0.02 & 0.06 & 0.18 &  & 0.91 & 0.90 & 0.87 & 0.81 \\
SIHPC & 0.01 & 0.02 & 0.06 & 0.16 &  & 0.90 & 0.90 & 0.87 & 0.81 \\
HPC(L) & 0.00 & 0.00 & 0.02 & 0.07 &  & 0.52 & 0.51 & 0.48 & 0.42 \\
MMPC(L) & 0.02 & 0.02 & 0.16 & 0.49 &  & 0.99 & 0.99 & 0.96 & 0.91 \\
SIHPC(L) & 0.02 & 0.02 & 0.14 & 0.45 &  & 0.98 & 0.98 & 0.96 & 0.91 \\
\bottomrule
\end{tabular}
\caption{\textit{(Nonlinear, sparse, $n = 100$)}. True positive rate (TPR) and false discovery rate (FDR) for local graph recovery across methods (rows) and neighborhood radii (columns), averaged over 100 trials. Methods with a trailing (L) are local versions of their global counterparts; for example, HPC(L) is the local version of HPC.}

%Simulation details: n: 100, p: 200, snr: 4, block_sizes: [1, 4, 195], block_degree: [0, 0, 2], connector_degree: [4, 2], block_magnitude: [1. 1. 1.], connector_magnitude: [1. 1.], lmin: 0.01, lmax: 10, radii: [1, 2, 3, 4], qpath_max: 0.2, fdr_local: [0.2, 0.05, 0.05, 0.05], fdr: 0.1, ipss_selector: gb, do_nonlinear: True.'
\label{tab:nonlinear_sparse_n100}
\end{table}

\begin{table}[ht]
\centering
\begin{tabular}{lccccccc}
\toprule
 & \multicolumn{3}{c}{TPR} & & \multicolumn{3}{c}{FDR} \\
\cmidrule(lr){2-4}\cmidrule(lr){6-8}
Method & $r=1$ & $r=2$ & $r=3$ &  & $r=1$ & $r=2$ & $r=3$ \\
\midrule
\textbf{PFS} & 0.43 & 0.13 & 0.05 &  & 0.19 & 0.21 & 0.18 \\
GLasso & 0.57 & 0.55 & 0.52 &  & 0.62 & 0.87 & 0.80 \\
NLasso & 0.42 & 0.23 & 0.21 &  & 0.29 & 0.56 & 0.49 \\
BNWSL & 0.01 & 0.00 & 0.00 &  & 0.00 & 0.00 & 0.00 \\
DSGL & 0.00 & 0.00 & 0.00 &  & 0.00 & 0.00 & 0.00 \\
DSNWSL & 0.07 & 0.01 & 0.00 &  & 0.02 & 0.03 & 0.03 \\
GFCL & 0.17 & 0.05 & 0.02 &  & 0.07 & 0.14 & 0.12 \\
GFCSL & 0.16 & 0.05 & 0.02 &  & 0.13 & 0.22 & 0.20 \\
ARACNE & 0.26 & 0.08 & 0.05 &  & 0.20 & 0.38 & 0.37 \\
HPC & 0.17 & 0.05 & 0.02 &  & 0.06 & 0.15 & 0.16 \\
MMPC & 0.19 & 0.06 & 0.03 &  & 0.12 & 0.26 & 0.28 \\
SIHPC & 0.16 & 0.05 & 0.02 &  & 0.08 & 0.21 & 0.24 \\
HPC(L) & 0.22 & 0.07 & 0.05 &  & 0.12 & 0.28 & 0.27 \\
MMPC(L) & 0.24 & 0.10 & 0.09 &  & 0.20 & 0.46 & 0.43 \\
SIHPC(L) & 0.21 & 0.08 & 0.06 &  & 0.16 & 0.37 & 0.39 \\
\bottomrule
\end{tabular}
\caption{\textit{(Linear, dense, $n = 100$)}. True positive rate (TPR) and false discovery rate (FDR) for local graph recovery across methods (rows) and neighborhood radii (columns), averaged over 100 trials. Methods with a trailing (L) are local versions of their global counterparts; for example, HPC(L) is the local version of HPC.}

%Simulation details: n: 100, p: 200, snr: None, block_sizes: [1, 20, 179], block_degree: [0, 5, 10], connector_degree: [20, 5], block_magnitude: [1. 1. 1.], connector_magnitude: [1. 1.], lmin: 0.01, lmax: 10, radii: [1, 2, 3], qpath_max: 0.5, fdr_local: [0.5, 0.15, 0.15], fdr: 0.1, ipss_selector: l1, do_nonlinear: False.'
\label{tab:linear_dense_n100}
\end{table}

\begin{table}[ht]
\centering
\begin{tabular}{lccccccc}
\toprule
 & \multicolumn{3}{c}{TPR} & & \multicolumn{3}{c}{FDR} \\
\cmidrule(lr){2-4}\cmidrule(lr){6-8}
Method & $r=1$ & $r=2$ & $r=3$ &  & $r=1$ & $r=2$ & $r=3$ \\
\midrule
\textbf{PFS} & 0.87 & 0.59 & 0.51 &  & 0.12 & 0.16 & 0.09 \\
GLasso & 0.71 & 0.77 & 0.91 &  & 0.55 & 0.83 & 0.70 \\
NLasso & 0.63 & 0.51 & 0.60 &  & 0.22 & 0.46 & 0.34 \\
BNWSL & 0.79 & 0.76 & 0.86 &  & 0.06 & 0.15 & 0.10 \\
DSGL & 0.56 & 0.45 & 0.45 &  & 0.01 & 0.03 & 0.02 \\
DSNWSL & 0.75 & 0.68 & 0.72 &  & 0.01 & 0.05 & 0.04 \\
GFCL & 0.81 & 0.76 & 0.82 &  & 0.05 & 0.14 & 0.09 \\
GFCSL & 0.82 & 0.78 & 0.87 &  & 0.06 & 0.15 & 0.10 \\
ARACNE & 0.33 & 0.11 & 0.07 &  & 0.09 & 0.18 & 0.16 \\
HPC & 0.44 & 0.30 & 0.27 &  & 0.02 & 0.03 & 0.02 \\
MMPC & 0.40 & 0.25 & 0.21 &  & 0.03 & 0.07 & 0.07 \\
%SIHPC & -- & -- & -- &  & -- & -- & -- \\
HPC(L) & 0.48 & 0.35 & 0.34 &  & 0.03 & 0.06 & 0.04 \\
MMPC(L) & 0.46 & 0.34 & 0.36 &  & 0.08 & 0.21 & 0.19 \\
SIHPC(L) & 0.42 & 0.29 & 0.29 &  & 0.05 & 0.15 & 0.15 \\
\bottomrule
\end{tabular}
\caption{\textit{(Linear, dense, $n = 500$)}. True positive rate (TPR) and false discovery rate (FDR) for local graph recovery across methods (rows) and neighborhood radii (columns), averaged over 100 trials. Methods with a trailing (L) are local versions of their global counterparts; for example, HPC(L) is the local version of HPC.}

%Simulation details: n: 500, p: 200, snr: None, block_sizes: [1, 20, 179], block_degree: [0, 5, 10], connector_degree: [20, 5], block_magnitude: [1. 1. 1.], connector_magnitude: [1. 1.], lmin: 0.01, lmax: 10, radii: [1, 2, 3], qpath_max: 0.2, fdr_local: [0.2, 0.1, 0.1, 0.1], fdr: 0.1, ipss_selector: l1, do_nonlinear: False.'
\label{tab:linear_dense_n500}
\end{table}

\begin{table}[ht]
\centering
\begin{tabular}{lccccccc}
\toprule
 & \multicolumn{3}{c}{TPR} & & \multicolumn{3}{c}{FDR} \\
\cmidrule(lr){2-4}\cmidrule(lr){6-8}
Method & $r=1$ & $r=2$ & $r=3$ &  & $r=1$ & $r=2$ & $r=3$ \\
\midrule
\textbf{PFS} & 0.76 & 0.48 & 0.39 &  & 0.11 & 0.24 & 0.14 \\
GLasso & 0.00 & 0.00 & 0.02 &  & 0.04 & 0.04 & 0.03 \\
NLasso & 0.00 & 0.00 & 0.01 &  & 0.04 & 0.03 & 0.01 \\
BNWSL & 0.01 & 0.02 & 0.12 &  & 0.52 & 0.44 & 0.07 \\
DSGL & 0.00 & 0.00 & 0.01 &  & 0.13 & 0.11 & 0.01 \\
DSNWSL & 0.00 & 0.01 & 0.05 &  & 0.33 & 0.27 & 0.02 \\
GFCL & 0.01 & 0.02 & 0.10 &  & 0.47 & 0.40 & 0.06 \\
GFCSL & 0.01 & 0.02 & 0.13 &  & 0.54 & 0.46 & 0.07 \\
ARACNE & 0.01 & 0.01 & 0.02 &  & 0.91 & 0.82 & 0.22 \\
HPC & 0.00 & 0.00 & 0.01 &  & 0.19 & 0.17 & 0.01 \\
MMPC & 0.01 & 0.01 & 0.06 &  & 0.76 & 0.68 & 0.13 \\
%SIHPC & -- & -- & -- &  & -- & -- & -- \\
HPC(L) & 0.01 & 0.01 & 0.07 &  & 0.76 & 0.66 & 0.06 \\
MMPC(L) & 0.02 & 0.02 & 0.15 &  & 0.91 & 0.83 & 0.21 \\
SIHPC(L) & 0.01 & 0.01 & 0.12 &  & 0.92 & 0.82 & 0.18 \\
\bottomrule
\end{tabular}
\caption{\textit{(Nonlinear, dense, $n = 500$)}. True positive rate (TPR) and false discovery rate (FDR) for local graph recovery across methods (rows) and neighborhood radii (columns), averaged over 100 trials. Methods with a trailing (L) are local versions of their global counterparts; for example, HPC(L) is the local version of HPC.}

%Simulation details: n: 500, p: 200, snr: 4, block_sizes: [1, 20, 179], block_degree: [0, 5, 10], connector_degree: [20, 5], block_magnitude: [1. 1. 1.], connector_magnitude: [1. 1.], lmin: 0.01, lmax: 10, radii: [1, 2, 3], qpath_max: 0.5, fdr_local: [0.2, 0.05, 0.05, 0.05], fdr: 0.1, ipss_selector: gb, do_nonlinear: True.'
\label{tab:nonlinear_dense_n500}
\end{table}

\ifthenelse{\boolean{showfigures}}{
\begin{figure*}[ht!]
\begin{adjustbox}{center}
\includegraphics[
	height=0.275\textheight, 
	width=\textwidth,
	trim={0pt 0pt 0pt 0pt}, clip]
	{./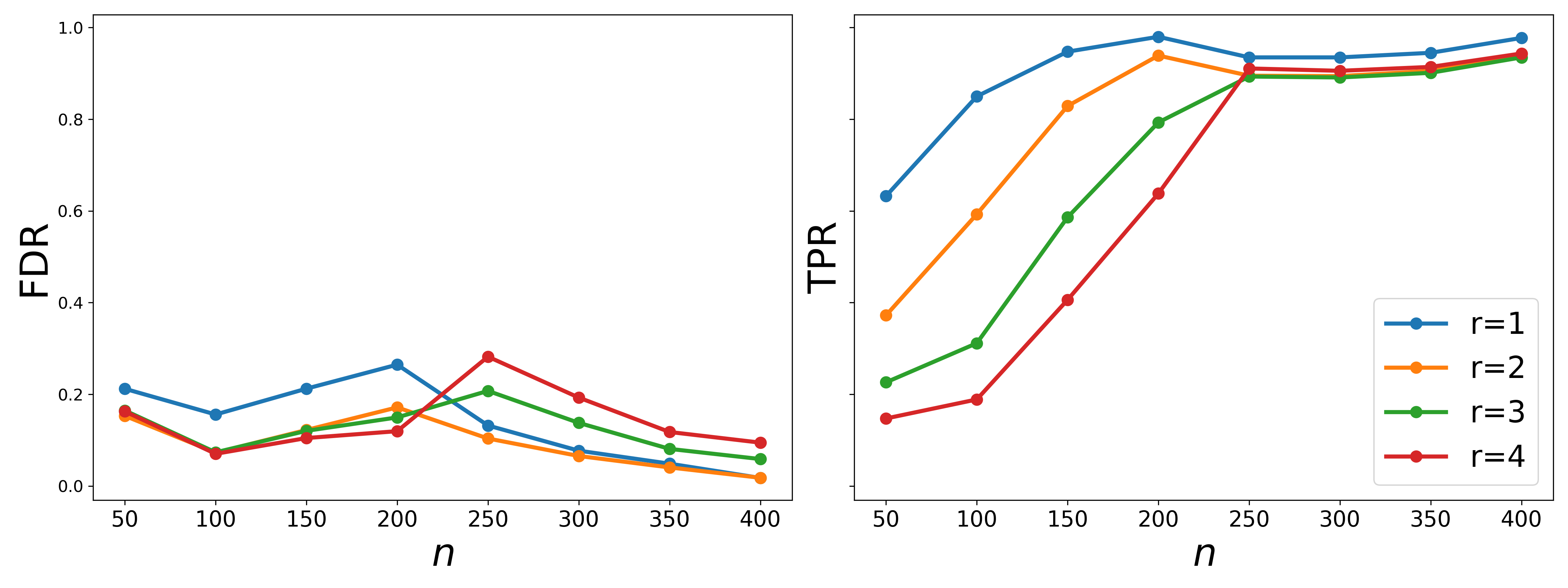}
\end{adjustbox}
\caption{\textit{PFS performance as a function of sample size}. FDR (left) and TPR (right) for local graphs estimated by PFS as the sample size $n$ ranges over $\{50,100,150,200,250,300,350,400\}$. All other simulation settings follow the sparse, linear design in \cref{sup_sec:simulation_designs}. Curves correspond to radii $r \in \{1,2,3,4\}$; reported FDR and TPR values are averaged over 100 simulation trials.}
\label{fig:varying_n}
\end{figure*}
}{}

\clearpage

%%%%%%%%%%%%%%%%%%%%%%%%%%%%%%%%%%%%%%%%%%%%%

\subsection{PFS with other selection methods}\label{sup_sec:qvalue_methods}

%%%%%%%%%%%%%%%%%%%%%%%%%%%%%%%%%%%%%%%%%%%%%

PFS can be combined with any FDR-controlling procedure, since such procedures can be used to construct $q$-values. Throughout this work, we use integrated path stability selection (IPSS) for this purpose due to its favorable properties (see Methods) and strong performance relative to other variable selection methods in previous studies~\cite{ipss,ipss_nonparametric}. In this section, we report PFS results using $q$-values from other approaches in the same five simulation settings discussed above: (i) linear, sparse, $n=100$; (ii) nonlinear, sparse, $n=100$; (iii) linear, dense, $n=100$; (iv) linear, dense, $n=500$; (v) nonlinear, dense, $n=500$. In addition to IPSS, we implemented PFS with the Benjamini--Hochberg (BH) method~\cite{bh}, the Benjamini--Yekutieli (BY) method~\cite{by}, and three versions of model-X knockoffs~\cite{modelX}: knockoffs with importance statistics from generalized linear models (KOGLM), the lasso (KOL1), and random forests (KORF). Unlike IPSS and knockoffs, BH and BY require $p$-values; in our experiments, we compute $p$-values using ordinary least squares (OLS) regression when $n>p$. When $n<p$, we compute $p$-values using marginal, univariate regressions, which are widely used in practice, though they reflect marginal rather than conditional associations. BH and BY are implemented with the Python package \texttt{statsmodels}~\cite{statsmodels}. The knockoff methods are implemented with the R package \texttt{knockoff}.

Tables \ref{tab:qval_linear_sparse_n100}--\ref{tab:qval_nonlinear_dense_n500} show that IPSS consistently outperforms other $q$-value constructions across simulation settings. In all experiments, the PFS framework is held fixed across methods: each method uses the same pathwise $q$-value threshold $q_{\mathrm{path}}^*$ and local thresholds $(q_1^*, q_2^*, \dots)$ specified in \cref{sup_sec:method_settings}. In the two nonlinear experiments, PFS with IPSS significantly outperforms PFS with other selection methods---including KORF, which uses random forests and is therefore also capable of capturing nonlinear relationships---maintaining the highest TPR across all radii while controlling the FDR in both sparse and dense cases. In the two linear, $n=100$ settings, PFS with BH and BY achieve high TPR but also exceedingly high FDR, while in the linear, dense, $n=500$ experiment, both methods control FDR but have a lower TPR than PFS with IPSS. PFS with the three knockoff approaches has lower TPR and higher FDR than IPSS in the linear, sparse, $n=100$ experiment; similar or lower TPR and higher FDR than IPSS in the linear, dense, $n=100$ experiment; and lower TPR and slightly lower FDR than IPSS in the linear, dense, $n=500$ experiment.

\begin{table}[ht]
\centering
\begin{tabular}{lccccccccc}
\toprule
 & \multicolumn{4}{c}{TPR} & & \multicolumn{4}{c}{FDR} \\
\cmidrule(lr){2-5}\cmidrule(lr){7-10}
Method & $r=1$ & $r=2$ & $r=3$ & $r=4$ &  & $r=1$ & $r=2$ & $r=3$ & $r=4$ \\
\midrule
\textbf{IPSS} & 0.85 & 0.59 & 0.31 & 0.19 &  & 0.16 & 0.07 & 0.07 & 0.07 \\
KOGLM & 0.47 & 0.11 & 0.05 & 0.03 &  & 0.29 & 0.26 & 0.26 & 0.26 \\
KOL1 & 0.43 & 0.11 & 0.05 & 0.03 &  & 0.25 & 0.23 & 0.23 & 0.23 \\
KORF & 0.33 & 0.11 & 0.06 & 0.03 &  & 0.46 & 0.46 & 0.45 & 0.45 \\
BH & 0.97 & 0.96 & 0.95 & 0.92 &  & 0.96 & 0.99 & 0.98 & 0.97 \\
BY & 0.96 & 0.95 & 0.91 & 0.86 &  & 0.94 & 0.98 & 0.98 & 0.96 \\
\bottomrule
\end{tabular}
\caption{\textit{PFS with different selection methods (linear, sparse, $n = 100$)}. True positive rate (TPR) and false discovery rate (FDR) for local graph recovery across methods (rows) and neighborhood radii (columns), averaged over 100 trials. PFS with integrated path stability selection~\cite{ipss,ipss_nonparametric} (IPSS) is the method used throughout this work and the default option in the \texttt{localgraph} software package. Other methods are: model-X knockoffs~\cite{modelX} with importance statistics based on (i) generalized linear models (KOGLM), (ii) the lasso (KOL1), and (iii) random forests (KORF); the Benjamini--Hochberg method~\cite{bh} (BH); the Benjamini--Yekutieli method~\cite{by} (BY).}

%Simulation details: n: 100, p: 200, snr: None, block_sizes: [1, 4, 195], block_degree: [0, 0, 2], connector_degree: [4, 6], block_magnitude: [1. 1. 1.], connector_magnitude: [1. 1.], lmin: 0.01, lmax: 10, radii: [1, 2, 3, 4], qpath_max: 0.2, fdr_local: [0.2, 0.1, 0.1, 0.1], fdr: 0.1, ipss_selector: l1, do_nonlinear: False.'
\label{tab:qval_linear_sparse_n100}
\end{table}

\begin{table}[ht]
\centering
\begin{tabular}{lccccccccc}
\toprule
 & \multicolumn{4}{c}{TPR} & & \multicolumn{4}{c}{FDR} \\
\cmidrule(lr){2-5}\cmidrule(lr){7-10}
Method & $r=1$ & $r=2$ & $r=3$ & $r=4$ &  & $r=1$ & $r=2$ & $r=3$ & $r=4$ \\
\midrule
\textbf{IPSS} & 0.83 & 0.47 & 0.30 & 0.20 &  & 0.19 & 0.22 & 0.23 & 0.22 \\
KOGLM & 0.00 & 0.00 & 0.00 & 0.00 &  & 0.01 & 0.01 & 0.01 & 0.01 \\
KOL1 & 0.00 & 0.00 & 0.00 & 0.00 &  & 0.00 & 0.00 & 0.00 & 0.00 \\
KORF & 0.13 & 0.03 & 0.02 & 0.01 &  & 0.11 & 0.11 & 0.11 & 0.11 \\
BH & 0.01 & 0.01 & 0.03 & 0.06 &  & 0.13 & 0.12 & 0.13 & 0.13 \\
BY & 0.00 & 0.00 & 0.01 & 0.01 &  & 0.03 & 0.03 & 0.03 & 0.03 \\
\bottomrule
\end{tabular}
\caption{\textit{PFS with different selection methods (nonlinear, sparse, $n = 100$)}. True positive rate (TPR) and false discovery rate (FDR) for local graph recovery across methods (rows) and neighborhood radii (columns), averaged over 100 trials. PFS with integrated path stability selection~\cite{ipss,ipss_nonparametric} (IPSS) is the method used throughout this work and the default option in the \texttt{localgraph} software package. Other methods are: model-X knockoffs~\cite{modelX} with importance statistics based on (i) generalized linear models (KOGLM), (ii) the lasso (KOL1), and (iii) random forests (KORF); the Benjamini--Hochberg method~\cite{bh} (BH); the Benjamini--Yekutieli method~\cite{by} (BY).}

%Simulation details: n: 100, p: 200, snr: 4, block_sizes: [1, 4, 195], block_degree: [0, 0, 2], connector_degree: [4, 2], block_magnitude: [1. 1. 1.], connector_magnitude: [1. 1.], lmin: 0.01, lmax: 10, radii: [1, 2, 3, 4], qpath_max: 0.2, fdr_local: [0.2, 0.05, 0.05, 0.05], fdr: 0.1, ipss_selector: gb, do_nonlinear: True.'
\label{tab:qval_nonlinear_sparse_n100}
\end{table}

\begin{table}[ht]
\centering
\begin{tabular}{lccccccc}
\toprule
 & \multicolumn{3}{c}{TPR} & & \multicolumn{3}{c}{FDR} \\
\cmidrule(lr){2-4}\cmidrule(lr){6-8}
Method & $r=1$ & $r=2$ & $r=3$ &  & $r=1$ & $r=2$ & $r=3$ \\
\midrule
\textbf{IPSS} & 0.43 & 0.13 & 0.05 &  & 0.19 & 0.21 & 0.18 \\
KOGLM & 0.40 & 0.07 & 0.02 &  & 0.34 & 0.37 & 0.36 \\
KOL1 & 0.41 & 0.07 & 0.02 &  & 0.37 & 0.41 & 0.38 \\
KORF & 0.15 & 0.02 & 0.00 &  & 0.34 & 0.36 & 0.36 \\
BH & 0.93 & 0.81 & 0.79 &  & 0.88 & 0.99 & 0.95 \\
BY & 0.82 & 0.68 & 0.67 &  & 0.85 & 0.98 & 0.95 \\
\bottomrule
\end{tabular}
\caption{\textit{PFS with different selection methods (linear, dense, $n = 100$)}. True positive rate (TPR) and false discovery rate (FDR) for local graph recovery across methods (rows) and neighborhood radii (columns), averaged over 100 trials. PFS with integrated path stability selection~\cite{ipss,ipss_nonparametric} (IPSS) is the method used throughout this work and the default option in the \texttt{localgraph} software package. Other methods are: model-X knockoffs~\cite{modelX} with importance statistics based on (i) generalized linear models (KOGLM), (ii) the lasso (KOL1), and (iii) random forests (KORF); the Benjamini--Hochberg method~\cite{bh} (BH); the Benjamini--Yekutieli method~\cite{by} (BY).}

%Simulation details: n: 100, p: 200, snr: None, block_sizes: [1, 20, 179], block_degree: [0, 5, 10], connector_degree: [20, 5], block_magnitude: [1. 1. 1.], connector_magnitude: [1. 1.], lmin: 0.01, lmax: 10, radii: [1, 2, 3], qpath_max: 0.5, fdr_local: [0.5, 0.15, 0.15], fdr: 0.1, ipss_selector: l1, do_nonlinear: False.'
\label{tab:qval_linear_dense_n100}
\end{table}

\begin{table}[ht]
\centering
\begin{tabular}{lccccccc}
\toprule
 & \multicolumn{3}{c}{TPR} & & \multicolumn{3}{c}{FDR} \\
\cmidrule(lr){2-4}\cmidrule(lr){6-8}
Method & $r=1$ & $r=2$ & $r=3$ &  & $r=1$ & $r=2$ & $r=3$ \\
\midrule
\textbf{IPSS} & 0.87 & 0.59 & 0.51 &  & 0.12 & 0.16 & 0.09 \\
KOGLM & 0.35 & 0.06 & 0.01 &  & 0.06 & 0.07 & 0.06 \\
KOL1 & 0.31 & 0.05 & 0.01 &  & 0.06 & 0.06 & 0.06 \\
KORF & 0.02 & 0.00 & 0.00 &  & 0.02 & 0.02 & 0.02 \\
BH & 0.57 & 0.28 & 0.20 &  & 0.19 & 0.26 & 0.16 \\
BY & 0.26 & 0.07 & 0.03 &  & 0.03 & 0.04 & 0.03 \\
\bottomrule
\end{tabular}
\caption{\textit{PFS with different selection methods (linear, dense, $n = 500$)}. True positive rate (TPR) and false discovery rate (FDR) for local graph recovery across methods (rows) and neighborhood radii (columns), averaged over 100 trials. PFS with integrated path stability selection~\cite{ipss,ipss_nonparametric} (IPSS) is the method used throughout this work and the default option in the \texttt{localgraph} software package. Other methods are: model-X knockoffs~\cite{modelX} with importance statistics based on (i) generalized linear models (KOGLM), (ii) the lasso (KOL1), and (iii) random forests (KORF); the Benjamini--Hochberg method~\cite{bh} (BH); the Benjamini--Yekutieli method~\cite{by} (BY).}

%Simulation details: n: 500, p: 200, snr: None, block_sizes: [1, 20, 179], block_degree: [0, 5, 10], connector_degree: [20, 5], block_magnitude: [1. 1. 1.], connector_magnitude: [1. 1.], lmin: 0.01, lmax: 10, radii: [1, 2, 3], qpath_max: 0.2, fdr_local: [0.2, 0.1, 0.1, 0.1], fdr: 0.1, ipss_selector: l1, do_nonlinear: False.'
\label{tab:qval_linear_dense_n500}
\end{table}

\begin{table}[ht]
\centering
\begin{tabular}{lccccccc}
\toprule
 & \multicolumn{3}{c}{TPR} & & \multicolumn{3}{c}{FDR} \\
\cmidrule(lr){2-4}\cmidrule(lr){6-8}
Method & $r=1$ & $r=2$ & $r=3$ &  & $r=1$ & $r=2$ & $r=3$ \\
\midrule
\textbf{IPSS} & 0.76 & 0.48 & 0.39 &  & 0.11 & 0.24 & 0.14 \\
KOGLM & 0.00 & 0.00 & 0.00 &  & 0.00 & 0.00 & 0.00 \\
KOL1 & 0.00 & 0.00 & 0.00 &  & 0.00 & 0.00 & 0.00 \\
KORF & 0.03 & 0.00 & 0.00 &  & 0.02 & 0.02 & 0.02 \\
BH & 0.00 & 0.00 & 0.00 &  & 0.17 & 0.14 & 0.04 \\
BY & 0.00 & 0.00 & 0.00 &  & 0.03 & 0.03 & 0.03 \\
\bottomrule
\end{tabular}
\caption{\textit{PFS with different selection methods (nonlinear, dense, $n = 500$)}. True positive rate (TPR) and false discovery rate (FDR) for local graph recovery across methods (rows) and neighborhood radii (columns), averaged over 100 trials. PFS with integrated path stability selection~\cite{ipss,ipss_nonparametric} (IPSS) is the method used throughout this work and the default option in the \texttt{localgraph} software package. Other methods are: model-X knockoffs~\cite{modelX} with importance statistics based on (i) generalized linear models (KOGLM), (ii) the lasso (KOL1), and (iii) random forests (KORF); the Benjamini--Hochberg method~\cite{bh} (BH); the Benjamini--Yekutieli method~\cite{by} (BY).}

%Simulation details: n: 500, p: 200, snr: 4, block_sizes: [1, 20, 179], block_degree: [0, 5, 10], connector_degree: [20, 5], block_magnitude: [1. 1. 1.], connector_magnitude: [1. 1.], lmin: 0.01, lmax: 10, radii: [1, 2, 3], qpath_max: 0.5, fdr_local: [0.2, 0.05, 0.05, 0.05], fdr: 0.1, ipss_selector: gb, do_nonlinear: True.'
\label{tab:qval_nonlinear_dense_n500}
\end{table}

\clearpage

%%%%%%%%%%%%%%%%%%%%%%%%%%%%%%%%%%%%%%%%%%%%%

\section{Additional details: Environmental and social drivers of cancer}\label{sup_sec:eqi}

%%%%%%%%%%%%%%%%%%%%%%%%%%%%%%%%%%%%%%%%%%%%%

%%%%%%%%%%%%%%%%%%%%%%%%%%%%%%%%%%%%%%%%%%%%%

\subsection{Data sources}\label{sup_sec:eqi_data}

%%%%%%%%%%%%%%%%%%%%%%%%%%%%%%%%%%%%%%%%%%%%%

Cancer incidence and mortality rates, along with smoking prevalence and cancer screening data, were obtained from the State Cancer Profiles platform, a joint initiative of the National Cancer Institute (NCI) and Centers for Disease Control and Prevention (CDC). This platform aggregates geographically resolved cancer statistics from national systems for public health planning. Cancer incidence and mortality were aggregated over 2017--2022, while screening and smoking data were drawn from 2017--2019. County-level demographic data---including race/ethnicity, sex, and age---were sourced from the U.S. Census Bureau using the 2000 decennial census.

The majority of covariates were drawn from the Environmental Protection Agency's (EPA) Environmental Quality Index (EQI) dataset, specifically the 2000--2005 release\cite{exposome_eqi}. The EQI compiles county-level indicators across five domains---air, water, land, built environment, and sociodemographic factors---from diverse federal and commercial sources, including the EPA, U.S. Geological Survey, Department of Agriculture, and Federal Bureau of Investigation (FBI). To improve comparability across counties, variables were often log-transformed to reduce skew, kriged to fill spatial gaps, or rescaled for consistency in valence. Additional details about the EQI dataset are provided in ``\textit{Construction of an environmental quality index for public health research}," by Messer et al.\cite{exposome_eqi}. The period 2000--2005 was chosen for most covariates to reflect the latency of cancer development~\cite{exposome_eqi,latency_breast,latency_lung1,latency_ovarian}. 

%%%%%%%%%%%%%%%%%%%%%%%%%%%%%%%%%%%%%%%%%%%%%

\subsection{Data cleaning}\label{sup_sec:eqi_cleaning}

%%%%%%%%%%%%%%%%%%%%%%%%%%%%%%%%%%%%%%%%%%%%%

The raw dataset consisted of $n = 3141$ counties and $p = 240$ variables aggregated from the sources described above. Indiana and Kansas were excluded because they did not report incidence data in the State Cancer Profiles dataset. We also removed Union County, Florida due to extreme outlier values in both cancer incidence and mortality. Some covariates were sparsely reported across large regions: variables missing from five or more states, as well as variables missing in at least $1500$ counties were removed. This resulted in our final dataset consisting of $n = 2857$ counties and $p = 165$ variables.

%%%%%%%%%%%%%%%%%%%%%%%%%%%%%%%%%%%%%%%%%%%%%

\subsection{PFS implementation}\label{sup_sec:eqi_pfs}

%%%%%%%%%%%%%%%%%%%%%%%%%%%%%%%%%%%%%%%%%%%%%

We applied PFS using random forest variable importance scores, specifically mean decrease in impurity (MDI), as implemented in the \texttt{ipss} Python package with its default settings\cite{ipss,ipss_nonparametric}. Connections between environmental exposures tended to be slightly stronger than those between socioeconomic variables. Thus, to improve interpretability in specific neighborhoods and in the estimated local graph as a whole, we customized $q$-value thresholds for a few variables based on prior relevance or notable patterns observed during analysis. The neighborhood $q$-value threshold for incidence was set to $0.04$. We also increased thresholds for Hispanic ($0.01$), Poverty ($0.011$), and Smoking ($0.01$). In the case of Education, where multiple variables had the same lowest $q$-value, we broke ties using expected false positive (efp) scores\cite{ipss_nonparametric}. For all remaining variables, the default neighborhood $q$-value threshold was $0.008$. In the case of graphical lasso, all variables were standardized to have mean $0$ and variance $1$. Both the PFS and graphical lasso graphs were estimated with a fixed random seed to ensure reproducibility.

%%%%%%%%%%%%%%%%%%%%%%%%%%%%%%%%%%%%%%%%%%%%%

\subsection{Supporting analyses}\label{sup_sec:eqi_additional}

%%%%%%%%%%%%%%%%%%%%%%%%%%%%%%%%%%%%%%%%%%%%%

Below we present additional tables and figures that complement the main analysis in \cref{sec:eqi}. These include carcinogenicity classifications of key environmental exposures identified in the local graph around cancer incidence (\cref{tab:exposures}), partial correlation results supporting the potential mediating role of poverty in racial disparities in mortality (\cref{tab:correlations}), and a nonlinear visualization of associations among environmental factors (\cref{fig:eqi_nonlinear}).

\begin{table*}[ht]
\centering
\small
\begin{tabular}{lllll}
\toprule
Exposure & Formula & IARC & IRIS & Common sources \\
\midrule
PM$_{2.5}$ & -- & 1 & -- & Combustion (gasoline, diesel, oil, wood) \\
Acrylonitrile & C$_3$H$_3$N & 1 & 2 & Plastic and rubber manufacturing \\
TCE & C$_2$HCl$_3$ & 1 & 1 & Industrial degreasing \\
Diesel & -- & 1 & 2 & Diesel engine exhaust \\
EtO & C$_2$H$_4$O & 1 & 1 & Chemical manufacturing \\
Hydrazine & N$_2$H$_4$ & 2A & 2 & Rocket fuels, polymer foam production \\
Quinoline & C$_9$H$_7$N & 2B & 2 & Chemical manufacturing (dyes, pesticides, solvents) \\
DEHP & C$_24$H$_38$O$_4$ & 2B & 2 & Plastic manufacturing \\
DBCP & C$_3$H$_5$Br$_2$Cl & 2B & -- & Pesticides (banned in contiguous U.S. in 1979) \\
Hg & Hg & 3 & -- & Industrial manufacturing, pesticides \\
SO$_2$ & SO$_2$ & 3 & -- & Combustion (coal and oil) \\
%HCBD & C$_4$Cl$_6$ & 3 & 3 & Rubber manufacturing \\
%SO$_4$ & SO$_4$ & -- & -- & Secondary pollutant from SO2 \\
%HCCPD & C$_5$Cl$_6$ & -- & 4 & Pesticides \\
%O$_3$ & O$_3$ & -- & -- & Secondary pollutant from NOx + VOCs \\
\bottomrule
\end{tabular}
\caption{\textit{Carcinogenicity classifications of environmental exposures identified by PFS in the local graph around cancer incidence (\cref{fig:exposome_pfs})}. Many of the exposures connected to cancer incidence are classified as known or likely human carcinogens. IARC categories: Group 1 = carcinogenic to humans; 2A = probably carcinogenic; 2B = possibly carcinogenic; 3 = unclassifiable. IRIS categories: 1 = carcinogenic to humans; 2 = likely to be carcinogenic. Dashes indicate no classification is available.}
\label{tab:exposures}
\end{table*}

\begin{table*}[ht]
\centering
\small
\begin{tabular}{lllcc}
\toprule
Target & Covariate & Adjusted for & $\rho$ ($p$-value) & $\rho_{\text{adj}}$ ($p$-value) \\
\midrule
Incidence & Hispanic & Screening & $-0.340$ ($<10^{-10}$) & $-0.324$ ($<10^{-10}$) \\
Incidence & Hispanic & Mercury & $-0.340$ ($<10^{-10}$) & $-0.174$ ($<10^{-10}$) \\
Mortality & Black & Poverty & $0.228$ ($<10^{-10}$) & $0.053$ ($0.005$) \\
Mortality & Poverty & Black & $0.432$ ($<10^{-10}$) & $0.380$ ($<10^{-10}$) \\
Mortality & White & Poverty & $-0.059$ ($0.002$) & $0.249$ ($<10^{-10}$) \\
Mortality & Poverty & White & $0.432$ ($<10^{-10}$) & $0.484$ ($<10^{-10}$) \\
%Colonoscopy & Hispanic & Incidence & $-0.345$ ($<10^{-10}$) & $-0.328$ ($<10^{-10}$) \\
\bottomrule
\end{tabular}
\caption{\textit{Partial correlations between variables in the county-level cancer study}. For each target-covariate pair, we report the Pearson correlation $\rho$ and the partial correlation $\rho_{\text{adj}}$ after adjusting for a third variable. These results support the potential mediating role of poverty in racial disparities in mortality, and suggest that reduced mercury exposure may partially explain lower cancer incidence in counties with large Hispanic populations, whereas screening differences do not.}
\label{tab:correlations}
\end{table*}

\ifthenelse{\boolean{showfigures}}{
\begin{figure*}[ht!]
\begin{adjustbox}{center}
\includegraphics[
	height=0.35\textheight, 
	width=\textwidth,
	trim={0pt 0pt 0pt 0pt}, clip]
	{./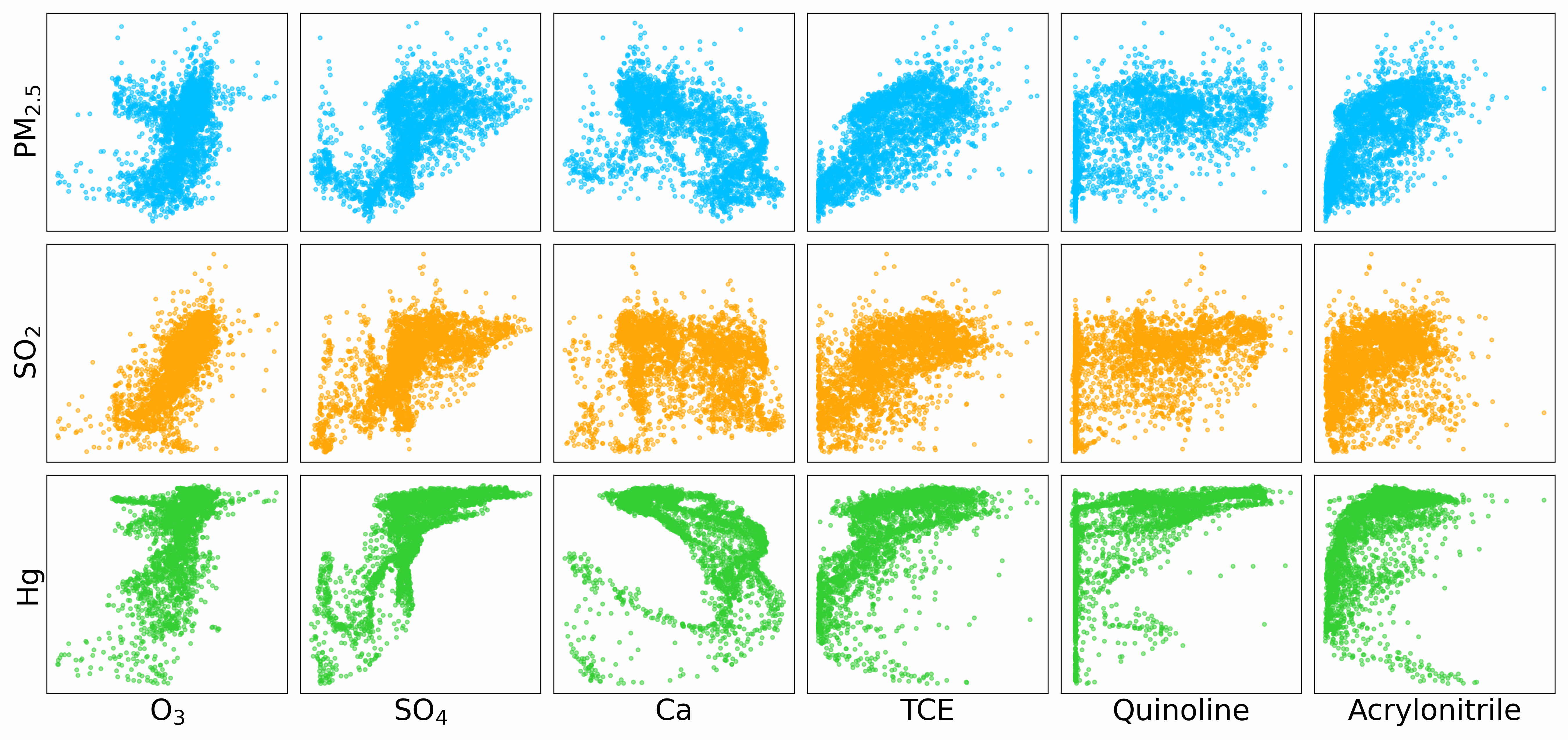}
\end{adjustbox}
\caption{\textit{Nonlinear associations among environmental exposures}. Scatter plots show marginal pairwise relationships between PM$_{2.5}$ (row 1), SO$_2$ (row 2), and Hg (row 3), and six other exposures---O$_3$, SO$_4$, Ca, TCE, Quinoline, and Acrylonitrile---that are present in the local graph in \cref{fig:exposome_pfs}. Several of the relationships exhibit nonlinear patterns, including strong curvature. These findings highlight the limitations of linear modeling assumptions and support the use of nonparametric methods for analyzing high-dimensional environmental data.}
\label{fig:eqi_nonlinear}
\end{figure*}
}{}

\clearpage

%%%%%%%%%%%%%%%%%%%%%%%%%%%%%%%%%%%%%%%%%%%%%

\section{Additional details: Cross-modal pathways in breast cancer}\label{sup_sec:bc}

%%%%%%%%%%%%%%%%%%%%%%%%%%%%%%%%%%%%%%%%%%%%%

%%%%%%%%%%%%%%%%%%%%%%%%%%%%%%%%%%%%%%%%%%%%%

\subsection{Data sources}\label{sup_sec:bc_data}

%%%%%%%%%%%%%%%%%%%%%%%%%%%%%%%%%%%%%%%%%%%%%

All data were obtained from The Cancer Genome Atlas breast cancer cohort (TCGA-BRCA) and downloaded from the publicly available LinkedOmics platform~\cite{linkedomics}. The dataset includes gene expression measured by bulk RNA sequencing (RNAseq), microRNA (miRNA) expression, protein abundance profiled using reverse phase protein arrays (RPPA), and clinical information. Bulk RNAseq data are provided at the gene level, normalized as log$_2$(RSEM $+ 1$) expression values. The raw RNAseq matrix contains $1093$ tumor samples across $20{,}155$ genes. miRNA data, derived from the Illumina HiSeq platform and provided as log$_2$(RPM $+ 1$) normalized values, include $755$ samples across $823$ miRNAs. RPPA data contain log-normalized protein measurements across $887$ samples and $175$ proteins. The three clinical target variables are (i) pathologic stage (stage I, II, III, or IV), (ii) histological type, filtered to the two most common categories, namely invasive ductal carcinoma (IDC) and invasive lobular carcinoma (ILC), and (iii) survival status at last follow-up (in the cleaned dataset, $76$ patients were deceased and $471$ were alive).

%%%%%%%%%%%%%%%%%%%%%%%%%%%%%%%%%%%%%%%%%%%%%

\subsection{Data cleaning}\label{sup_sec:bc_cleaning}

%%%%%%%%%%%%%%%%%%%%%%%%%%%%%%%%%%%%%%%%%%%%%

We retained genes whose mean expression or variance exceeded the $75$th percentile of the gene-wise distributions, thereby removing genes with relatively low expression and variability. To integrate the modalities, we first identified samples with complete measurements across the RNAseq, miRNA, and RPPA data, as well as the three clinical variables. We then removed features with absolute pairwise correlation exceeding $0.999$, retaining only one representative column from each highly correlated pair. Columns containing any missing values were removed, and samples with missing values in the response variables were also excluded. After these steps, the final analysis set consists of $n=547$ tumor samples with $p=10{,}744$ features, spanning $9785$ genes, $819$ miRNAs, $137$ proteins, and the three clinical targets.

%%%%%%%%%%%%%%%%%%%%%%%%%%%%%%%%%%%%%%%%%%%%%

\subsection{PFS implementation}\label{sup_sec:bc_pfs}

%%%%%%%%%%%%%%%%%%%%%%%%%%%%%%%%%%%%%%%%%%%%%

PFS was implemented using IPSS with variable importance scores computed from gradient boosting using XGBoost~\cite{xgboost}, which is the default choice in the \texttt{localgraph} package. Due to the high dimensionality of the data, we increased the number of subsamples in the IPSS algorithm to $B = 200$ (default $B = 100$) to improve robustness. The IPSS preselection expansion factor was set to $2.5$ (default $1.5$) to retain a broader pool of candidate features before final selection. All other IPSS hyperparameters were left at their default settings.

The maximum radius in the PFS algorithm was set to $3$. The pathwise $q$-value threshold $q_{\text{path}}^{*}$ was set to $1$, effectively delegating all thresholding to the neighborhood level. For the three clinical target variables, neighborhood $q$-value thresholds were set to $0.25$. For non-target features, an intermodal $q$-value threshold of $0.035$ was applied; that is, an edge was included between nodes from different modalities if its $q$-value was below this cutoff. For intramodal edges---that is, for edges between nodes of the same type---all nodes in $S_{1}(V_{0})$ had an intramodal neighborhood $q$-value threshold of $0.025$, while all nodes in $S_{2}(V_{0})$ had an intramodal neighborhood $q$-value threshold of $0.0175$.

%%%%%%%%%%%%%%%%%%%%%%%%%%%%%%%%%%%%%%%%%%%%%

\subsection{Supporting analyses}\label{sup_sec:bc_additional}

%%%%%%%%%%%%%%%%%%%%%%%%%%%%%%%%%%%%%%%%%%%%%

In addition to reviewing the literature, we performed two post hoc validation analyses to assess the biological coherence of structures in the local graph estimated by PFS (\cref{fig:breast_cancer}). These analyses were not used to construct the graph, tune parameters, or select features, but rather to provide external statistical validation of the estimated graph.

\textit{Gene set ORA}. For selected gene clusters extracted from the estimated local graph, we conducted over-representation analysis (ORA) using curated gene sets from the Molecular Signatures Database (MSigDB) C2 collection (v2025.1), which includes gene sets derived from pathway databases, genetic perturbation experiments, and the biomedical literature~\cite{msigdb}. ORA was performed using the \texttt{gseapy}~\cite{gseapy} implementation of Fisher's exact test, with the background defined as all $9785$ genes in the dataset to which PFS was applied. For each cluster, gene sets were ranked by FDR-adjusted $p$-value. Additional details and results, including lists of genes in each cluster, are in \cref{sup_sec:bc_additional}.

\textit{Protein-protein interaction support}. To assess whether protein-protein edges inferred by PFS are supported by existing biological knowledge, we compared inferred protein-protein edges against the STRING v12.0 protein interaction database~\cite{string}. We extracted all protein-protein edges from the estimated local graph and recorded the number of edges with a STRING interaction score of at least 400, which is the database's default threshold for moderate to high-confidence interactions. To evaluate enrichment relative to chance, we constructed a null distribution by repeatedly sampling the same number of protein pairs uniformly at random from the set of all measured proteins and computing the number of sampled pairs exceeding the STRING score threshold. This procedure was repeated $100{,}000$ times to obtain an empirical null distribution. Enrichment was quantified by the fold increase in supported edges relative to the null expectation, and significance was assessed using an empirical $p$-value. The same procedure was applied to graphs inferred by alternative methods for comparison. Detailed results are reported in \cref{sup_sec:bc_additional}.

We report additional results from over-representation analyses (ORAs) conducted on gene modules extracted from the estimated local graph in the TCGA breast cancer study (\cref{sec:breast_cancer}). These analyses provide set-level statistical validation of the molecular neighborhoods surrounding the clinical target variables, complementing the literature-based interpretation of individual genes and edges presented in the main text.

For each module, ORA was performed using curated gene sets from MSigDB (C2 collection). Reported results correspond to the top-ranked enriched gene sets for each module, ordered by FDR-adjusted $p$-value. Only gene sets with overlap at least three were considered. Here, \emph{overlap} denotes the number of genes shared between a given module and a gene set, relative to the total size of that gene set. For modules exhibiting multiple related enrichments (e.g., HOOK3), we report representative results capturing distinct biological themes.

We report results for the following clusters of genes in the local graph estimated by PFS (\cref{fig:breast_cancer}); each cluster is named by a variable in the cluster that links directly to a target in the estimated graph: ANKLE2 cluster (ANKLE2, GOLGA3, EP400, POLE, SETD1B, MLL2, C12orf51, ULK1, SFRS8); HOOK3 cluster (HOOK3, QKI, SGK196, MYST3, MAP4K5, FNTA, HGSNAT, AGPAT6, GOLGA7, IKBKB, VDAC3, WHSC1L1, SLC20A2); miR-210 cluster (P4HA1, ISCU, RIC8A, CA9, EPS8L2, BTNL9, NDRG1, CD151, PDDC1, PKP3, PHRF1, BET1L, NAP1L4); CDH1 cluster (CDH1, CES2, ABHD1, CES8, KIAA0174, AP1G1, USP10, DYNC1LI2, CTCF, ATXN1L, ATMIN, AIP, TERF2IP, GABARAPL2); and miR-133a-1 cluster (FBXO2, ACTC1, DES, MYH11, LOC728264).

\begin{table*}[ht]
\centering
\small
\begin{tabular}{lll}
\toprule
Cluster & Gene set & Adjusted $p$-value \\
\midrule
ANKLE2 & NIKOLSKY\_BREAST\_CANCER\_12Q24\_AMPLICON & $8.6\times10^{-13}$ \\

HOOK3 & NIKOLSKY\_BREAST\_CANCER\_8P12\_P11\_AMPLICON & $3.3\times10^{-13}$ \\
HOOK3 & NIKOLSKY\_MUTATED\_AND\_AMPLIFIED\_IN\_BREAST\_CANCER & $9.8\times10^{-3}$ \\
HOOK3 & CREIGHTON\_ENDOCRINE\_THERAPY\_RESISTANCE\_5 & $4.8\times10^{-2}$ \\

CDH1 & PROVENZANI\_METASTASIS\_UP & $1.5\times10^{-4}$ \\

miR-210 & BUFFA\_HYPOXIA\_METAGENE & $5.4\times10^{-3}$ \\

miR-133a-1 & REACTOME\_MUSCLE\_CONTRACTION & $3.5\times10^{-3}$ \\
\bottomrule
\end{tabular}
\caption{\textit{Over-representation analysis (ORA) results for gene modules extracted from the estimated local graph in TCGA breast cancer data}. For each module, we report the top enriched gene sets ranked by FDR-adjusted $p$-value. Only gene sets with minimum overlap 3 were considered, where \textit{overlap} is defined as the number of genes shared between the module and the corresponding gene set.}
\label{tab:bc_ora}
\end{table*}

\begin{table*}[ht]
\centering
\begin{tabular}{lccccc}
\toprule
 & Total & Supported & Expected & Fold enrichment & $p$-value \\
\midrule
\textbf{PFS} & 34 & 22 & 6.82 & 3.228 & $10^{-5}$ \\
ARACNE & 25 & 5 & 5.01 & 0.998 & 0.580 \\
GFCSL & 112 & 15 & 22.43 & 0.669 & 0.974 \\
GFCSL(NPN) & 388 & 66 & 77.76 & 0.849 & 0.942 \\
DSNWSL & 0 & 0 & 0 & 0 & -- \\
DSNWSL(NPN) & 0 & 0 & 0 & 0 & -- \\
NLasso & 0 & 0 & 0 & 0 & -- \\
NLasso(NPN) & 0 & 0 & 0 & 0 & -- \\
\bottomrule
\end{tabular}
\caption{\textit{Support of inferred protein-protein interactions}. For each method, we report the total number of inferred protein-protein edges, the number of inferred edges supported by a STRING~\cite{string} score of at least 400 (the default STRING threshold for moderate-to-high confidence), the expected number of supported edges under a null model based on randomly sampled protein pairs, the corresponding fold enrichment, and an empirical $p$-value computed from $100{,}000$ random draws. Methods with zero inferred protein-protein edges are shown for completeness.}
\label{tab:bc_string}
\end{table*}

\clearpage

%%%%%%%%%%%%%%%%%%%%%%%%%%%%%%%%%%%%%%%%%%%%%

\section{Additional details: Brain networks and cognition}\label{sup_sec:hcp}

%%%%%%%%%%%%%%%%%%%%%%%%%%%%%%%%%%%%%%%%%%%%%

%%%%%%%%%%%%%%%%%%%%%%%%%%%%%%%%%%%%%%%%%%%%%

\subsection{Data sources}\label{sup_sec:hcp_data}

%%%%%%%%%%%%%%%%%%%%%%%%%%%%%%%%%%%%%%%%%%%%%

Subject-level data together with the official Human Connectome Project~\cite{hcp} (HCP) data dictionary were obtained from the HCP Young Adult (S1200) release and downloaded from the public repository \url{https://balsa.wustl.edu}.

%%%%%%%%%%%%%%%%%%%%%%%%%%%%%%%%%%%%%%%%%%%%%

\subsection{Data cleaning}\label{sup_sec:hcp_cleaning}

We began with the subject-level HCP dataset and official data dictionary. All cognition variables were removed except the target (Fluid Cognition Composite). Bookkeeping and metadata variables (e.g., subject identifiers, acquisition details, scanner and session fields, completeness indicators) were excluded. To ensure consistency, only age-adjusted phenotype versions were retained. We restricted behavioral variables to summary-level measures, including working memory, emotion, language, motor, sensory, and personality summary scores. Structural variables were limited to FreeSurfer~\cite{hcp_freesurfer} morphometry features, identified by the prefix \texttt{FS\_}. Only numeric variables were retained. Subjects missing the target were removed. Features with zero variance were dropped. Remaining missing values were mean-imputed columnwise. After cleaning, the final dataset consists of $n = 1188$ individuals and $p = 213$ variables.

%%%%%%%%%%%%%%%%%%%%%%%%%%%%%%%%%%%%%%%%%%%%%

\subsection{PFS implementation}\label{sup_sec:hcp_pfs}

%%%%%%%%%%%%%%%%%%%%%%%%%%%%%%%%%%%%%%%%%%%%%

The maximum radius was set to 3 and local $q$-value thresholds at radii 1, 2, and 3 were set to $0.15$, $0.03$, and $0.02$, respectively. All other PFS parameters were set to their default values in the \texttt{localgraph} package.

%%%%%%%%%%%%%%%%%%%%%%%%%%%%%%%%%%%%%%%%%%%%%

\subsection{Supporting analyses}\label{sup_sec:hcp_additional}

%%%%%%%%%%%%%%%%%%%%%%%%%%%%%%%%%%%%%%%%%%%%%

To interpret cortical structure recovered in the local graphs estimated by PFS and other methods, we performed over-representation analysis (ORA) of Yeo-7 functional systems~\cite{hcp_yeo}. Results are reported in \cref{tab:hcp}. For each method we extracted the set of cortical Desikan-Killiany (DK) regions~\cite{hcp_dk} appearing in the local graph component anchored at the node directly connected to the target. Only cortical FreeSurfer morphometry features (thickness and surface area) were included~\cite{hcp_freesurfer}; subcortical and non-morphometric variables were excluded from enrichment testing.

Each DK region was mapped to a Yeo-7 network using a vertex-count lookup table assigning each parcel to its dominant functional system~\cite{hcp_alexander}. The background consists of all cortical DK regions present in the cleaned dataset. Enrichment $p$-values are computed using an upper-tail hypergeometric test. Thus, ORA tests whether a functional system is represented in the local graph more often than expected under random sampling of cortical regions, linking recovered structural organization to established large-scale functional systems implicated in cognition. Enrichment outputs and gene lists are provided with the accompanying code release.

\ifthenelse{\boolean{showfigures}}{
\begin{figure*}[ht!]
\begin{adjustbox}{center}
\includegraphics[
	height=0.65\textheight, 
	width=\textwidth,
	trim={0pt 0pt 0pt 0pt}, clip]
	{./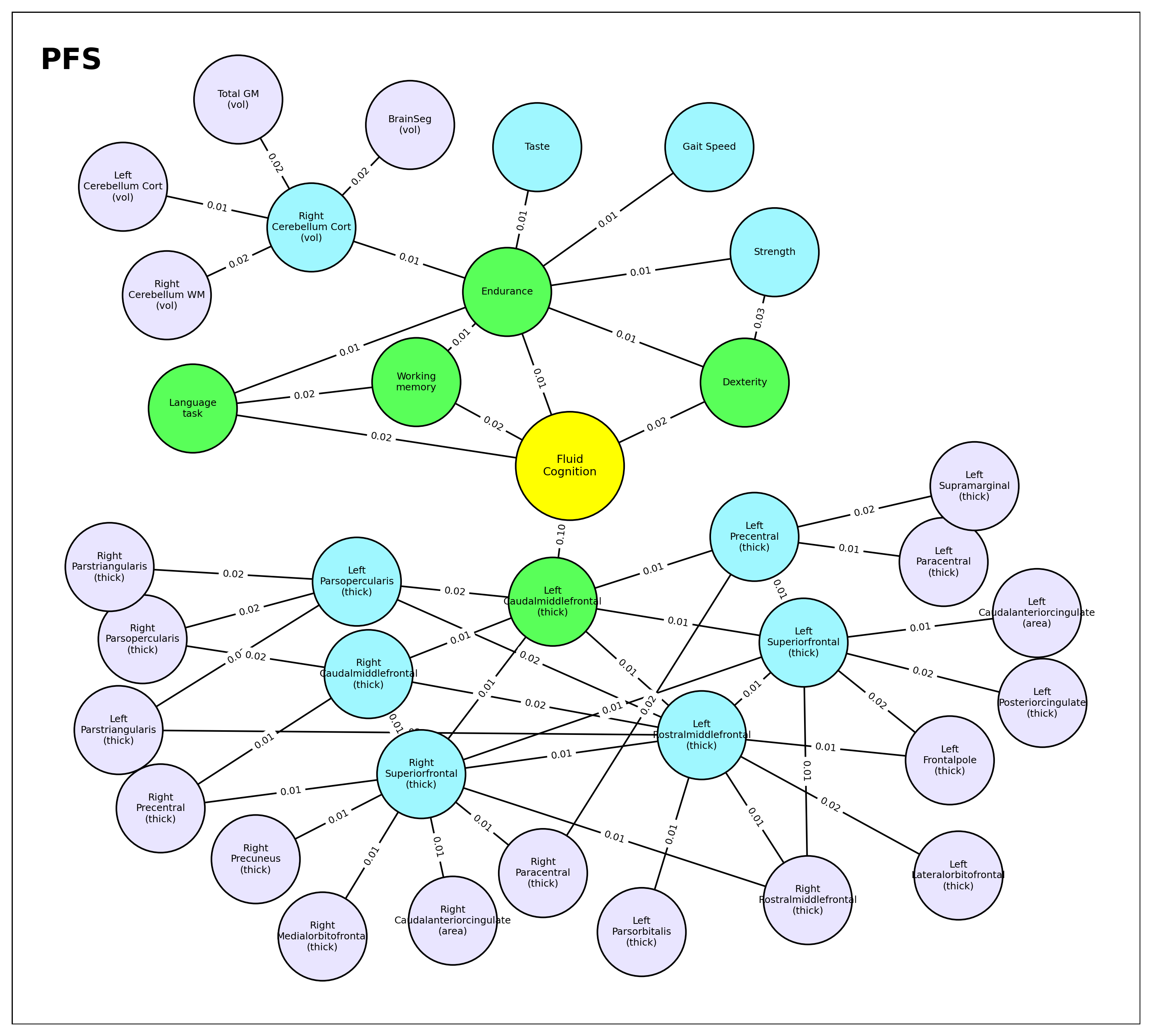}
\end{adjustbox}
\caption{\textit{Radius 3 graph around fluid cognition for the Human Connectome Project data, estimated by PFS}. The target variable, Fluid Cognition score, is shown in yellow. Nodes directly connected to the target are shown in green, nodes at distance two are shown in blue, and nodes at distance three are shown in purple. Edges are annotated with $q$-values, with smaller values indicating stronger evidence of conditional dependence. Thickness, area, and volume measurements of specific brain regions are labeled with``(thick)", ``(area)", and ``(volume)" respectively.}
\label{fig:hcp_pfs}
\end{figure*}
}{}

\clearpage

%%%%%%%%%%%%%%%%%%%%%%%%%%%%%%%%%%%%%%%%%%%%%

\section{Additional details: Cell-type-specific gene networks in Alzheimer's disease}\label{sup_sec:ad}

%%%%%%%%%%%%%%%%%%%%%%%%%%%%%%%%%%%%%%%%%%%%%

%%%%%%%%%%%%%%%%%%%%%%%%%%%%%%%%%%%%%%%%%%%%%

\subsection{Data sources}\label{sup_sec:ad_data}

%%%%%%%%%%%%%%%%%%%%%%%%%%%%%%%%%%%%%%%%%%%%%

Single-nucleus RNA-sequencing data from the study by Grubman et al.~\cite{ad_grubman} were downloaded from \url{https://adsn.ddnetbio.com}. The dataset includes gene expression profiles for nuclei from six Alzheimer's disease (AD) and six control individuals, together with cell-type annotations and metadata. Analyses were conducted separately within astrocytes ($n=2171$ cells), microglia ($n=449$), and oligodendrocyte progenitor cells (OPCs; $n=1078$). The target variable in each analysis was a binary indicator of AD versus control status.

%%%%%%%%%%%%%%%%%%%%%%%%%%%%%%%%%%%%%%%%%%%%%

\subsection{Data cleaning}\label{sup_sec:ad_cleaning}

%%%%%%%%%%%%%%%%%%%%%%%%%%%%%%%%%%%%%%%%%%%%%

Raw count matrices (genes $\times$ nuclei), gene identifiers, and cell barcodes were aligned with the provided metadata. Genes detected in fewer than $50$ nuclei were removed. Library-size normalization was performed by scaling each nucleus to $10{,}000$ total counts, followed by $\log(1+x)$ transformation. After filtering, the analysis matrices contained $p=10{,}788$ genes (astrocytes), $p=10{,}513$ genes (microglia), and $p=10{,}772$ genes (OPCs).

%%%%%%%%%%%%%%%%%%%%%%%%%%%%%%%%%%%%%%%%%%%%%

\subsection{PFS implementation}\label{sup_sec:ad_pfs}

%%%%%%%%%%%%%%%%%%%%%%%%%%%%%%%%%%%%%%%%%%%%%

PFS was applied separately within each cell type, with maximum radius 3 in each case. For astrocytes, the local $q$-value thresholds were set to $0.025$ at all radii. For microglia and OPCs, local $q$-value thresholds were set to $0.05$, $0.1$, and $0.1$ at radii $1$, $2$, and $3$, respectively. All other parameters were set to their defaults in the \texttt{localgraph} package.

\ifthenelse{\boolean{showfigures}}{
\begin{figure*}[ht!]
\begin{adjustbox}{center}
\includegraphics[
	height=0.4\textheight, 
	width=\textwidth,
	trim={0pt 0pt 0pt 0pt}, clip]
	{./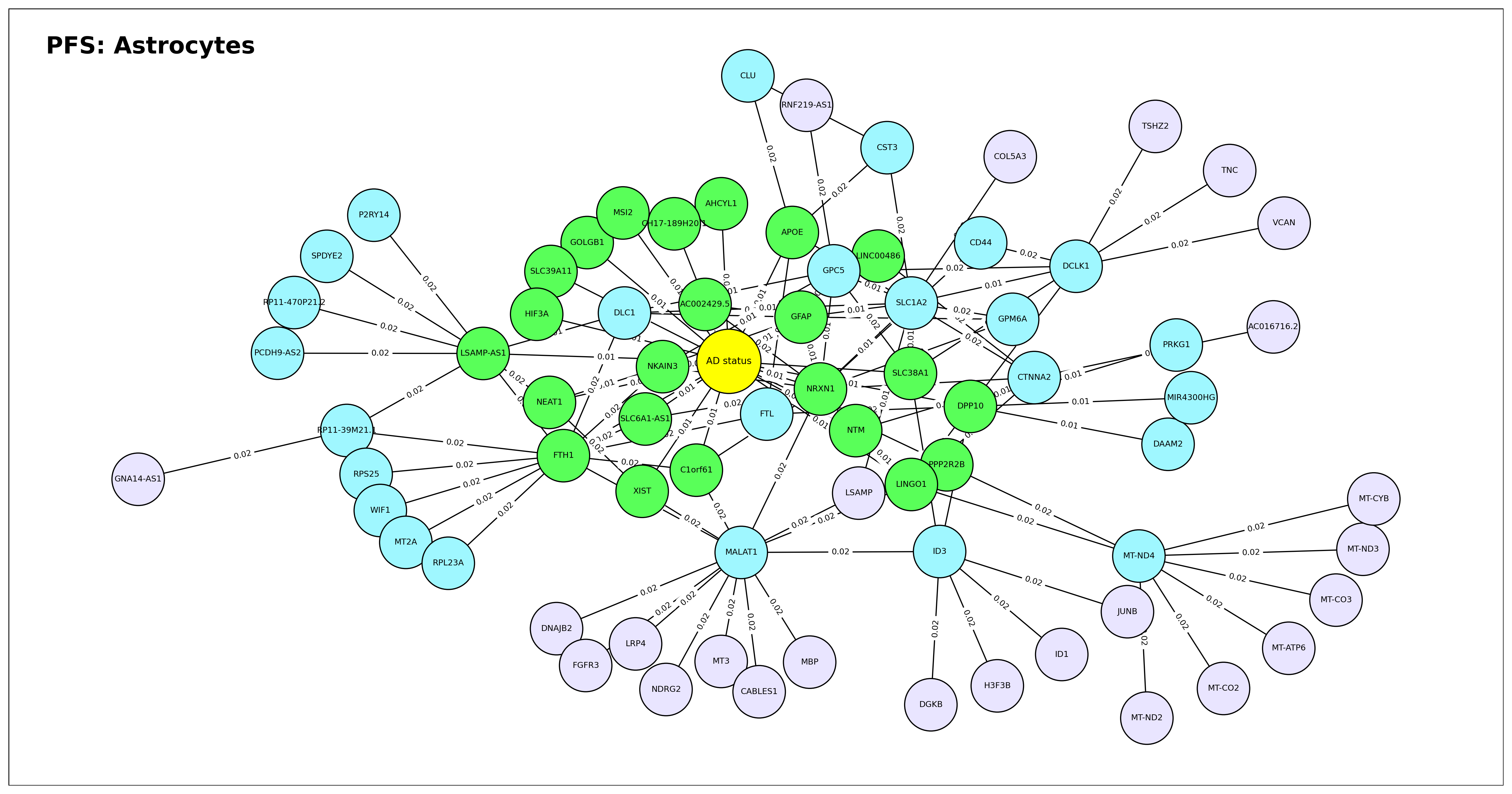}
\end{adjustbox}
\caption{\textit{PFS graph of radius 3 around AD status in astrocytes}. The target (AD status) is shown in yellow. Genes directly connected to the target are shown in green, distance-2 genes are blue, and distance-3 genes are purple. Edges are annotated with $q$-values, with smaller values indicating stronger evidence of conditional dependence.}
\label{fig:ad_pfs_astro}
\end{figure*}
}{}

\ifthenelse{\boolean{showfigures}}{
\begin{figure*}[ht!]
\begin{adjustbox}{center}
\includegraphics[
	height=0.4\textheight, 
	width=\textwidth,
	trim={0pt 0pt 0pt 0pt}, clip]
	{./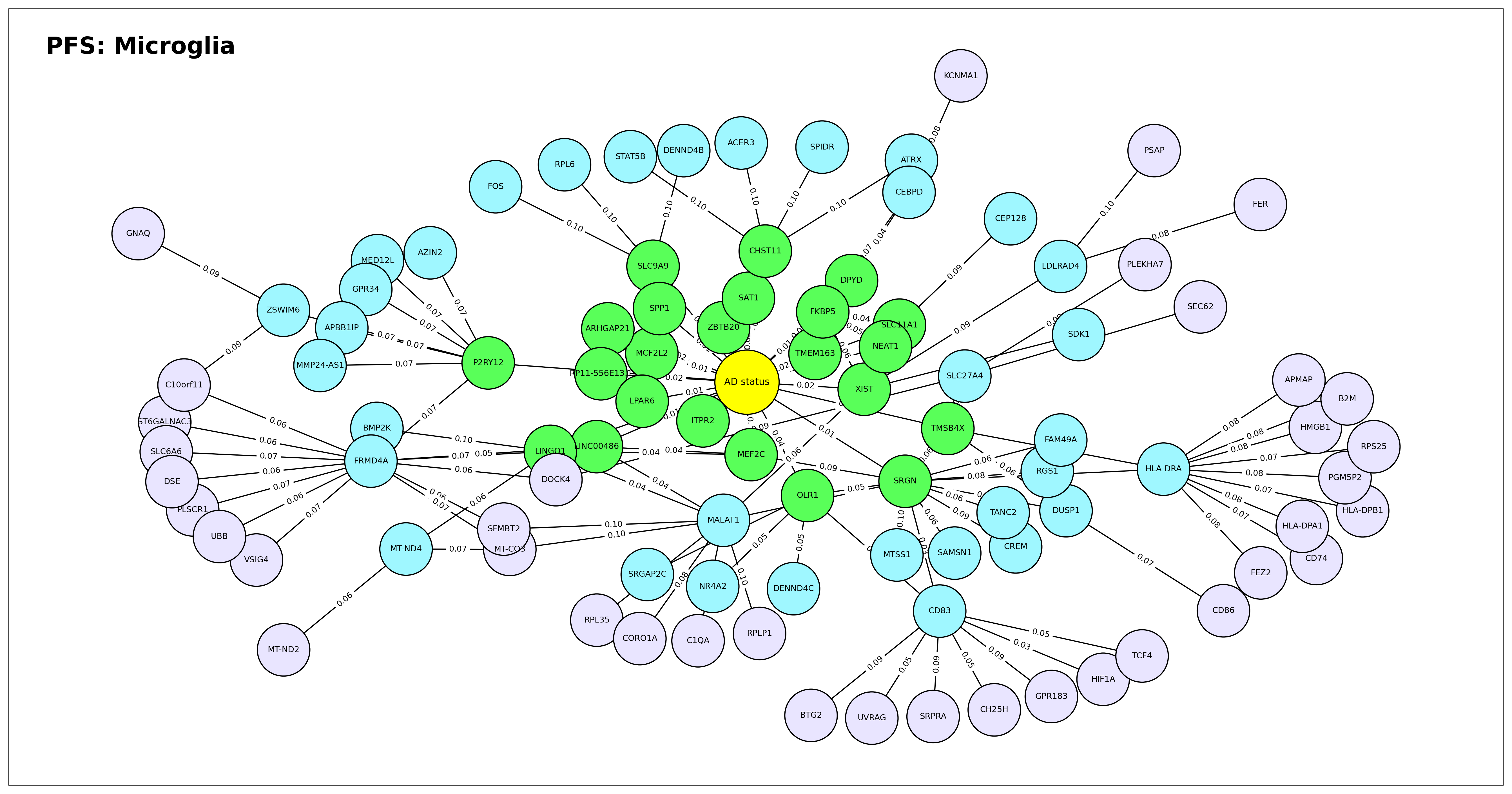}
\end{adjustbox}
\caption{\textit{PFS graph of radius 3 around AD status in microglia}. Visual encoding and statistical interpretation are as in \cref{fig:ad_pfs_astro}.}
\label{fig:ad_pfs_mg}
\end{figure*}
}{}

\ifthenelse{\boolean{showfigures}}{
\begin{figure*}[ht!]
\begin{adjustbox}{center}
\includegraphics[
	height=0.4\textheight, 
	width=\textwidth,
	trim={0pt 0pt 0pt 0pt}, clip]
	{./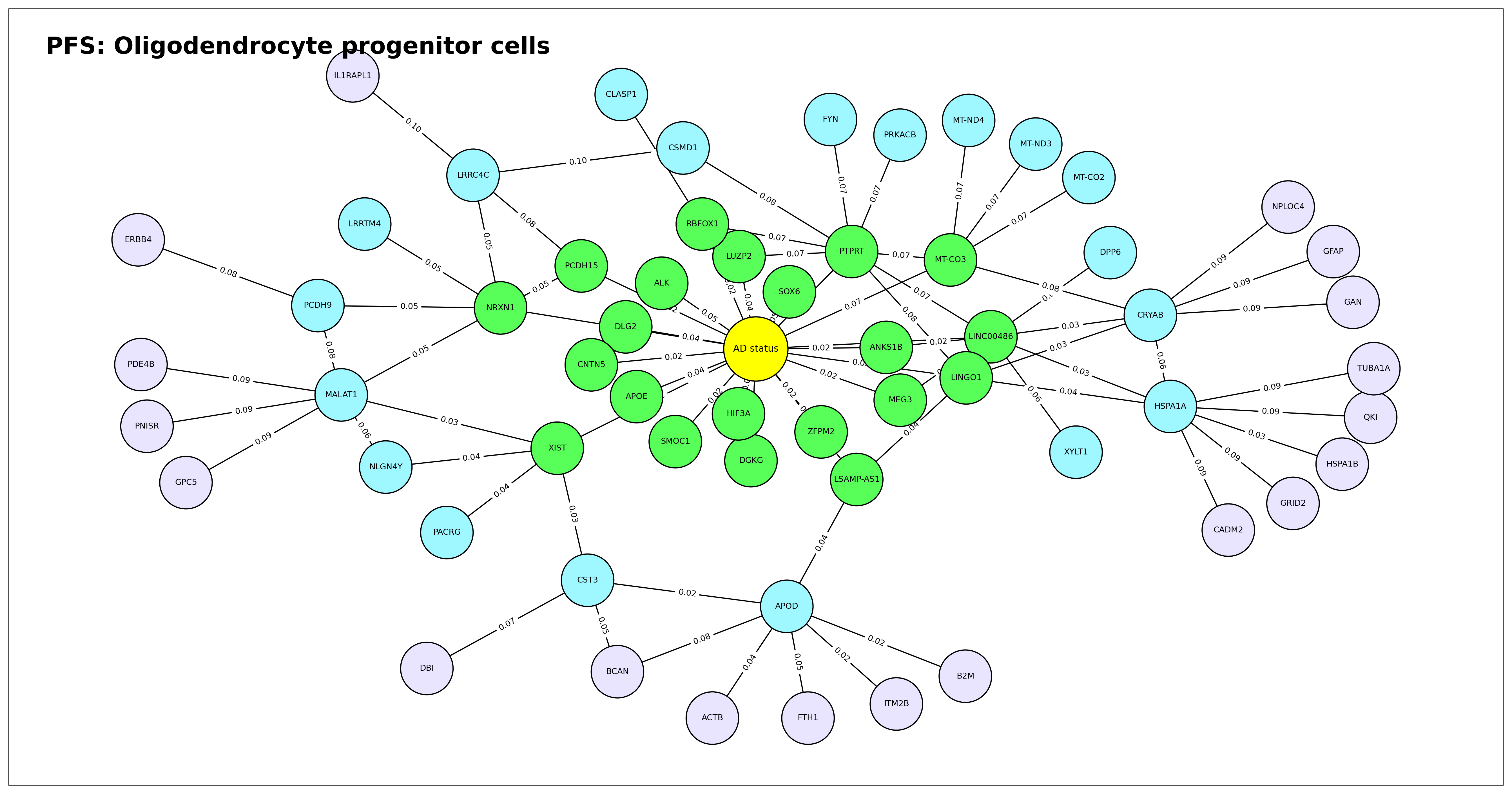}
\end{adjustbox}
\caption{\textit{PFS graph of radius 3 around AD status in oligodendrocyte progenitor cells (OPCs)}. Visual encoding and statistical interpretation are as in \cref{fig:ad_pfs_astro}.}
\label{fig:ad_pfs_opc}
\end{figure*}
}{}

%%%%%%%%%%%%%%%%%%%%%%%%%%%%%%%%%%%%%%%%%%%%%

\subsection{Supporting analyses}\label{sup_sec:ad_additional}

%%%%%%%%%%%%%%%%%%%%%%%%%%%%%%%%%%%%%%%%%%%%%

We performed gene set over-representation analyses (ORAs) to assess the biological coherence of the PFS local graphs around AD status for each cell type. Results are reported in \cref{tab:ad}. As with the other ORAs presented in this work, these analyses serve as external statistical validation of the estimated local structures.

Gene clusters consist of all genes in the local graph of radius 3 around AD status for each cell type. ORA was performed using curated gene sets from the MSigDB C2 Reactome and C5 Gene Ontology Biological Process collections (v2025.1) via the \texttt{gseapy} implementation of Fisher’s exact test~\cite{msigdb,gseapy}. The background set is comprised of all genes retained after cell-type-specific preprocessing (\cref{sup_sec:ad_cleaning}). Gene sets were filtered to require minimum overlap of 5 genes. FDR-adjusted $p$-values are reported. Enrichment outputs and gene lists are provided with the accompanying code release.

\clearpage

%%%%%%%%%%%%%%%%%%%%%%%%%%%%%%%%%%%%%%%%%%%%%

\section{Local graphs estimated by other methods}\label{sup_sec:other_methods}

%%%%%%%%%%%%%%%%%%%%%%%%%%%%%%%%%%%%%%%%%%%%%

In \cref{sup_sec:limitations} we provided theoretical explanations for why certain global graph estimation methods can struggle to recover local structure, and in \cref{sup_sec:simulations} we presented simulation results illustrating these limitations when the true graph is known. In this section, we examine how several global graph estimation methods perform in practice when applied to the same county-level cancer and breast cancer datasets that we analyzed with PFS in \cref{sec:eqi,sec:breast_cancer}. We apply methods that assume Gaussian data to both the original data and to a nonparanormal transformation of the data, which is designed to relax the Gaussian assumption~\cite{nonparanormal}. Nonparanormal transformations were implemented using the R package \texttt{huge}.

%%%%%%%%%%%%%%%%%%%%%%%%%%%%%%%%%%%%%%%%%%%%%

\subsection{County-level cancer study}\label{sup_sec:other_methods_eqi}

%%%%%%%%%%%%%%%%%%%%%%%%%%%%%%%%%%%%%%%%%%%%%

We report results from ten methods applied to the same county-level cancer dataset that PFS was applied to in \cref{sec:eqi}. The graphical lasso and nodewise lasso are applied with default settings using \texttt{huge}. The five methods with global FDR control---BNWSL, DSGL, DSNWSL, GFCL, and GFCSL---were applied using the R package \texttt{SILGGM}. For these methods, the target FDR was set to $0.001$, substantially more stringent than the \texttt{SILGGM} defaults of $0.05$ or $0.1$. ARACNE, MMPC, and the stable PC algorithm (PC) were applied using the R package \texttt{bnlearn}. For these three methods, we used mutual information rather than correlation to test for conditional independence due to observed nonlinearities in the data (\cref{fig:eqi_nonlinear}). We attempted to run several other methods from \texttt{bnlearn}, namely Fast IAMB, HPC, SIHPC, but these failed to return results within the allotted time limit of 24 hours. The following figures show estimated local graphs of radius $2$ around cancer incidence and mortality for each respective method.

\ifthenelse{\boolean{showfigures}}{
\begin{figure*}[ht!]
\begin{adjustbox}{center}
\includegraphics[
	height=0.225\textheight, 
	width=\textwidth,
	trim={0pt 0pt 0pt 0pt}, clip]
	{./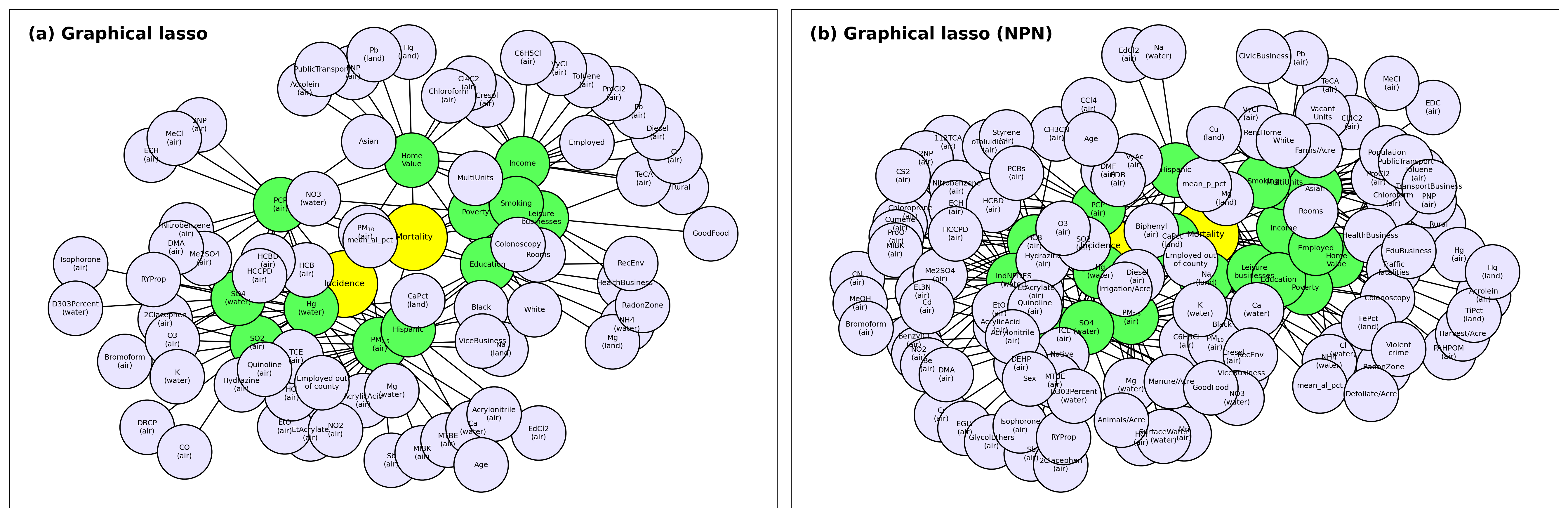}
\end{adjustbox}
\caption{\textit{County-level cancer results using graphical lasso}. Estimated local graph of radius 2 using \textbf{(a)} original data and \textbf{(b)} nonparanormal data.}
\label{fig:eqi_glasso}
\end{figure*}
}{}

\ifthenelse{\boolean{showfigures}}{
\begin{figure*}[ht!]
\begin{adjustbox}{center}
\includegraphics[
	height=0.225\textheight, 
	width=\textwidth,
	trim={0pt 0pt 0pt 0pt}, clip]
	{./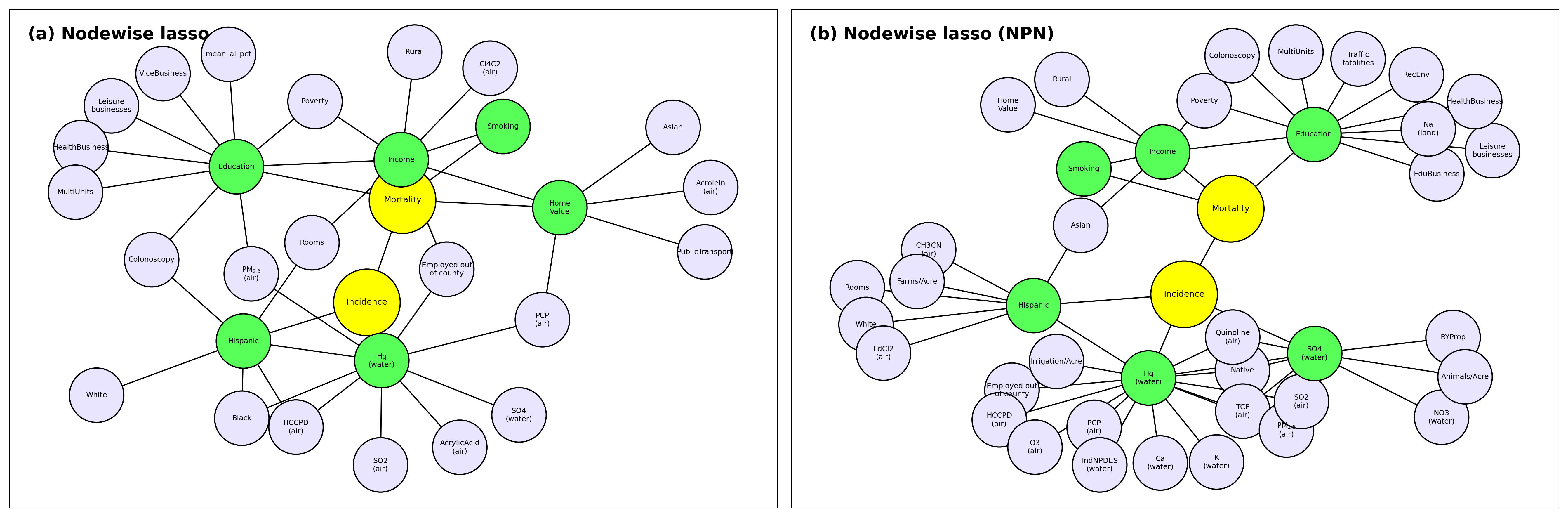}
\end{adjustbox}
\caption{\textit{County-level cancer results using nodewise lasso}. Estimated local graph of radius 2 using \textbf{(a)} original data and \textbf{(b)} nonparanormal data.}
\label{fig:eqi_mb}
\end{figure*}
}{}

\ifthenelse{\boolean{showfigures}}{
\begin{figure*}[ht!]
\begin{adjustbox}{center}
\includegraphics[
	height=0.225\textheight, 
	width=\textwidth,
	trim={0pt 0pt 0pt 0pt}, clip]
	{./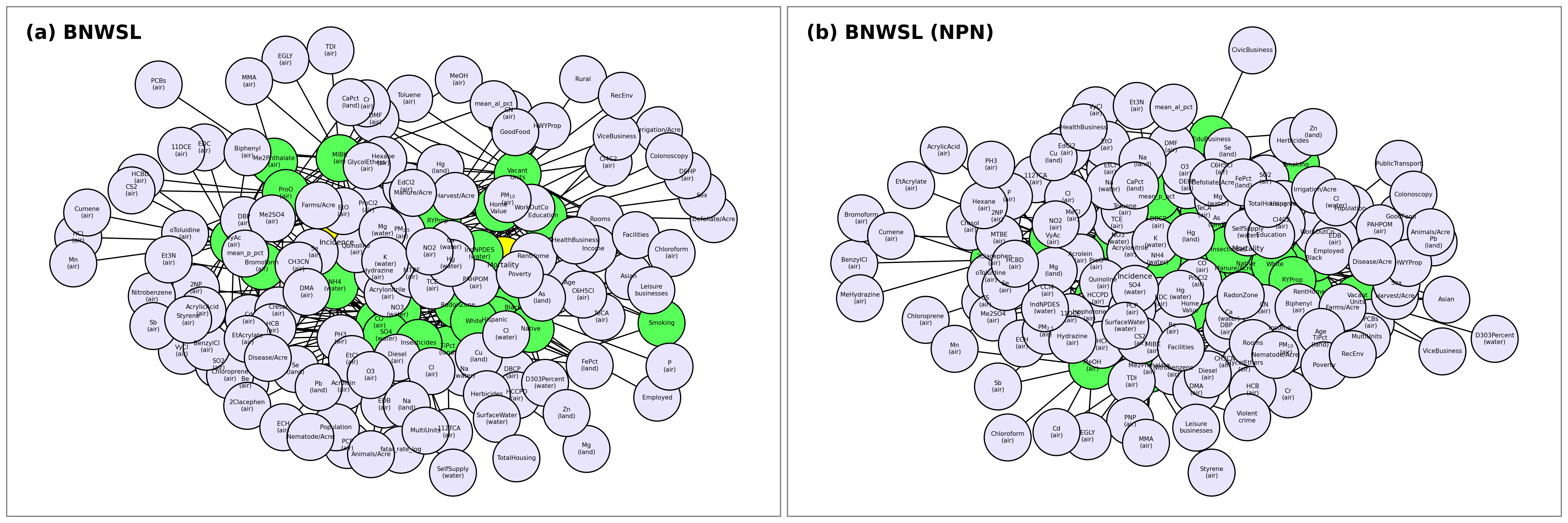}
\end{adjustbox}
\caption{\textit{County-level cancer results using BNWSL}. Estimated local graph of radius 2 using \textbf{(a)} original data and \textbf{(b)} nonparanormal data.}
\label{fig:eqi_bnwsl}
\end{figure*}
}{}

\ifthenelse{\boolean{showfigures}}{
\begin{figure*}[ht!]
\begin{adjustbox}{center}
\includegraphics[
	height=0.225\textheight, 
	width=\textwidth,
	trim={0pt 0pt 0pt 0pt}, clip]
	{./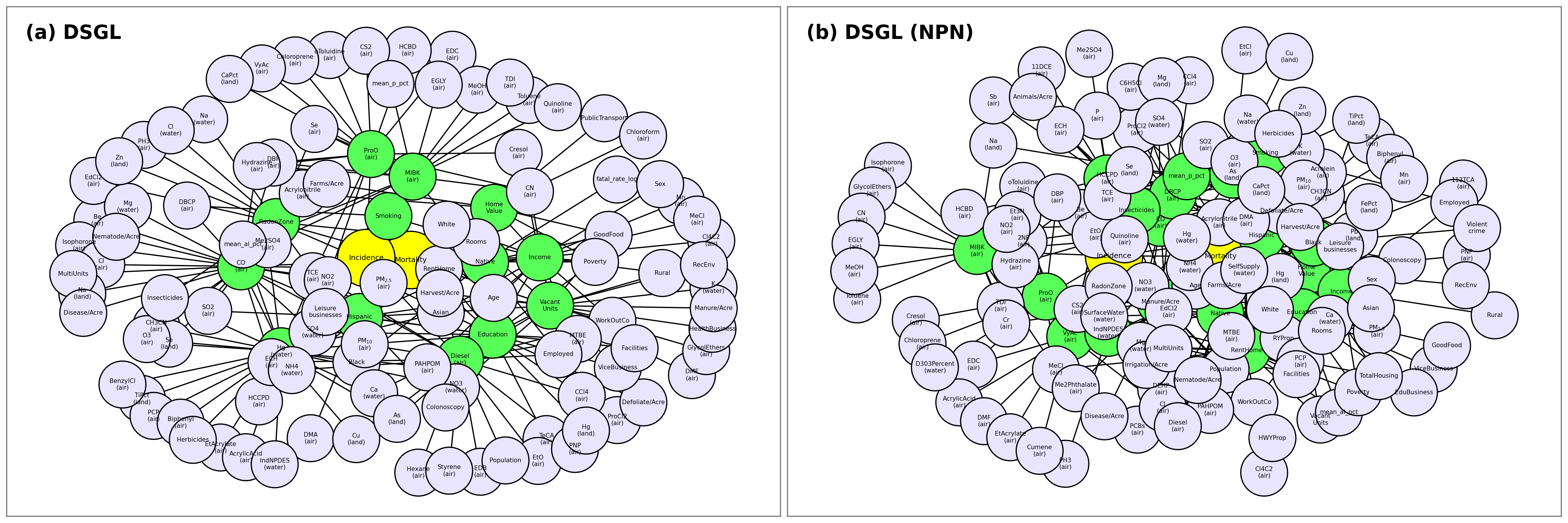}
\end{adjustbox}
\caption{\textit{County-level cancer results using DSGL}. Estimated local graph of radius 2 using \textbf{(a)} original data and \textbf{(b)} nonparanormal data.}
\label{fig:eqi_dsgl}
\end{figure*}
}{}

\ifthenelse{\boolean{showfigures}}{
\begin{figure*}[ht!]
\begin{adjustbox}{center}
\includegraphics[
	height=0.225\textheight, 
	width=\textwidth,
	trim={0pt 0pt 0pt 0pt}, clip]
	{./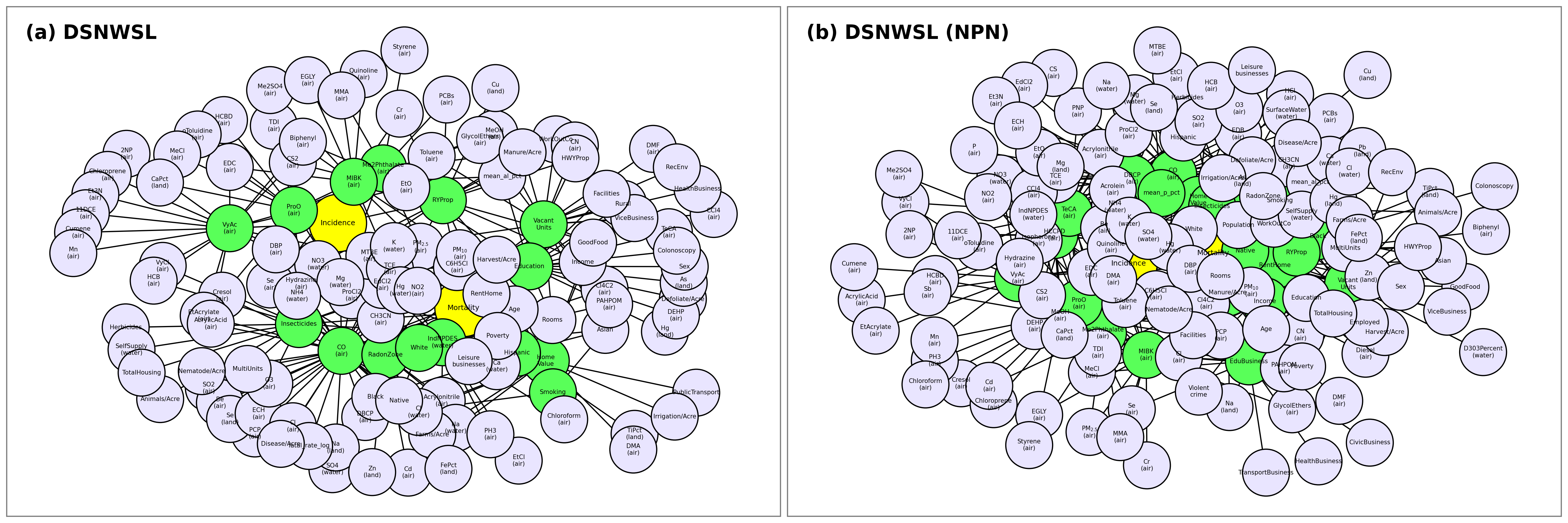}
\end{adjustbox}
\caption{\textit{County-level cancer results using DSNWSL}. Estimated local graph of radius 2 using \textbf{(a)} original data and \textbf{(b)} nonparanormal data.}
\label{fig:eqi_dsnwsl}
\end{figure*}
}{}

\ifthenelse{\boolean{showfigures}}{
\begin{figure*}[ht!]
\begin{adjustbox}{center}
\includegraphics[
	height=0.225\textheight, 
	width=\textwidth,
	trim={0pt 0pt 0pt 0pt}, clip]
	{./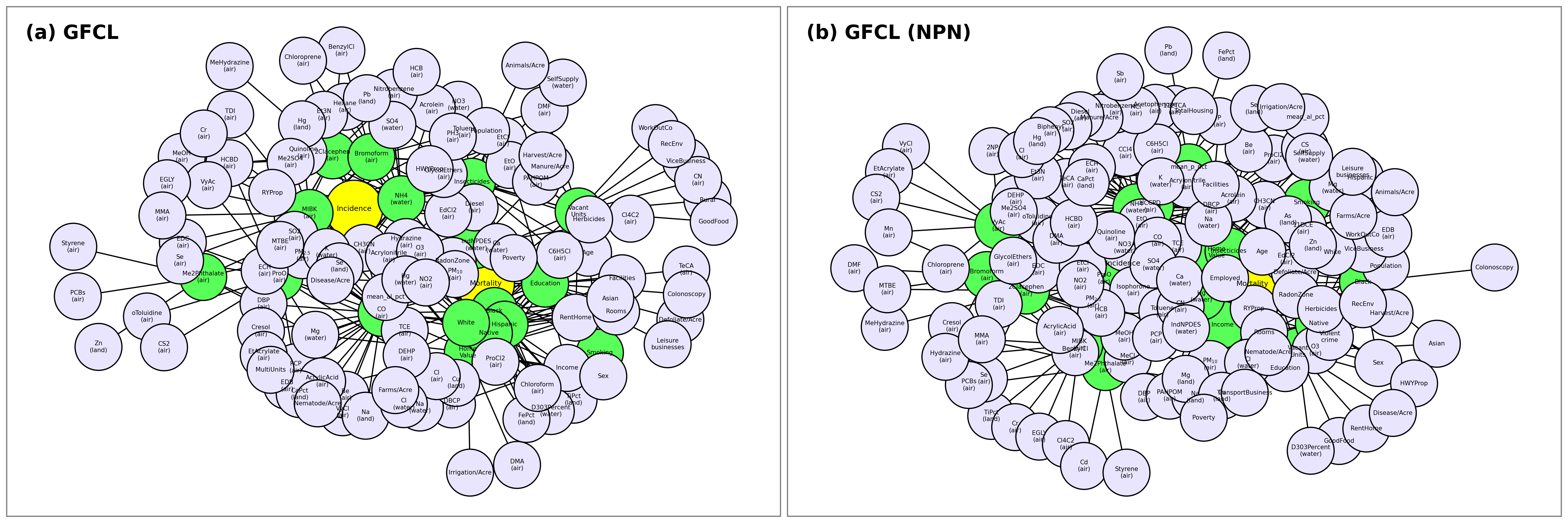}
\end{adjustbox}
\caption{\textit{County-level cancer results using GFCL}. Estimated local graph of radius 2 using \textbf{(a)} original data and \textbf{(b)} nonparanormal data.}
\label{fig:eqi_gfcl}
\end{figure*}
}{}

\ifthenelse{\boolean{showfigures}}{
\begin{figure*}[ht!]
\begin{adjustbox}{center}
\includegraphics[
	height=0.225\textheight, 
	width=\textwidth,
	trim={0pt 0pt 0pt 0pt}, clip]
	{./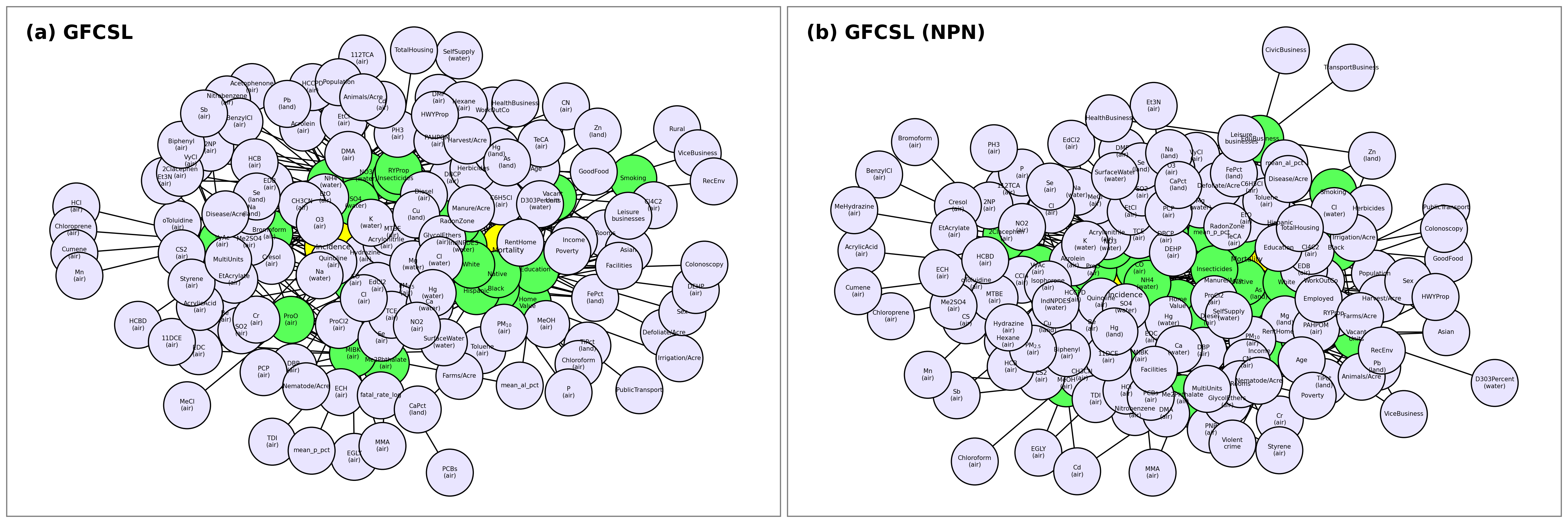}
\end{adjustbox}
\caption{\textit{County-level cancer results using GFCSL}. Estimated local graph of radius 2 using \textbf{(a)} original data and \textbf{(b)} nonparanormal data.}
\label{fig:eqi_gfcsl}
\end{figure*}
}{}

\ifthenelse{\boolean{showfigures}}{
\begin{figure*}[ht!]
\begin{adjustbox}{center}
\includegraphics[
	height=0.35\textheight, 
	width=0.85\textwidth,
	trim={0pt 0pt 0pt 0pt}, clip]
	{./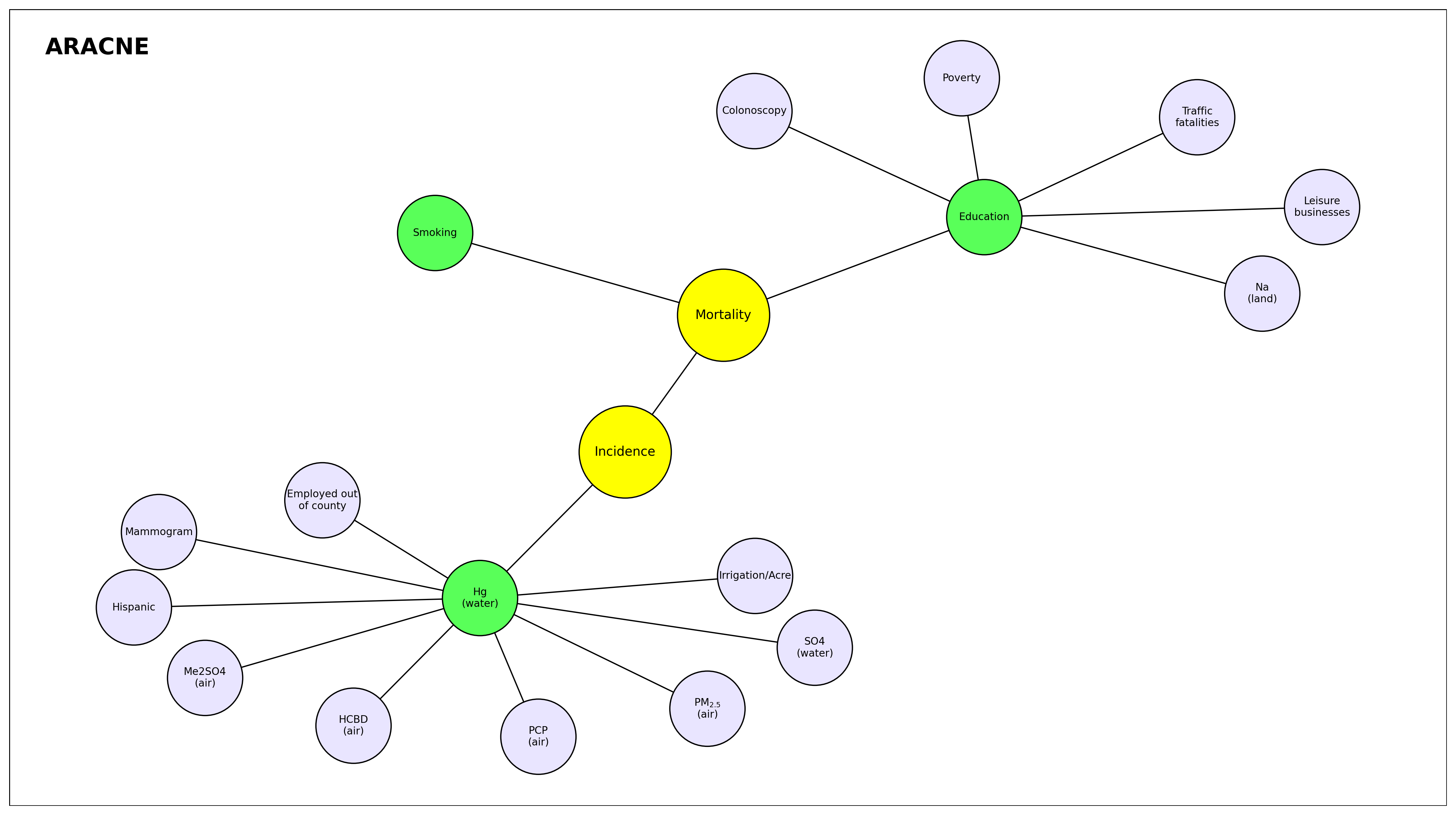}
\end{adjustbox}
\caption{\textit{County-level cancer results using ARACNE}. Estimated local graph of radius 2 using original data (applying the nonparanormal did not change results).}
\label{fig:eqi_aracne}
\end{figure*}
}{}

\ifthenelse{\boolean{showfigures}}{
\begin{figure*}[ht!]
\begin{adjustbox}{center}
\includegraphics[
	height=0.35\textheight, 
	width=0.85\textwidth,
	trim={0pt 0pt 0pt 0pt}, clip]
	{./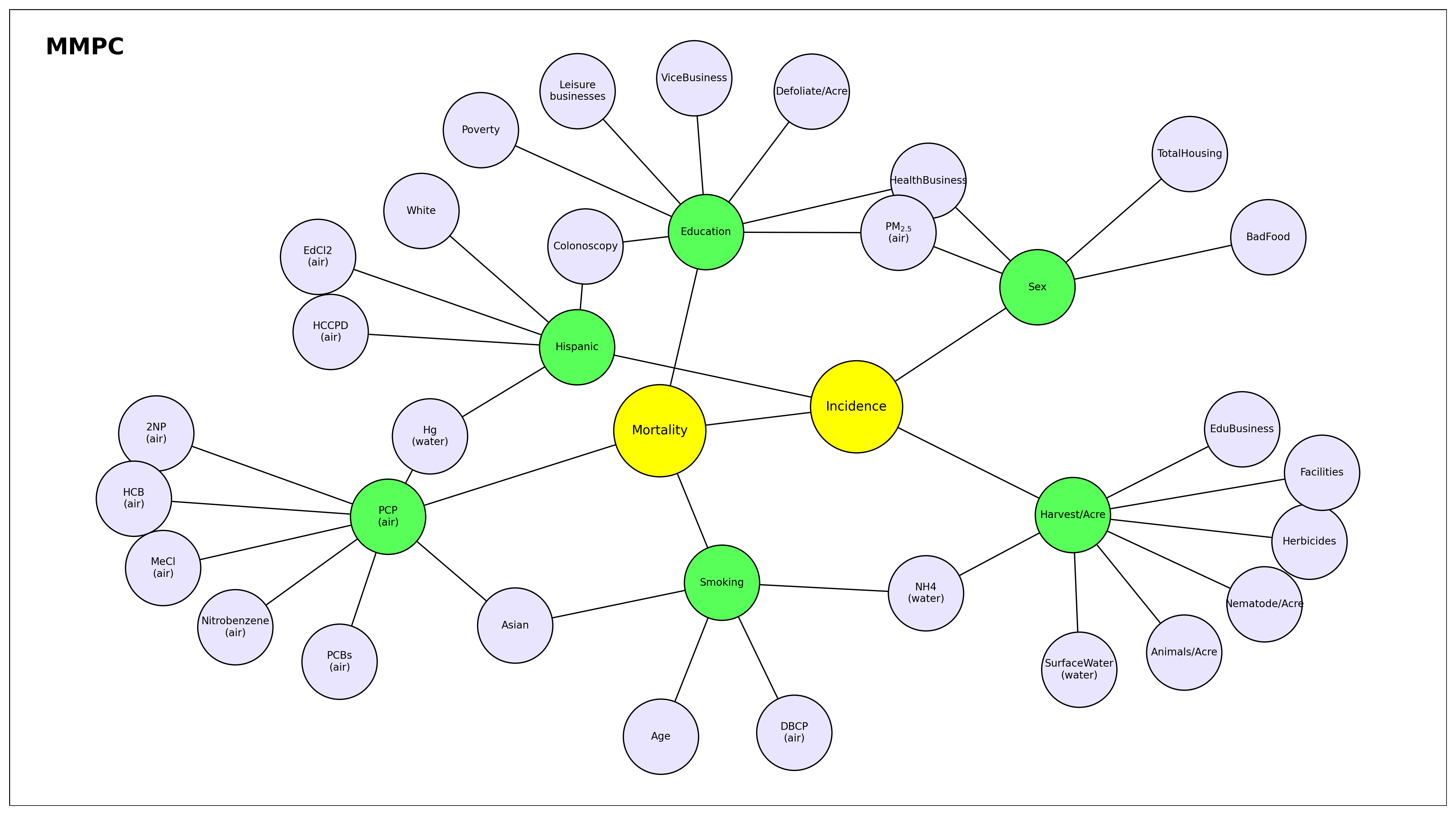}
\end{adjustbox}
\caption{\textit{County-level cancer results using MMPC}. Estimated local graph of radius 2 using original data (applying the nonparanormal did not change results).}
\label{fig:eqi_mmpc}
\end{figure*}
}{}

\ifthenelse{\boolean{showfigures}}{
\begin{figure*}[ht!]
\begin{adjustbox}{center}
\includegraphics[
	height=0.35\textheight, 
	width=0.85\textwidth,
	trim={0pt 0pt 0pt 0pt}, clip]
	{./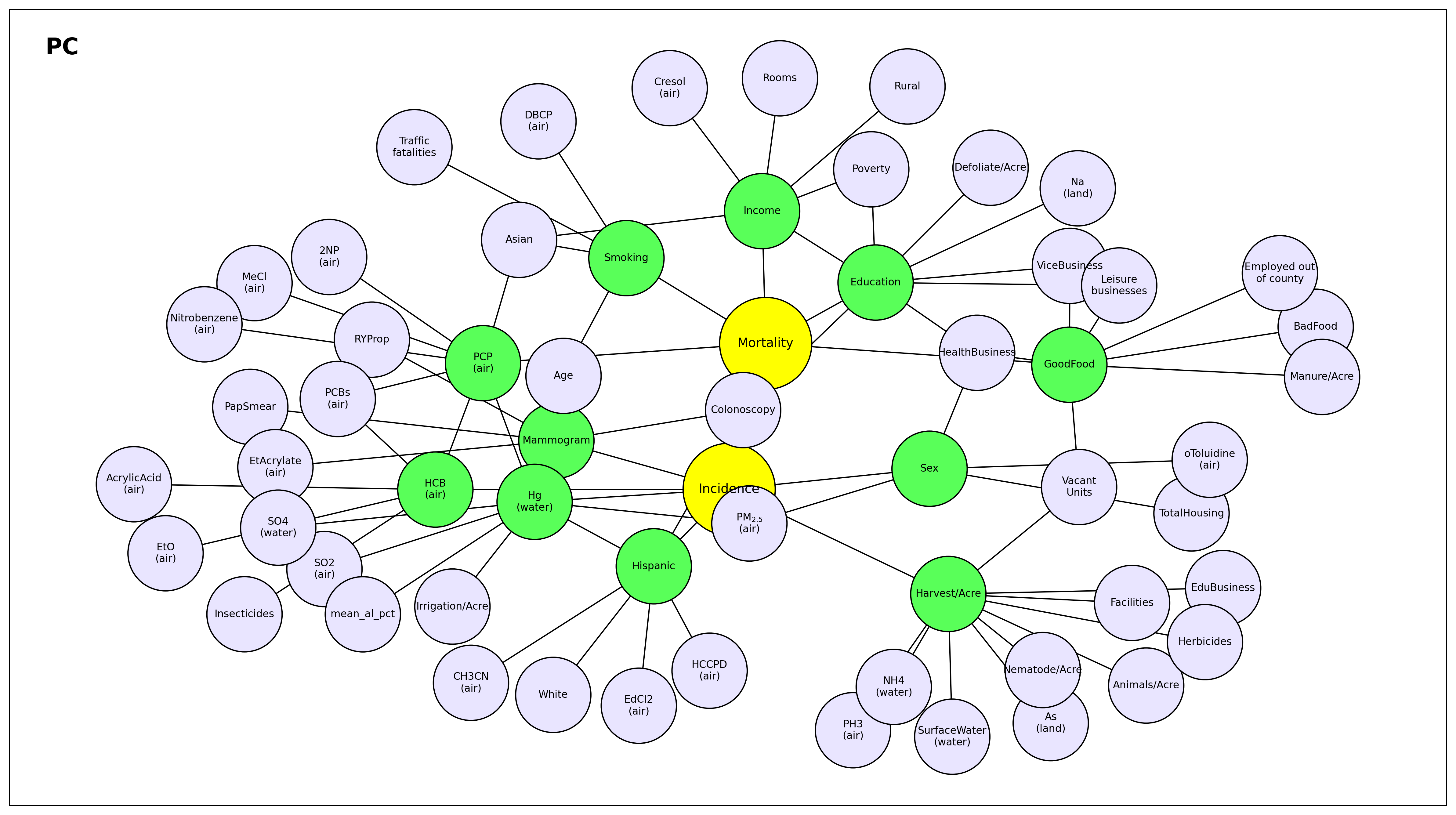}
\end{adjustbox}
\caption{\textit{County-level cancer results using StablePC}. Estimated local graph of radius 2 using original data (applying the nonparanormal did not change results).}
\label{fig:eqi_pc_stable}
\end{figure*}
}{}

\clearpage

%%%%%%%%%%%%%%%%%%%%%%%%%%%%%%%%%%%%%%%%%%%%%

\subsection{Breast cancer}\label{sup_sec:other_methods_bc}

%%%%%%%%%%%%%%%%%%%%%%%%%%%%%%%%%%%%%%%%%%%%%

For the breast cancer data ($n=547$, $p=10{,}744$), graphical lasso and nodewise lasso returned graphs in which all three of the target variables were isolated (no neighbors). Among the methods with global FDR control, only DSNWSL and GFCSL were computationally feasible (these ran in $4.5$ and $5.25$ hours, respectively; by contrast, PFS ran in $50$ minutes on the same data). We ran these two methods at multiple FDR thresholds ranging from $0.01$ to $0.5$ (the \texttt{SILGGM} defaults are $0.05$ and $0.1$). Since the connection between pathologic stage and status has strong clinical support, results are reported for the smallest target FDR at which an edge is present between pathologic stage and status. GFCSL first detects the edge between pathologic stage and status at a target FDR of $0.1$ for the original data, and $0.15$ for the nonparanormal data. Figures \ref{fig:bc_gfcsl}a and \ref{fig:bc_gfcsl}b show that these target FDR values yield interpretable subgraphs of radius $1$ around the target variables. However, the subgraphs of radius 2 (Figures \ref{fig:bc_gfcsl}c and \ref{fig:bc_gfcsl}d) are extremely dense and uninterpretable. DSNWSL does not detect the edge between pathologic stage and status for any target FDR at or below $0.5$ (\cref{fig:bc_dsnwsl}a) on the original data, and detects it only at a target FDR of $0.5$ for the nonparanormal data (\cref{fig:bc_dsnwsl}b). Among methods from the \texttt{bnlearn} package, ARACNE was the only approach that was computationally feasible for this dataset. 

\ifthenelse{\boolean{showfigures}}{
\begin{figure*}[ht!]
\begin{adjustbox}{center}
\includegraphics[
	height=0.425\textheight, 
	width=\textwidth,
	trim={0pt 0pt 0pt 0pt}, clip]
	{./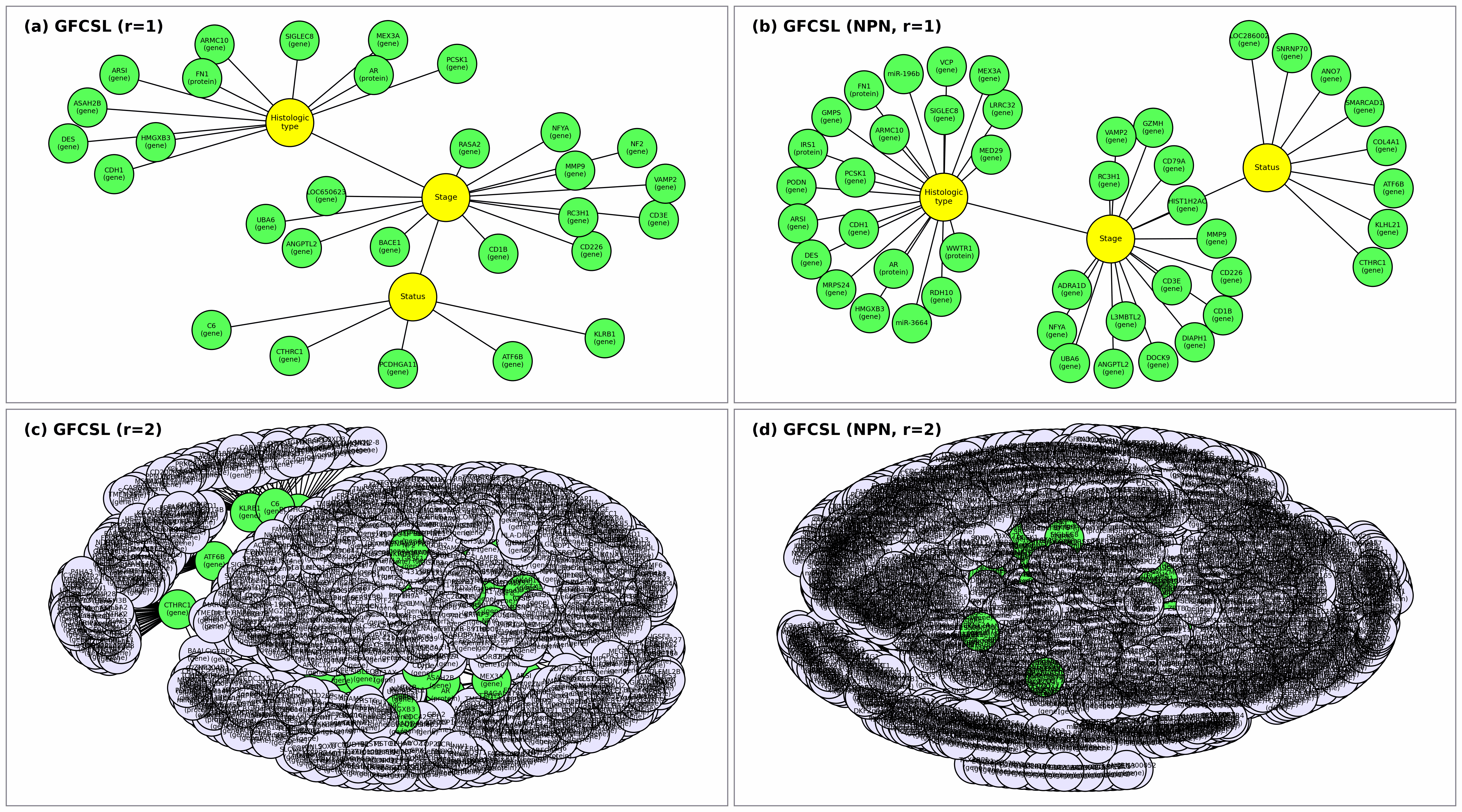}
\end{adjustbox}
\caption{\textit{Breast cancer results using GFCSL}. Estimated local graph of radius 1 using \textbf{(a)} original data and \textbf{(b)} nonparanormal data. Estimated local graph of radius 2 using \textbf{(c)} original data and \textbf{(d)} nonparanormal data.}
\label{fig:bc_gfcsl}
\end{figure*}
}{}

\ifthenelse{\boolean{showfigures}}{
\begin{figure*}[ht!]
\begin{adjustbox}{center}
\includegraphics[
	height=0.235\textheight, 
	width=\textwidth,
	trim={0pt 0pt 0pt 0pt}, clip]
	{./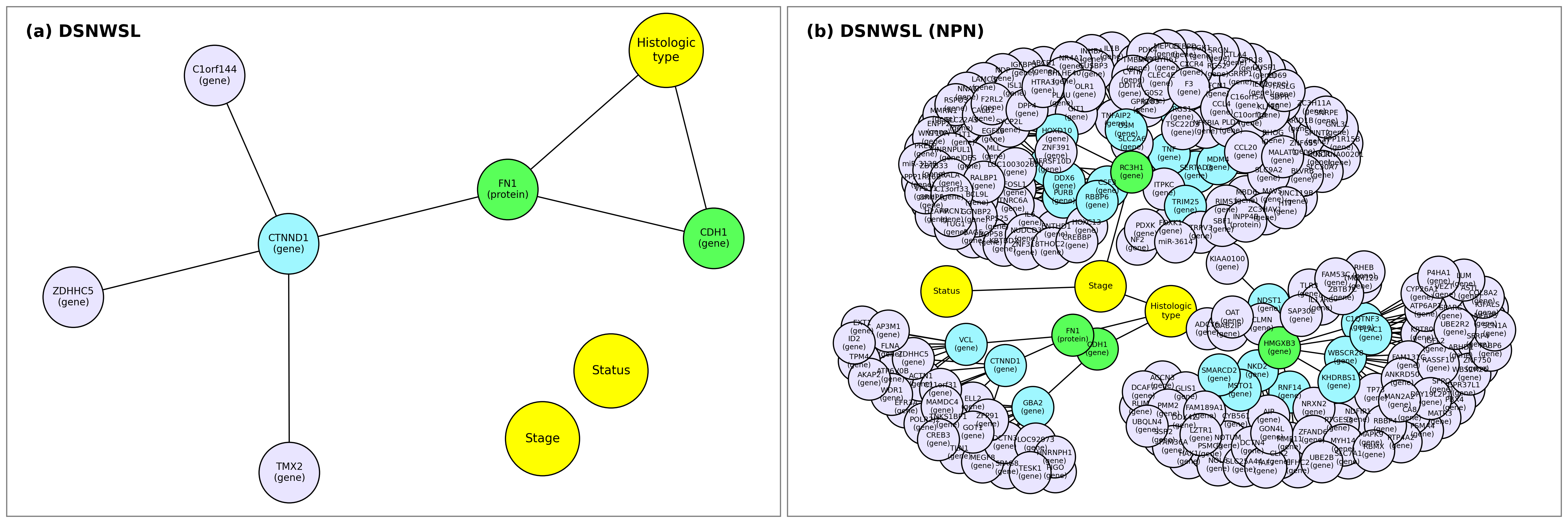}
\end{adjustbox}
\caption{\textit{Breast cancer results using DSNWSL}. Estimated local graph of radius 3 using \textbf{(a)} original data and \textbf{(b)} nonparanormal data.}
\label{fig:bc_dsnwsl}
\end{figure*}
}{}

\ifthenelse{\boolean{showfigures}}{
\begin{figure*}[ht!]
\begin{adjustbox}{center}
\includegraphics[
	height=0.35\textheight, 
	width=0.825\textwidth,
	trim={0pt 0pt 0pt 0pt}, clip]
	{./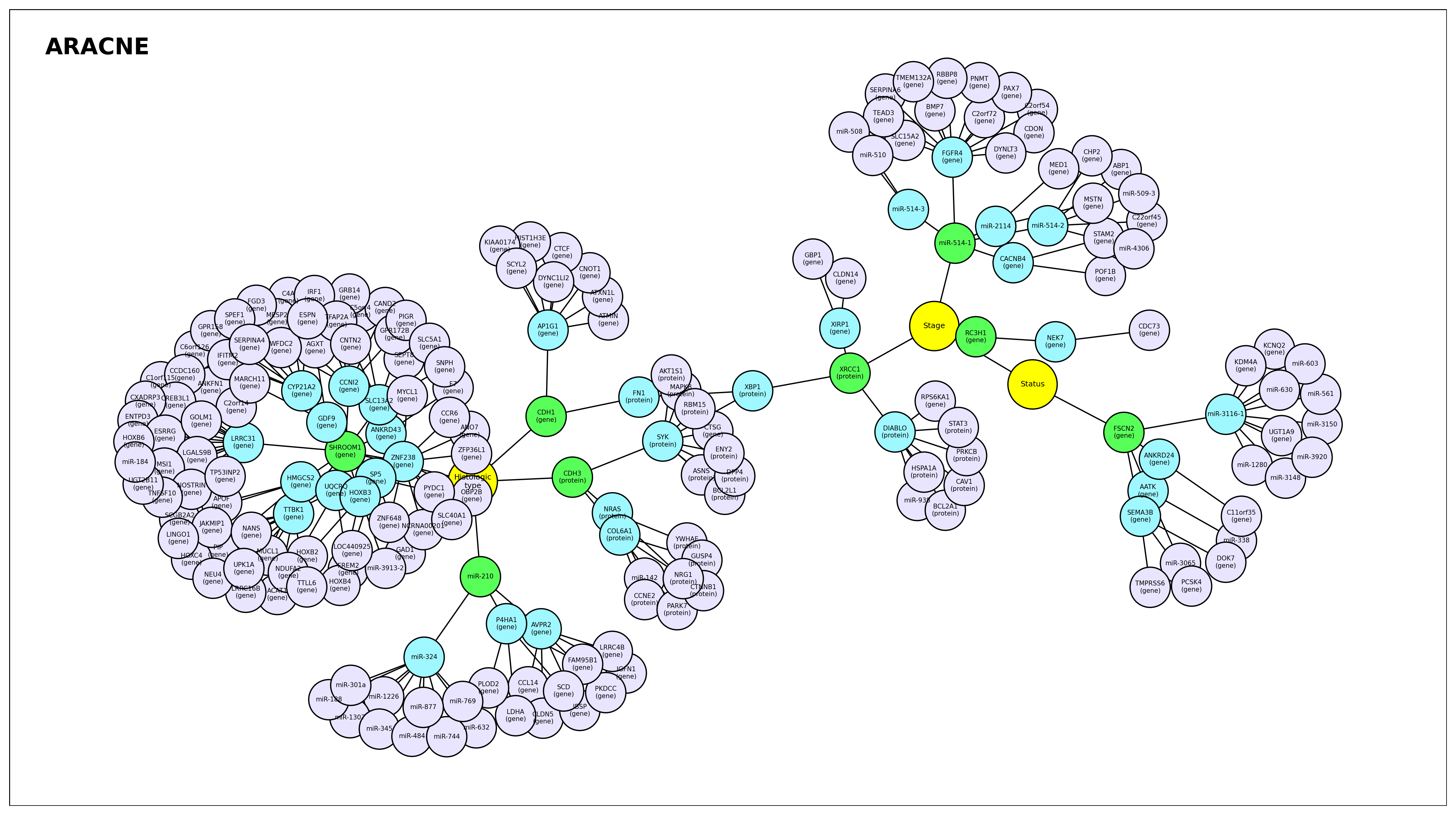}
\end{adjustbox}
\caption{\textit{Breast cancer results using ARACNE}. Estimated local graph of radius 3 using original data (applying the nonparanormal did not change results).}
\label{fig:bc_aracne}
\end{figure*}
}{}

\clearpage

%%%%%%%%%%%%%%%%%%%%%%%%%%%%%%%%%%%%%%%%%%%%%

\subsection{Human connectome and cognition}\label{sup_sec:other_methods_hcp}

%%%%%%%%%%%%%%%%%%%%%%%%%%%%%%%%%%%%%%%%%%%%%

In addition to PFS, we applied several other graph estimation methods to the HCP dataset to estimate local graphs of radius $3$ around fluid cognition. The two regularization-based approaches, graphical lasso and GFCSL, were run with default settings. The stable PC algorithm was run with significance level $\alpha = 0.05$ using mutual information. For local neighborhood estimation, we applied nodewise versions of several constraint-based methods: HPC with $\alpha = 0.05$, IAMB with $\alpha = 0.005$, and MMPC with $\alpha = 0.025$. In each case, neighborhoods were estimated for the fluid cognition variable and expanded to radius $3$. All conditional independence tests in these methods used mutual information.

\ifthenelse{\boolean{showfigures}}{
\begin{figure*}[ht!]
\begin{adjustbox}{center}
\includegraphics[
	height=0.425\textheight, 
	width=1.1\textwidth,
	trim={0pt 0pt 0pt 0pt}, clip]
	{./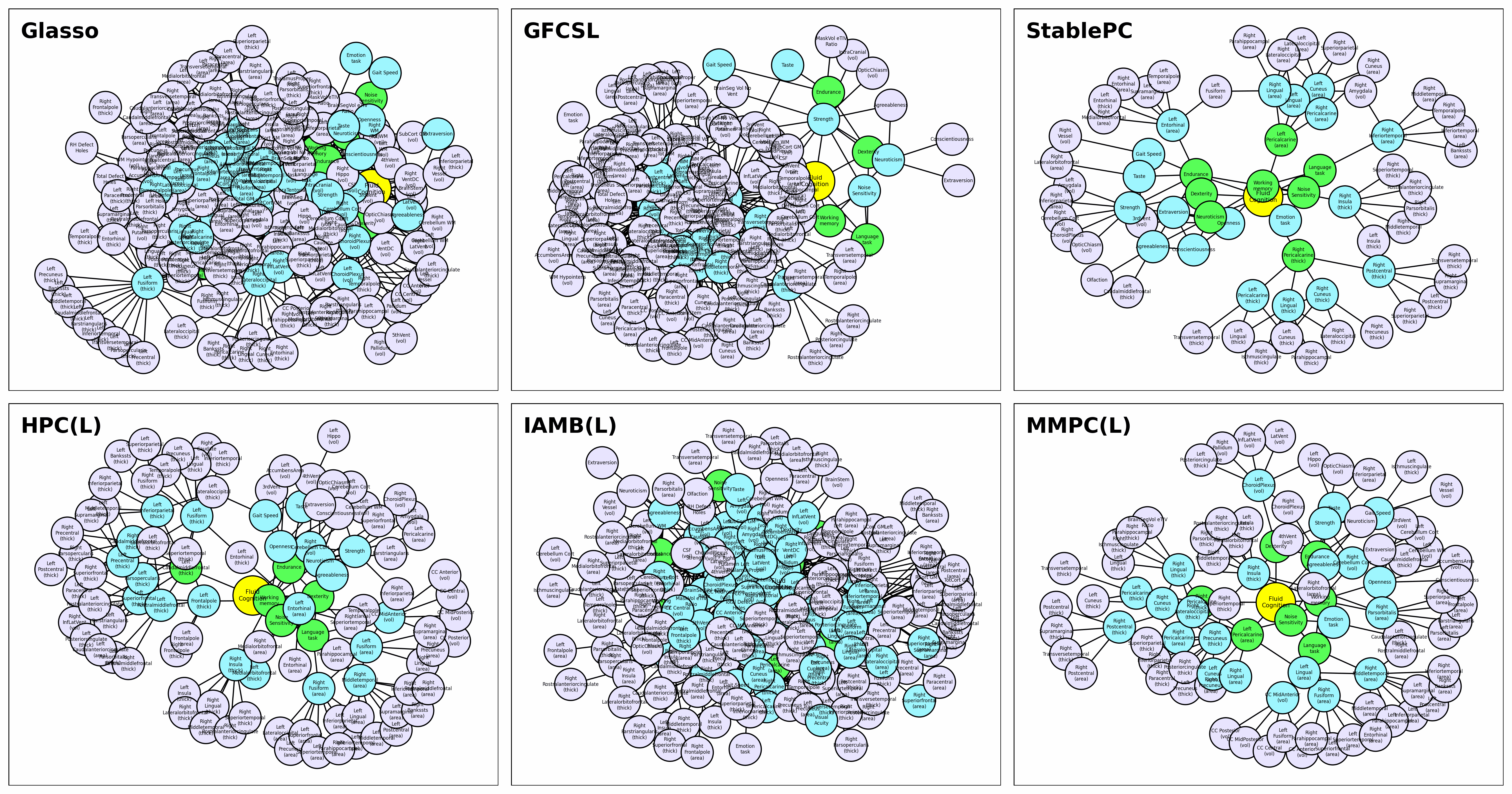}
\end{adjustbox}
\caption{\textit{Human Connectome Project (HCP) results using other methods}. Estimated local graphs of radius 3 around fluid cognition (yellow) using various graph estimation methods. Only methods that identified at least one edge between Fluid Cognition and a FreeSurfer-derived structural brain morphometry feature are shown.}
\label{fig:hcp_other_methods}
\end{figure*}
}{}

\clearpage

%%%%%%%%%%%%%%%%%%%%%%%%%%%%%%%%%%%%%%%%%%%%%

\renewcommand{\refname}{Supplementary references}
\putbib

\end{bibunit}

%%%%%%%%%%%%%%%%%%%%%%%%%%%%%%%%%%%%%%%%%%%%%

\end{document}